 \tikzset{
  prefix after node/.style={prefix after command=\pgfextra{#1}},
  /semifill/ang/.initial=45,
  /semifill/upper/.initial=none,
  /semifill/lower/.initial=none,
  semifill/.style={
    circle, draw,
    prefix after node={
      \pgfqkeys{/semifill}{#1}
      \path let \p1 = (\tikzlastnode.north), \p2 = (\tikzlastnode.center),
                \n1 = {\y1-\y2} in [radius=\n1]
            (\tikzlastnode.\pgfkeysvalueof{/semifill/ang}) 
            edge[
              draw=none,
              fill=\pgfkeysvalueof{/semifill/upper}, fill opacity = .35,
              to path={
                arc[start angle=\pgfkeysvalueof{/semifill/ang}, delta angle=180]
                -- cycle}] ()
            (\tikzlastnode.\pgfkeysvalueof{/semifill/ang}) 
            edge[
              draw=none,
              fill=\pgfkeysvalueof{/semifill/lower}, fill opacity = .35,
              to path={
                arc[start angle=\pgfkeysvalueof{/semifill/ang}, delta angle=-180]
                -- cycle}] ();}}}
	\theoremstyle{plain}
	\newtheorem{theorem}{Theorem}
	\newtheorem*{axiom*}{Axiom}
	\newtheorem{lemma}{Lemma}
	\newtheorem{corollary}{Corollary}
	\theoremstyle{definition}
	\newtheorem{example}{Example}
	\theoremstyle{remark}
	\newtheorem*{claim*}{Claim}
\title{Revealed Invariant Preference\footnote{We would like to thank Federico Echenique, Faruk Gul, Daniel Luo, Efe Ok, Larry Samuelson, Chris Shannon, Leeat Yariv, as well as seminar audiences at NASMES '24, D-TEA '24, UC Berkeley, Caltech, and Princeton for helpful comments. Peter Caradonna thanks the Linde Institute for financial support and Princeton University for their hospitality, where part of this work was completed. We also thank Ben Wincelberg for excellent research assistance. This paper was previously presented under the title `Model-Based Revealed Preference.'}}
\author{Peter Caradonna and Christopher P. Chambers}
\begin{document}
\maketitle

\begin{abstract}
    We consider the problem of rationalizing choice data by a preference satisfying an arbitrary collection of \emph{invariance} axioms.  Examples of such axioms include quasilinearity, homotheticity, independence-type axioms for mixture spaces, constant relative/absolute risk and ambiguity aversion axioms, stationarity for dated rewards or consumption streams, separability, and many others. We provide necessary and sufficient conditions for invariant rationalizability via a novel approach which relies on tools from the theoretical computer science literature on automated theorem proving.   We also establish a generalization of the Dushnik-Miller theorem, which we use to give a complete description of the out-of-sample predictions generated by the data under any such collection of axioms. 
\end{abstract}

\medskip


\section{Introduction}

Nearly all economic models restrict, in some manner, the class of preferences that agents  hold. When these restrictions are at odds with the broad, empirical regularities in how individuals actually evaluate various trade-offs and decisions, this misspecification may introduce errors which bleed into other aspects of the model, leading to unrealistic or even outright incorrect predictions (e.g.\ \citealt{mehra1985equity}).\medskip

This motivates a basic need to be able to obtain, in systematic fashion, the empirical content of the wide variety of additional assumptions on behavior, beyond rationality alone, which are typically imposed in practice. However, in many cases, characterizations of the testable implications of these extra assumptions exist only for restrictive, special types of data sets (e.g.\ price-consumption data), or remain wholly unknown. Where results do exist, they are often obtained through model-specific considerations that leave them unable to be straightforwardly adapted to apply to other, even closely related sets of assumptions.\footnote{E.g. the techniques of \cite{afriat1967construction}, which rely crucially on the the functional forms of particular representations.}\medskip

In this paper we provide a complete characterization the testable implications of a wide variety of common preference and decision axioms, through a novel, general approach.  Our results apply to arbitrary revealed preference data sets, and require no assumptions on the domain of choice. To illustrate our approach, the following example considers the problem of finding a time-stationary rationalizing preference for simple choice data. It highlights how the additional structure imposed by stationarity, above and beyond rationality alone, complicates the problem of evaluating consistency.

\begin{example}\label{stationaritycex}
    Suppose that in each period, an agent is able to consume a fruit of their choice from the set $F= \{\textrm{Apple}, \textrm{Banana}, \textrm{Cherries}, \textrm{Dragonfruit}\}$, and suppose that we observe an individual's choice behavior between various discrete-time, infinite-horizon consumption streams.\medskip
    
    We are interested in whether the individual's choice behavior is consistent with the maximization of a rational preference that is additionally \emph{stationary}, in the sense of \cite{koopmans1960}. A preference over consumption streams is said to be stationary if, for any pair of streams $x$ and $y$:
    \[
        (x_1, x_2, \ldots ) \succeq (y_1, y_2, \ldots) \quad \iff \quad (f, x_1, \ldots ) \succeq (f, y_1, \ldots),
    \]
    where $f$ denotes any fruit in $F$.  In other words, a preference is stationary if, whenever we take two streams, delay them each by one period and insert a common item in each first period, the preference between them does not reverse.\medskip

    Suppose now that, for two fixed streams $x$ and $y$, we observe the following revealed preference data:
    \begin{equation}\label{stationaritycex1}
        \begin{aligned}
            (a,x_1,\ldots) & \succ^R (b, y_1, \ldots)\\
            (b,x_1,\ldots) & \succ^R (a, y_1, \ldots),
        \end{aligned}
    \end{equation}
    and
    \begin{equation}\label{stationaritycex2}
        \begin{aligned}
            (c,y_1,\ldots) & \succ^R (d, x_1, \ldots)\\
            (d,y_1,\ldots) & \succ^R (c, x_1, \ldots).
        \end{aligned}
    \end{equation}
    This data set is wholly consistent with rational behavior. Indeed, the revealed preference itself is a transitive binary relation, which more than suffices to ensure its consistency with the paradigm of rational, optimizing behavior.\footnote{See, e.g., \cite{richter1966revealed} Theorem 1, or \cite{chambers2016revealed} Theorem 2.6 for a textbook treatment.}\medskip

    On the other hand, the data are inconsistent with \emph{any} stationary preference. To see this, suppose $\succeq$ is a preference relation that agrees with the observed comparisons \eqref{stationaritycex1} and \eqref{stationaritycex2}. Any such preference must specify a ranking between $x$ and $y$. If $y \succeq x$, the observations in \eqref{stationaritycex1} would imply, under stationarity, that:
    \[
        (a,x_1, \ldots) \succ \underbrace{(b, y_1, \ldots) \succeq (b, x_1, \ldots)}_{\textrm{By stationarity}} \succ \overbrace{(a, y_1, \ldots) \succeq (a, x_1, \ldots)}^{\textrm{By stationarity}},
    \]
    and hence that $\succeq$ was in fact not transitive. However, an identical argument applied to the observations in \eqref{stationaritycex2} yields that no stationary preference which agrees with the data can rank $x \succeq y$ either. 
    \hfill $\blacksquare$
\end{example}

The subtlety in testing for stationarity in \autoref{stationaritycex} arose from the interdependence it imposed between rankings over related pairs of consumption streams. Perhaps surprisingly, a wide range of decision-theoretic axioms introduce precisely the same type of interdependency.\medskip

Suppose now that $X$ is an abstract consumption space, and let $\mathcal{M}$ denote a collection of transformations, each mapping $X \to X$.  We say that a preference $\succeq$ on $X$ is \emph{invariant} under a transformation $\omega \in \mathcal{M}$ if:
\[
    x \succeq y \quad \iff \quad \omega(x) \succeq \omega(y),
\]
for every $x,y \in X$.  An \emph{invariance axiom} is then simply the requirement that a preference be invariant under every transformation in some collection $\mathcal{M}$.  In \autoref{stationaritycex}, these transformations were the maps 
\[
(x_1,x_2,\ldots) \mapsto (f, x_1, \ldots),
\]
for each fruit $f \in F$. However, many other axioms of first-order economic importance are also of this form. For example, the independence axiom of \cite{von1947theory} is an invariance axiom: there, $X$ is a lottery space and $\mathcal{M}$ consists of all transformations of the form:
\[
    p \mapsto \alpha p + (1-\alpha) q
\]
for some $\alpha \in (0,1]$ and lottery $q$. But, by varying our choice of $X$ and $\mathcal{M}$, we also obtain the standard quasilinearity, homotheticity axioms in consumer theory, constant absolute or relative risk aversion, and many others as special cases.\medskip



More formally, in this paper we provide complete answers to the two following questions: 
\begin{itemize}
    \item[(Q.1)]\hypertarget{q1}{} When is a given revealed preference data set consistent with the maximization of some preference relation that satisfies an arbitrary collection of invariance axioms?
    \item[(Q.2)]\hypertarget{q2}{} What comparisons \emph{not} observed in the data are nonetheless agreed upon by \emph{every} invariant rationalizing preference?\footnote{Conditional upon the set of such preferences being non-empty.} 
\end{itemize}

\noindent \autoref{stationaritycex} highlights that even sets of comparisons that are very sparse can lead to strong (sometimes even impossible-to-fulfill) restrictions on the possible comparisons a consistent, invariant preference can make.\medskip

The source of this complexity is that the addition of a new comparison between unranked pairs of alternatives necessarily also fixes the comparisons between each image of this pair, under every transformation in $\mathcal{M}$. We term these additional implications the \emph{knock-on effects} of the addition.  As illustrated in \autoref{stationaritycex}, these extra implications may form transitivity violations, even when the initial addition itself does not.\medskip

In order to account for the potential infinity of knock-on effects, we are forced to consider \emph{sets} of simultaneous restrictions on the possible comparisons a rationalizing preference can make. In turn, these sets of restrictions can be combined to deduce further constraints which may not emerge directly from the data. \medskip

We introduce a simple, binary operation that we term the `collapse,' that converts a suitable pair of restriction sets into a new one.  This operation may roughly be viewed as a set-valued analogue of the act of deducing $x \succeq z$ from a pair of compatible observations $x \succeq y$ and $y \succeq z$ via transitivity.  We show that, no matter the complexity of the environment or structure of the family of transformations, a simple no-cycle condition, phrased in terms of our collapse operation, fully characterizes rationalizability by an invariant preference.  We also prove that a related generalization of the transitive closure, again in terms of our collapse operation, completely characterizes the set of out-of-sample predictions generated by the data under any given set of invariance axioms.\medskip

Our methodology relies intimately on a connection with formal logic. We first recast the problem of finding a consistent, invariant preference as one of testing the satisfiability of a set of clauses. We establish that a `cycle' in our sense may be used to construct a formal proof of unsatisfiability in the accompanying logical system. To prove the converse, we utilize a result due to \cite{robinson1965machine}, which establishes that if a given system of clauses is unsatisfiable, there exists a proof of this fact with specific combinatorial structure. We then show that any proof of unsatisfiability, of the precise form guaranteed by Robinson's theorem, can always be `inverted' to construct a cycle in our original sense.\medskip

The paper proceeds as follows. In Section 2 we formally state our problem and provide a number of examples of economic axioms covered by our results.  Section 3 considers a special case of our general result---the case in which all the transformations defining our invariance axiom commute. In this special case, we show that our general no-cycle condition reduces to a particularly simple form. In Section 4, we introduce our collapse operation, and provide our general characterization of invariant rationalizability. In Section 5 we consider an extension in which we the transformations in $\mathcal{M}$ are instead \emph{partial} functions, only defined on subsets of $X$.  We show that, in this broader setting, all of the results of Section 4 remain true as stated, and provide a number of new economic applications. Finally, in Section 6 we provide a novel generalization of the Dushnik-Miller theorem (\citealt{dushnik1941}) for invariant preferences, which we use to characterize the set of out-of-sample, or counterfactual, predictions generated by a the data and a set of axioms. Section 7 concludes.


\subsection{Related Literature}

The revealed preference literature is too large to adequately survey here, see \cite{chambers2016revealed} for an overview.\footnote{See also \cite{echenique2020new} for a summary of some recent work in this space.}  Classically, \cite{richter1966revealed} was the first to characterize rationalizability for the abstract choice model. We obtain Richter's original theorem as a special case of our main results (see Section 3).  A similarly classic reference in this vein is \cite{duggan} who retains an abstract framework but imposes additional restrictions on the interpretation of `rationality.'\medskip

Other authors have studied the problem of rationalizing choice data via preferences with various general structures. \cite{nishimura2017comprehensive} study this problem for continuous and monotone preferences on various spaces. \citet{demuynck2009} investigates a general class of `closure operators' on spaces of binary relations that generalize the transitive closure, and obtains a general extension result for algebraic structures satisfying certain properties.\footnote{See \citet{ward} for a general theory of closures.}  While general, applying these tools requires non-trivial effort to establish their conditions are satisfied.  In contrast, our results focuses on a smaller mathematical class of algebraic properties, invariance and monotonicity axioms, but are able to derive results that are immediately applicable.\medskip

Other authors have considered invariant preferences in various contexts.  \cite{ok2014topological,ok2021fully} consider various extension results for invariant preorders on groups. In contrast, we consider both a more general class of primitive relations and more general notion of invariance.\footnote{Mathematically, our notion of invariance corresponds to invariance of a preference under an arbitrary semi-group action on the consumption space. For definitions, see \cite{fuchs2011partially}.} Recently \citet{freer2022}, building off the tools of \cite{demuynck2009}, consider the problem of invariant rationalization by incomplete or non-transitive binary relation.\footnote{They also establish an invariant rationalizability result in the special case the collection of transformations, under composition, forms a linearly ordered group.} \cite{dubra2004} show that every `incomplete' expected utility (EU) preference may be completed in such a way as to preserve the EU axioms.\medskip

\cite{dushnik1941} show that every partial order is equal to the intersection of its linear order extensions.  Several authors in economics have taken interest in such unanimity, or Pareto, representation of incomplete preferences.  Abstract approaches include \cite{donaldson1998,bossert1999,weymark2000generalization} and \cite{alcantud2009}. In concrete economic environments, similar representations can be found in, for example, the theory of expected utility preferences (\citealt{dubra2004,gorno2017strict}), Krepsian style  preferences over menus (\citealt{nehring1999multi}), or rankings of accomplishments (\citealt{chambersmiller2018}).\medskip

We are not the first paper to exploit the connection between revealed preference and formal logic. \cite{chambers2014axiomatic} study the general form of empirical content for theories in first-order logic, relating the syntax of first-order theories to the empirical content via a type of sentence they call ``UNCAF'' (universal negation of conjunctions of atomic formulae).   \citet{chambers2017general} establishes that theoretical relations in theories axiomatizable by universal sentences can be eliminated, resulting in a theory which is itself universally axiomatizable. Such an axiomatization results from enumerating all logical consequences of the original theory without theoretical relations.  Thus, these two papers give a r.e. method which could in principle enumerate datasets which are inconsistent with a given theory.\footnote{See also \citet{chambers2016revealed}, Chapter 13.}  In comparison, our results rely only on the simpler framework of propositional logic, and provide a more practical method for understanding inconsistent data. \cite{gonczarowski2019infinity} show that similar connections with propositional logic obtain in a variety of economic contexts. \cite{galambos2019descriptive,yildiz2020regularities} investigate the relation between computational complexity of revealed preference theories and their logical syntax.  \medskip

\cite{robinson1965machine} showed that a certain algorithmic operation on logical clauses called resolution was sound and refutation-complete. This reduced the problem of proving a set of clauses to be inconsistent without constructing a truth table to a discrete search problem. A number of extensions and refinements giving various `normal forms' for proofs were established in the early artificial intelligence literature to attempt to further reduce the complexity of this search space (see, e.g., \citealt{schoning2008logic} for an overview).\medskip

Finally, our work presupposes no notion of topology, but many works in economics consider topological aspects of the extension problem.  \cite{aumann1962utility,aumann1964utility,peleg1970,levin} are classical references, but the theory has developed much since then (e.g., \citealt{ok2002, nishimura2017comprehensive}).

\section{The Model}

Let $X$ denote set of alternatives. A {\bf preference relation} $\succeq$ is a complete and transitive binary relation on $X$. Given a preference relation, we use $\succ$ and $\sim$ to denote its asymmetric and symmetric components, respectively.\medskip

Let $\mathcal{M}$ denote a set of transformations, each mapping $X \to X$. We say that a preference relation is $\mathcal{M}$-{\bf invariant} if, for all $x,y \in X$ and all $\omega \in \mathcal{M}$:
\begin{equation}\label{invdef}
    x \succeq y \quad \implies \quad \omega(x) \succeq \omega(y),
\end{equation}
and
\begin{equation}\tag{1}
    x \succ y \quad \implies \quad \omega(x) \succ \omega(y).
\end{equation}
Note that if $\omega, \omega' \in \mathcal{M}$, then any $\mathcal{M}$-invariant preference also satisfies:
\[
    x \succeq y \; \iff \; (\omega \circ \omega')(x) \succeq (\omega \circ \omega')(y) \; \iff \; (\omega' \circ \omega)(x) \succeq (\omega' \circ \omega)(y),
\]
and analogously for strict preferences. As such, without loss of generality we will assume (i) the identity function $\textrm{id} \in \mathcal{M}$, and (ii) $\mathcal{M}$ is closed under composition.\footnote{Formally, we assume, without loss of generality, that $(\mathcal{M}, \circ)$ forms a semigroup with identity, or a \emph{monoid}.} \medskip

We consider a pair of observed {\bf revealed preference} relations as data, denoted $\langle \succsim^R, \succ^R \rangle$, where $\succ^R$ is a sub-relation of $\succsim^R$.\footnote{We do not, however, assume that $\succ^R$ is necessarily the asymmetric component of $\succsim^R$.}  We call this tuple an \emph{order pair}.  These relations could arise through observed choice behavior, e.g.\ by defining:
\begin{itemize}
    \item $x \succsim^R y$ if $x$ and $y$ belong to a common choice set, from which it was observed $x$ was chosen.
    \item $x \succ^R y$ if $x \succsim^R y$ and, in addition, $y$ was not chosen.
\end{itemize}
However, we explicitly allow for them also encoding other salient properties such as \emph{monotonicity restrictions}, by setting $x \succ^R y$ if $x$ dominates $y$ in a particular partial order of interest. We will assume, without loss of generality, that $\succsim^R$ is reflexive.\medskip

We interpret $\langle \succsim^R, \succ^R\rangle$ as {\bf data}, and seek to understand when it is consistent with the behavior of an economic actor who chooses to maximize some $\mathcal{M}$-invariant preference relation $\succeq$.  Formally, an $\mathcal{M}$-invariant preference relation $\succeq$ {\bf rationalizes} the data $\langle \succsim^R, \succ^R\rangle$ if both (i) $\succsim^R \; \subseteq \; \succeq$, and (ii) $\succ^R \; \subseteq \; \succ$. The primary result of our paper will be to provide a complete characterization of those data sets $\langle \succsim^R ,\succ^R\rangle$ that are rationalizable by an $\mathcal{M}$-invariant preference, for any choice of $X$ and $\mathcal{M}$.  Equivalently, we characterize which data sets may \emph{not} be rationalized by any $\mathcal{M}$-invariant preference (for fixed choice of $\mathcal{M}$), which therefore provides a complete description of the empirical content of such models.\medskip

The transitive closure of our pair $\langle \succsim^R , \succ^R\rangle$ is the pair of relations $\langle \succsim^R_\intercal, \succ^R_\intercal\rangle$, where $x \succsim_\intercal^R y$ if and only if there exists some finite sequence $x_0, \ldots, x_N \in X$ such that:
\[
    x = x_0 \succsim^R x_1 \succsim^R \cdots \succsim^R x_N = y.
\]
Similarly, $x \succ^R_\intercal y$ if $x \succsim^R_\intercal y$ and some relation in the sequence belongs to $\succ^R$. A pair of relations $\langle \succsim^R ,\succ^R\rangle $ is said to be {\bf acyclic} if there do not exist $x_0, \ldots, x_N \in X$ such that:
\[
    x_0 \succsim^R x_1 \succsim^R \cdots \succsim^R x_N \succ^R x_0.
\]
We refer to such a sequence as a {\bf cycle}.  

\subsection{Examples of Invariant Preferences}

In this section we provide a number of examples of well-known invariance axioms. All of these correspond to various special cases of our notion of $\mathcal{M}$-invariance, for particular choices of $X$ and $\mathcal{M}$.   

\subsubsection{Quasilinearity}

Let $X = \mathbb{R}_+ \times Z$. A preference is said to be \emph{quasilinear} if:
\[
    (t, z) \succeq (t',z') \quad \iff \quad (t+ \alpha, z) \succeq (t' + \alpha , z')
\]
for all $(t,z), (t',z') \in X$ and $\alpha \ge 0$. When $(t,z)$ is interpreted as the `dated reward,' corresponding to the delivery of a prize $z$ to an agent at $t$ units of time into the future, quasilinearity is also referred to as \emph{stationarity} (see \citealt{fishburn1982}).  See also the notion of `$\phi$-additivity' in \cite{caradonna2023preference}.

\subsubsection{(Generalized) Homotheticity}

Let $X$ be a cone in a real vector space. A preference is \emph{homothetic} if:
\[
    x \succeq y \quad \iff \quad t x \succeq t y
\]
for all $x,y\in X$ and all scalars $t > 0$. Similarly, Cobb-Douglas preferences are the unique, continuous and monotone preferences on $\mathbb{R}^N_+$ satisfying the related but more general form of invariance:
\[
    (x_1,\ldots, x_N) \succeq (y_1,\ldots, y_N) \quad \iff \quad (t_1 x_1, \ldots, t_N x_N) \succeq (t_1 y_1, \ldots, t_N y_N),
\]
for all $x,y \in \mathbb{R}^N_+$ and all $(t_1,\ldots, t_N) \in \mathbb{R}^N_{++}$ (see \citealt{trockel1989classification}).

\subsubsection{Mixture Invariance}

Suppose that $X = \Delta(Z)$, the set of all Borel probability measures on some metrizable space $Z$. A preference satisfies the \emph{independence} axiom of von Neumman and Morgenstern (\citealt{von1947theory}) if:
\[
    \mu \succeq \nu \quad \iff \quad \alpha \mu + (1-\alpha) \eta \succeq \alpha \nu + (1-\alpha) \eta
\]
for all $\alpha \in (0,1]$ and $\eta \in X$.\footnote{More generally, this form of invariance may be defined for any mixture space. See, e.g., \cite{herstein1953axiomatic}.} If instead $X$ denotes the Anscombe-Aumann domain $\mathcal{F}$ of simple, measurable maps from some measurable space $S$ into $\Delta(Z)$, the \emph{independence} axiom takes the form:
\[
    f \succeq g \quad \iff \quad \alpha f + (1-\alpha)h \succeq \alpha g + (1-\alpha )h
\]
for $\alpha \in (0,1]$ and some act $h \in X$. Similarly, common weakenings of independence such as \emph{certainty independence} (\citealt{gilboa1989maxmin}), \emph{weak certainty independence} (\citealt{maccheroni2006ambiguity}), \emph{worst independence} (\citealt{chateauneuf2009ambiguity}), \emph{risk independence} (\citealt{cerreia2011uncertainty}) and so forth are all of this form.

\subsubsection{Stationarity}

Let $X = Z^\mathbb{N}$ denote the set of all infinite horizon consumption streams taking values in some set of prizes $Z$.  A preference on $X$ is said to be \emph{stationary} in the sense of \cite{koopmans1960} if:
\[
    (x_1, x_2, \ldots ) \succeq (x_1', x_2', \ldots ) \quad \iff \quad (z, x_1, x_2, \ldots) \succeq (z, x_1', x_2', \ldots )
\]
for all $z \in Z$. See also \cite{epstein1983stationary}.

\subsubsection{Convolution Invariance}

Suppose $X$ consists of all lotteries on $\mathbb{R}$ with bounded support.  \cite{mu2021monotone} consider continuous weak orders on $X$ that are monotone with respect to first-order stochastic dominance, and invariant under convolutions:
\[
    \mu \succeq \nu \quad \iff \quad \mu \ast \eta \succeq \nu \ast \eta,
\]
for all $\eta \in X$.\footnote{The term `additive' in the paper's title refers to this property when the preference is equivalently regarded as being defined over (bounded) random variables.} A preference on $X$  is said to exhibit \emph{constant absolute risk aversion} (e.g., \citealt{safra1998constant}) if:
\[
    \mu \succeq \nu \quad \iff \quad \mu \ast \delta_\alpha \succeq \nu \ast \delta_\alpha
\]
for all $\alpha \in \mathbb{R}$, where $\delta_\alpha$ denotes the Dirac measure centered at $\alpha$.\footnote{Similarly, \emph{constant relative risk aversion} is also a special case of $\mathcal{M}$-invariance, where $\mathcal{M}$ consists of the transformations which multiplicatively scale the support of a lottery.}

\subsubsection{Product \& Dilution Invariance}

Let $X$ consist of all finite Blackwell experiments on some fixed, finite set of states of the world $\Theta$. Thus an element of $X$ is a tuple $\big(S, \{\mu_\theta\}_{\theta \in \Theta}\big)$, where $S$ is a finite set of signals, and each $\mu_\theta$ is a probability measure on $S$.  \cite{pomatto2023cost} consider `costliness' orderings over $X$ that are invariant under two varieties of transformations: the formation of \emph{products}, and of so-called \emph{dilutions}. In this context, products of Blackwell experiments formalize the idea of running two simultaneous and independent experiments. The $\alpha$-dilution of an experiment, denoted $\alpha \cdot \big(S, \{\mu_\theta\}_{\theta \in \Theta}\big)$, is the experiment $\big(S \cup \{o\}, \{\mu'_\theta\}_{\theta \in \Theta}\big)$, where $o$ is a completely uninformative signal, and (i) $\mu'_\theta(A) = \alpha \mu_\theta(A)$ for all $A \subseteq S$, and $\mu_\theta'(\{o\}) = 1-\alpha$. This corresponds to invariance under:
\[
    \big(S, \{\mu_\theta\}_{\theta \in \Theta}\big) \succeq \big(S', \{\nu_\theta\}_{\theta \in \Theta}\big)  \iff  \big(S\times T, \{\mu_\theta \otimes \eta_\theta\}_{\theta \in \Theta}\big) \succeq \big(S'\times T, \{\nu_\theta \otimes \eta_\theta\}_{\theta \in \Theta}\big)
\]
for all $(T, \{\eta_\theta\}_{\theta \in \Theta}) \in X$, and:
\[
    \big(S, \{\mu_\theta\}_{\theta \in \Theta}\big) \succeq \big(S', \{\nu_\theta\}_{\theta \in \Theta}\big)  \iff \alpha \cdot \big(S, \{\mu_\theta\}_{\theta \in \Theta}\big) \succeq \alpha \cdot \big(S', \{\nu_\theta\}_{\theta \in \Theta}\big) 
\]
for all $\alpha \in (0,1]$.

\subsection{Knock-On Effects}

Consider an analyst who observes no revealed preference between two alternatives, $x$ and $y$.  As part of the process of attempting to construct an $\mathcal{M}$-invariant rationalizing preference $\succeq$ for the data, the analyst wishes to ascribe some relation between $x$ and $y$.\medskip

However, if the analyst wishes to add $x \succeq y$, since $\succeq$ is required to be $\mathcal{M}$-invariant, they are also compelled to add, in addition, every relation of the form $\omega(x) \succeq \omega(y)$, for each $\omega \in \mathcal{M}$.  We term these additional relations the {\bf knock-on effects} of adding $x \succeq y$. Even when adding $x \succeq y$ to the data leads to no inconsistency, these knock-on effects may themselves lead to the creation of cycles.

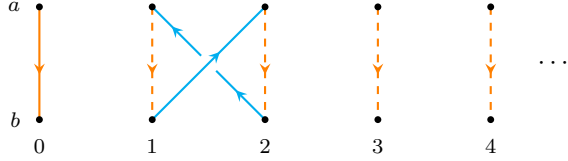
\begin{figure}[t]
    \centering
    \begin{tikzpicture}

\coordinate (A) at (0,0);
\coordinate (B) at (1.5,0);
\coordinate (C) at (3,0);
\coordinate (D) at (4.5,0);
\coordinate (E) at (6,0);
\coordinate (F) at (0,1.5);
\coordinate (G) at (1.5,1.5);
\coordinate (H) at (3,1.5);
\coordinate (I) at (4.5,1.5);
\coordinate (J) at (6,1.5);

{\footnotesize
\node[left] at (-.15,0) {$b$};
\node[right] at (6.15,0) {};
\node[left] at (-.15, 1.5) {$a$};
\node[below] at (0,-.15) {$0$};
\node[below] at (1.5,-.15) {$1$};
\node[below] at (3,-.15) {$2$};
\node[below] at (4.5,-.15) {$3$};
\node[below] at (6,-.15) {$4$};}
\node[right] at (6.5, .75) {$\mathbf{\cdots}$};

\begin{scope}[decoration={markings,mark = at position 0.6 with {\arrow{stealth}}}]
		\draw[line width =.8, cyan, postaction={decorate}] (C) -- (2.35, .65);
		\draw[line width =.8, cyan, postaction={decorate}] (2.15, .85) -- (G);
		\draw[line width =.8, cyan, postaction={decorate}] (B) -- (H) ;
		\draw[line width =.8, orange, postaction={decorate}] (F) -- (A) ;
		\draw[line width =.8, dashed, orange, postaction={decorate}] (G) -- (B) ;
		\draw[line width =.8, dashed, orange, postaction={decorate}] (H) -- (C) ;
		\draw[line width =.8, dashed, orange, postaction={decorate}] (I) -- (D) ;
		\draw[line width =.8, dashed, orange, postaction={decorate}] (J) -- (E) ;
 	 \end{scope}

\fill (A) circle [black!90, radius=1.3pt];
\fill (B) circle [black!90, radius=1.3pt];
\fill (C) circle [black!90, radius=1.3pt];
\fill (D) circle [black!90, radius=1.3pt];
\fill (E) circle [black!90, radius=1.3pt];
\fill (F) circle [black!90, radius=1.3pt];
\fill (G) circle [black!90, radius=1.3pt];
\fill (H) circle [black!90, radius=1.3pt];
\fill (I) circle [black!90, radius=1.3pt];
\fill (J) circle [black!90, radius=1.3pt];

\end{tikzpicture}
\caption{An illustration of \autoref{knockonexample}. The revealed preference is denoted in blue. A hypothetical comparison between $(b,0)$ and $(a,0)$ is shown (orange solid), and the resulting knock-on effects under $\mathcal{M}$ that would arise from adding this comparison (orange dashed). Note that while the comparison between $(b,0)$ and $(a,0)$ itself creates no cycles, its knock-on effects do.}
\end{figure}

\begin{example}\label{knockonexample}
    Consider a domain of dated rewards featuring two prizes, an apple $a$ and a banana $b$, that can be delivered to a consumer at any number of days in the future, i.e. $X = \{a,b\} \times \{0, 1,\ldots \}$.  We observe that a subject prefers:
    \[
        (a,1) \succ^R (b,2) \quad \textrm{ and } \quad (a,2) \succ^R (b,1).
    \]
    Let $\mathcal{M}$ denote all transformations of the form $(z,t) \mapsto (z,t+n)$ for each $n \ge 0$.  A preference is $\mathcal{M}$-invariant if and only if it is is \emph{stationary}, in the sense of \cite{fishburn1982}.\medskip

    Suppose we wish to extend this data set to incorporate a preference between the two prizes today, i.e. $(a,0)$ and $(b,0)$. Since the data itself is transitive, but makes no comparison between these alternatives, there exist rationalizing preference relations ranking both $(a,0) \succ (b,0)$ and $(b,0) \succ (a,0)$.\footnote{See, e.g., \citealt{dushnik1941}.}  However, every stationary (i.e. $\mathcal{M}$-invariant) rationalization must rank $(a,0) \succ (b,0)$.  To see this, note that if any stationary rationalization ranked $(b,0) \succeq (a,0)$, as knock-on effects it would necessarily also rank $(b,1) \succeq (a,1)$ and $(b,2) \succeq (a,2)$. Thus, in this extension, we would have:
    \[
        (a,1) \succ (b,2) \succeq (a,2) \succ (b,1) \succeq (a,1),
    \]
    implying it could not be a preference relation.  Thus every stationary rationalization must rank $(a,0) \succ (b,0)$. \hfill $\blacksquare$
\end{example}

\section{Characterizing Rationalizability: The Case of Commuting Transforms}\label{commutativesect}

In this section, we consider the special case in which each pair of transformations in $\mathcal{M}$ commute, i.e.:
\[
    (\omega \circ \omega')(x) = (\omega' \circ \omega)(x)
\]
for all $x \in X$ and $\omega,\omega' \in \mathcal{M}$.  Every example in Section 2.1.1, 2.1.2, and 2.1.5 is of this form, as are often families of transformations which depend only on a single real scalar, such as mixing under various weights with a fixed act or lottery (e.g.\ worst-independence, Section 2.1.3). In such cases, we refer to $\mathcal{M}$ as a {\bf commutative family}.\medskip

Let $R \subseteq X \times X$ be an arbitrary binary relation.  We define the $\mathcal{M}$-{\bf closure} of $R$, denoted $R_\mathcal{M}$ via:
\[
     \omega(x) \; R_{\mathcal{M}} \; \omega(y) \quad \iff \quad x \; R \; y,
\]
for some $\omega \in \mathcal{M}$. Since we have assumed that the identity function $\textrm{id} \in \mathcal{M}$, by setting $\omega = \textrm{id}$ we obtain that $R \subseteq R_\mathcal{M}$.\medskip

Consider now the data $\langle \succsim^R ,\succ^R\rangle$.  Our first main result says that, when $\mathcal{M}$ is a commutative family, the data $\langle \succsim^R, \succ^R\rangle$ are rationalizable by an $\mathcal{M}$-invariant preference relation if and only if its $\mathcal{M}$-closure is acyclic.

\begin{theorem}\label{commutativechar}
    Let $X$ be a set, and $\mathcal{M}$ an arbitrary family of commuting transformations. Then $\langle \succsim^R, \succ^R\rangle$ is rationalizable by an $\mathcal{M}$-invariant preference relation if and only if $\big \langle \succsim^R_\mathcal{M}, \succ^R_\mathcal{M}\big \rangle$ is acyclic.
\end{theorem}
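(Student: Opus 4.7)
I would prove the two directions separately. For necessity, if $\succeq$ is an $\mathcal{M}$-invariant preference rationalizing the data, then $\succsim^R_\mathcal{M} \subseteq \succeq$ and $\succ^R_\mathcal{M} \subseteq \succ$ follow by applying the invariance conditions pointwise to every observation; any cycle in $\langle \succsim^R_\mathcal{M}, \succ^R_\mathcal{M}\rangle$ would lift to a strict cycle in $\langle \succeq, \succ\rangle$, violating transitivity.

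For sufficiency, the plan is to invoke Zorn's lemma on the class of $\mathcal{M}$-invariant, acyclic pairs $\langle R, R^s\rangle$ extending $\langle (\succsim^R_\mathcal{M})_\intercal, (\succ^R_\mathcal{M})_\intercal\rangle$, ordered componentwise. Unions of chains preserve the defining properties (any hypothetical cycle is finite and thus lies in some chain member), so a maximal extension $\langle R, R^s\rangle$ exists, and it suffices to show that $R$ is complete. Suppose toward contradiction that $(x,y)$ is unranked in $R$; then the two candidate enlargements obtained by adding $(x,y)$ or $(y,x)$ together with their full sets of knock-on effects and closing under transitivity must each introduce a cycle, on pain of violating maximality. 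Contracting consecutive old $R$-links, the ``$+$'' cycle produces transformations $\omega_1, \ldots, \omega_k \in \mathcal{M}$ satisfying $\omega_j(y) \; R \; \omega_{j+1}(x)$ for each $j$ modulo $k$, with at least one link lying in $R^s$; symmetrically, the ``$-$'' cycle produces $\omega_1', \ldots, \omega_l' \in \mathcal{M}$ satisfying $\omega_j'(x) \; R \; \omega_{j+1}'(y)$ modulo $l$, again with at least one strict link.

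The decisive step---and the only place commutativity is invoked---is to weave these two families into a single cycle lying entirely in $R$. For each $n = 1, \ldots, \operatorname{lcm}(k,l)$, apply $\omega_n'$ to the $n$th ``$+$'' relation and $\omega_{n+1}$ to the $n$th ``$-$'' relation; $\mathcal{M}$-invariance yields $\omega_n \omega_n'(y) \; R \; \omega_{n+1} \omega_n'(x)$ and $\omega_{n+1} \omega_n'(x) \; R \; \omega_{n+1} \omega_{n+1}'(y)$, where commutativity is essential to identify the shared midpoint $\omega_{n+1} \omega_n'(x) = \omega_n' \omega_{n+1}(x)$ and hence chain the two steps. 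Concatenating over all $n$ produces a chain from $\omega_1 \omega_1'(y)$ that, by periodicity, closes into a cycle after $\operatorname{lcm}(k,l)$ zig-zags. Since the sequence of ``$+$'' indices exhausts $\mathbb{Z}/k$, the scaled image of the strict ``$+$'' link appears in the cycle, so the cycle lies in $R$ and contains an $R^s$-link---contradicting the acyclicity of $\langle R, R^s\rangle$. The principal technical obstacle is setting the Zorn argument up so that the maximal extension's strict part is itself $\mathcal{M}$-invariant (not merely the underlying weak relation); the zig-zag is the algebraic heart of the proof, and the argument fails without commutativity precisely because the intermediate endpoints of the weave cannot otherwise be aligned.
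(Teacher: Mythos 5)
Your Zorn-lemma architecture matches the paper's, and your zig-zag is a genuinely different route through the key extension step. Where the paper (in its Lemma 3) encodes $\mathcal{M}$ as finitely-supported integer vectors, considers three candidate enlargements ($w\succ z$, $z\succ w$, $w\sim z$), and stitches their cycles together via a delicate partition of ``transition types'' $A_{wz},A_{zw},A_{ww},A_{zz}$ and the fact that the associated increment vectors $\mathbf{p}^i,\mathbf{q}^j,\mathbf{r}^k$ each sum to $\mathbf{0}$, your construction handles only two weak enlargements and closes the combined chain by pure $\operatorname{lcm}$-periodicity. Because you add the new comparisons \emph{weakly} (leaving $R^s$ fixed), the strict step of any resulting cycle is forced to lie in the original $R^s$, which is why two cases suffice where the paper needs three; that is a real simplification if the surrounding bookkeeping is done carefully (you should, at minimum, justify the clean old/new alternation of the contracted cycle---reflexive padding handles consecutive new links, and you need to verify the $R^s$ step survives contraction of $R$-blocks).

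The genuine gap is the issue you name but do not resolve: \emph{strong} $\mathcal{M}$-invariance. The paper's definition of an $\mathcal{M}$-invariant preference requires $x \succ y \implies \omega(x)\succ\omega(y)$. For a \emph{complete} relation this follows from the weak condition together with strict-preservation, but for a preorder it does not; and your Zorn argument, as set up, maintains only push-forward invariance of $R$ and $R^s$. At the maximal pair, even if $R$ is complete, the asymmetric part $\succ^*$ of $R$ may strictly exceed $R^s$, and nothing in your construction ensures $\succ^*$ is preserved under $\omega$: you could have $(a,b)\in R$, $(b,a)\notin R$, yet $(\omega(b),\omega(a))\in R$, which violates $\mathcal{M}$-invariance of the resulting preference. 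The paper closes this gap with a separate substantive result (its Lemma \ref{lem:strongcoherent2}): every $\mathcal{M}$-invariant preorder extends to a \emph{strongly} $\mathcal{M}$-invariant preorder when $\mathcal{M}$ is commutative, and the Zorn argument is then run inside the class of strongly invariant preorders. This is a second, independent place where commutativity does real work, distinct from the zig-zag; your proof needs an analogue of this lemma, or an alternative argument showing the enlargement step preserves the two-sided invariance, before the ``set up the Zorn argument so that the strict part is $\mathcal{M}$-invariant'' obstacle can be considered resolved.
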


Note that when $\mathcal{M} = \{\textrm{id}\}$, $\mathcal{M}$ is clearly a commutative family, and that \emph{every} preference is trivially $\mathcal{M}$-invariant.  Thus \autoref{commutativechar} strictly subsumes the classical characterization of \cite{richter1966revealed}. In \autoref{garpapp}, we explore connections between \autoref{commutativechar} and various well-known modifications of the generalized axiom of revealed preference in the special case of price-consumption data.\medskip

\subsection{Application: Probabilistic Sophistication}

Let $S$ denote a finite set of states of the world, and $X = 2^S$ the power set of $S$.  Elements of $X$ correspond to \emph{events}.  Consider a complete and transitive order $\succeq$ on $X$, which we interpret as an agent's subjective assessment of the relative likelihood of events (i.e.\ $A \succeq B$ denotes that the agent subjectively believes that $A$ is more likely than $B$). \medskip

Such an ordering is said to be a {\bf qualitative probability} if, for all events $A,B, C \in X$ with $C$ disjoint from $A\cup B$,
\[
    A \succeq B \quad \iff \quad A \cup C\;  \succeq \; B \cup C,
\]
and in addition, $A \subseteq B$ implies $B \succeq A$, and $S \succ \varnothing$.
We refer to a qualitative probability as {\bf probabilistically sophisticated} if it can be represented by a probability measure.\medskip

\cite{kraft1959intuitive} exhibit a qualitative probability over a five element state space which is not probabilistically sophisticated, disproving a conjecture of \cite{definetti51}. Using results on linear inequalities, they obtain an infinite system of `cancellation' conditions, which jointly characterize probabilistic sophistication. Despite the fact these conditions are not invariance axioms, by regarding $X$ as a subset of a richer domain, we may nonetheless use \autoref{commutativechar} to provide a simple test (cf.\ \citealt{epstein2000probabilities}).\medskip 

Let $\mathbb{Z}^S$ denote the set of all integer-valued functions on $S$, and $\mathcal{M}$ the commutative family consisting of the transformations $f \mapsto f + g$, for $g \in \mathbb{Z}^S$. By identifying elements of $X$ with their indicator functions, we may regard $X$ as a subset of $\mathbb{Z}^S$, and hence any order $\succeq$ on $X$ as an (incomplete) order $\succeq^*$ on $\mathbb{Z}^S$.  Any probability measure $\mu$ on $S$ defines a (i) complete, (ii) transitive, (iii) increasing, and (iv) $\mathcal{M}$-invariant order $\succeq$ on $\mathbb{Z}^S$ via:
\[
    f \succeq g \iff \int f \, d\mu \ge \int g \, d\mu.
\]
Conversely, however, not every order on $\mathbb{Z}^S$ satisfying (i) - (iv) can be represented by such a functional.\footnote{Such orders may fail to be \emph{Archimedean} in the sense of \cite{krantz1971foundations}, p.\ 73; for example, the lexicographic order on $\mathbb{Z}^2$ satisfies (i) - (iv), but has no representation of this form.} Nonetheless, every $\succeq^*$ whose $\mathcal{M}$-closure is acyclic can be extended to a preference on $\mathbb{Z}^S$ admitting such a representation.\footnote{This follows from Theorem 1.4 of \cite{scott1964measurement}.  Formally, Scott shows that a necessary and sufficient condition for $\succeq$ to be probabilistically sophisticated is for $\succeq^*$ to be able to be extended into a so-called ``strictly monotonic" order (\citealt{scott1964measurement}, p.\ 237). It is straightforward to show any $\mathcal{M}$-invariant preference on $\mathbb{Z}^S$ is strictly monotonic in Scott's sense.}

\begin{corollary}\label{probsophcorr}
    A qualitative probability $\succeq$ on $2^S$ is probabilistically sophisticated if and only if the $\mathcal{M}$-closure of $\succeq^*$ is acyclic.
\end{corollary}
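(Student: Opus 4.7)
The plan is to use \autoref{commutativechar} together with Scott's characterization of probabilistic sophistication, following the blueprint suggested by the footnote. Since every pair of translations $f \mapsto f + g$ on $\mathbb{Z}^S$ commute, $\mathcal{M}$ is a commutative family, so \autoref{commutativechar} applies directly and reduces the question to whether $\succeq^*$ can be extended to a complete, transitive, $\mathcal{M}$-invariant preference on $\mathbb{Z}^S$.

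For the ``only if'' direction, assume $\succeq$ is represented by a probability measure $\mu$. Define $\succeq'$ on $\mathbb{Z}^S$ by $f \succeq' g$ iff $\sum_{s \in S} f(s)\mu(s) \geq \sum_{s \in S} g(s)\mu(s)$. This relation is complete and transitive, and is $\mathcal{M}$-invariant because translating both sides by the same $h \in \mathbb{Z}^S$ adds the common constant $\sum_{s} h(s)\mu(s)$ to both integrals; the strict case is analogous. Since $\mu$ represents $\succeq$ on events, $\succeq'$ extends $\succeq^*$. Thus $\succeq^*$ admits an $\mathcal{M}$-invariant rationalization, and \autoref{commutativechar} forces its $\mathcal{M}$-closure to be acyclic.

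For the ``if'' direction, assume the $\mathcal{M}$-closure of $\succeq^*$ is acyclic. Then \autoref{commutativechar} delivers an $\mathcal{M}$-invariant preference relation $\succeq'$ on $\mathbb{Z}^S$ extending $\succeq^*$. By the footnote, every $\mathcal{M}$-invariant preference on $\mathbb{Z}^S$ is strictly monotonic in the sense of \cite{scott1964measurement}, so $\succeq^*$ admits an extension to a strictly monotonic weak order on $\mathbb{Z}^S$; applying Theorem 1.4 of \cite{scott1964measurement} then yields that $\succeq$ is probabilistically sophisticated. The main obstacle, which is outsourced to Scott's theorem, is bridging the gap between the abstract $\mathcal{M}$-invariant extension produced by \autoref{commutativechar} and an actual probability measure representing $\succeq$; the key structural observation that makes this bridge work is that translation-invariance on $\mathbb{Z}^S$ is strong enough to encode all the cancellation conditions Scott shows are needed for representability.
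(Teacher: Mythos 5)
Your proof is correct and follows essentially the same route the paper sketches in the paragraph and footnote preceding the corollary: embed $2^S$ into $\mathbb{Z}^S$ via indicator functions, invoke \autoref{commutativechar} for the commutative translation family, and use Scott's Theorem~1.4 (together with the observation that $\mathcal{M}$-invariant weak orders on $\mathbb{Z}^S$ are strictly monotonic) to bridge to representability by a probability measure. The paper does not write out a separate formal proof of this corollary, so your proposal amounts to the intended reconstruction of that argument.
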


In light of \autoref{probsophcorr}, Kraft et al.'s counterexample (\citealt{kraft1959intuitive}, p.\ 414) must feature some cycle in its $\mathcal{M}$-closure. Explicitly, their ordering includes the relations:  
\[
    \mathbbm{1}_{14} \prec^* \mathbbm{1}_{235}, \quad \mathbbm{1}_{23} \prec^* \mathbbm{1}_{15}, \quad \mathbbm{1}_{25} \prec^* \mathbbm{1}_{34}, \quad \textrm{ and } \quad \mathbbm{1}_{35} \prec^* \mathbbm{1}_2
\]
on $\mathbb{Z}^S$, where $S = \{1,\ldots, 5\}$. Thus in the $\mathcal{M}$ closure of $\succeq^*$, we obtain:
\[ \mathbbm{1}_{14} \prec^{*\mathcal{M}} \mathbbm{1}_{235} \prec^{*\mathcal{M}} \mathbbm{1}_{15} + \mathbbm{1}_5 \prec^{*\mathcal{M}} \mathbbm{1}_{1345}  - \mathbbm{1}_2 \prec^{*\mathcal{M}} \mathbbm{1}_{124} - \mathbbm{1}_{2} = \mathbbm{1}_{14},
\]
confirming that $\succeq$ is not probabilistically sophisticated.

\subsection{Preorder extensions}

One special case of our general problem (\hyperlink{q1}{Q.1}) is when an incomplete preference, or {\bf preorder} (i.e. a reflexive and transitive binary relation $\succeq$ on some set $X$), satisfying certain invariance axioms, can be extended into a preference relation with the same properties.  In this subsection, we illustrate how \autoref{commutativechar} may be used to obtain extension results of this form for preorders satisfying a number of natural economic properties.

\subsubsection{Additive Preorder Extensions}

Let $V$ denote a real vector space, and $C\subseteq V$ be a cone.  Suppose that $X \subseteq V$ is closed under addition by vectors in $C$, i.e. if $x\in X$ and $c \in C$, then $x+c\in X$.  Let $\mathcal{M}$ denote all transformations of the form $x \mapsto x + c$, for $c \in C$. In this circumstance, an $\mathcal{M}$-invariant preference is said to be $C$-{\bf additive}.\medskip

\begin{corollary}\label{additivecorr} Every $C$-additive preorder $\succeq$ admits a $C$-additive preference extension.\end{corollary}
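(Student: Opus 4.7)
The plan is to deduce \autoref{additivecorr} as a direct application of \autoref{commutativechar}. The key point is that when the data one starts with is itself already an invariant preorder, the $\mathcal{M}$-closure operation does not enlarge it, so acyclicity reduces to the usual (trivial) acyclicity of a preorder paired with its asymmetric part.

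First, I would verify the setup: the family $\mathcal{M}$ of translations $x \mapsto x + c$ with $c \in C$ is a commutative family on $X$, since vector addition in $V$ commutes, $\mathrm{id} = (x \mapsto x + 0) \in \mathcal{M}$ (as $C$ is a cone, $0 \in C$), and composition is translation by the sum of the translation vectors, which remains in $C$. Next, I would form the order pair $\langle \succsim^R, \succ^R\rangle := \langle \succeq, \succ\rangle$, where $\succ$ denotes the asymmetric component of $\succeq$. Since the desired extension is a $C$-additive preference relation whose weak and strict parts contain $\succeq$ and $\succ$ respectively, rationalizability of this order pair by an $\mathcal{M}$-invariant preference is precisely the conclusion sought.

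The central observation is that $\langle \succsim^R_\mathcal{M}, \succ^R_\mathcal{M}\rangle = \langle \succeq, \succ\rangle$. Indeed, $\succeq \subseteq \succeq_\mathcal{M}$ follows by taking $\omega = \mathrm{id}$; conversely, if $u \succeq_\mathcal{M} v$ because $u = x + c$, $v = y + c$ for some $x \succeq y$, then $C$-additivity of $\succeq$ yields $u \succeq v$. The analogous argument, using the strict-invariance clause (1) that $\mathcal{M}$-invariance also preserves $\succ$, shows $\succ_\mathcal{M} = \succ$.

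It then remains to check that $\langle \succeq, \succ\rangle$ is acyclic. A cycle $x_0 \succeq x_1 \succeq \cdots \succeq x_N \succ x_0$ would give $x_0 \succeq x_N$ by transitivity, contradicting the asymmetric ranking $x_N \succ x_0$. \autoref{commutativechar} then delivers a $C$-additive preference relation extending the data, completing the proof. I do not anticipate any serious obstacle: the work is entirely in correctly identifying that invariance of the given preorder collapses the closure onto itself, reducing the problem to the tautological acyclicity of a preorder with its asymmetric part.
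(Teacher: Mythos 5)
Your proof is correct and takes the route the paper leaves implicit: apply \autoref{commutativechar} to the order pair $\langle \succeq, \succ\rangle$ after observing that, because $\succeq$ is already $\mathcal{M}$-invariant (in both its weak and strict parts), its $\mathcal{M}$-closure coincides with $\langle \succeq, \succ\rangle$ itself, whose acyclicity is immediate from transitivity and the definition of the asymmetric part. The paper states \autoref{additivecorr} without proof precisely because it is this direct consequence of the theorem, and your argument correctly supplies all the verification steps, including that translations by elements of $C$ form a commutative family containing the identity.
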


\noindent \autoref{additivecorr} contains a number of well-known special cases, including:

\begin{enumerate}
\item[(i)] Quasilinearity over classical consumption spaces, i.e.\ $V=\mathbb{R}^{n+1}$, $C = \{(a,0,\ldots,0):a\geq 0\}$ and $X=\mathbb{R}^{n+1}_+$.  
\item[(ii)] Additive preferences over a vector space, i.e.\ $V$ is a real vector space and $V = X = C$.
\item[(iii)] CARA preferences over compactly supported monetary lotteries, i.e. $X = V = L^\infty$, and $C$ denotes the equivalence classes of almost-everywhere constant functions.  (Here, the preorder $\succeq$ is additionally presumed to be indifferent between any random variables which coincide in law.)
\end{enumerate}

\subsubsection{Homothetic preorder extensions}

Next, let $X$ denote a cone in a real vector space.  A preorder is \emph{homothetic} if, for every $x,y\in X$ and $\lambda > 0$, $x\succeq y$ implies $\lambda x \succeq \lambda y$, with a corresponding statement for strict preference.  It has been known since at least \citet{demuynck2009} that if $X$ is a cone in a Euclidean vector space, then every montonic and homothetic preorder has a monotonic and homothetic weak order extension.  The following corollary establishes a modest generalization of this result, removing both the assumption of monotonicity, and finite dimensionality:

\begin{corollary}\label{cor:nonmon}Every homothetic preorder has a homothetic preference extension.\end{corollary}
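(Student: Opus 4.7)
The plan is to apply \autoref{commutativechar} with $\mathcal{M} := \{x \mapsto \lambda x : \lambda > 0\}$ on the cone $X$. Since $(\lambda \mu) x = (\mu \lambda) x$ for all $\lambda, \mu > 0$, this family is commutative and contains the identity (take $\lambda = 1$), so the theorem applies. The data is taken to be the preorder itself: $\succsim^R := \succeq$ and $\succ^R := \succ$, where $\succ$ is the asymmetric part of $\succeq$.

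The next step is to identify the $\mathcal{M}$-closure. By definition, $u \succsim^R_{\mathcal{M}} v$ iff there exist $\lambda > 0$ and $x, y \in X$ with $\lambda x = u$, $\lambda y = v$, and $x \succeq y$; equivalently, $(u/\lambda) \succeq (v/\lambda)$ for some $\lambda > 0$. Because $1/\lambda$ is also positive, homotheticity gives $(u/\lambda) \succeq (v/\lambda) \iff u \succeq v$. Hence $\succsim^R_{\mathcal{M}} = \succeq$, and the analogous argument using the strict-preference half of homotheticity yields $\succ^R_{\mathcal{M}} = \succ$.

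Acyclicity of $\langle \succsim^R_\mathcal{M}, \succ^R_\mathcal{M}\rangle$ is then immediate from the fact that $\succeq$ is a preorder: any purported cycle $x_0 \succeq x_1 \succeq \cdots \succeq x_N \succ x_0$ would, by transitivity of $\succeq$, yield $x_0 \succeq x_N$, contradicting $x_N \succ x_0$ since $\succ$ is the asymmetric part of $\succeq$. By \autoref{commutativechar}, an $\mathcal{M}$-invariant preference rationalizing $\langle \succeq, \succ\rangle$ exists, and an $\mathcal{M}$-invariant preference is precisely a homothetic preference relation extending $\succeq$.

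There is essentially no hard step here; the only thing to verify with care is that the $\mathcal{M}$-closure genuinely collapses back onto $\succeq$, which is where both directions of homotheticity (via $\lambda$ and $1/\lambda$) are used. Notably, no assumption of monotonicity or finite dimensionality is invoked anywhere, which is precisely the respect in which this sharpens the earlier result of \citet{demuynck2009}.
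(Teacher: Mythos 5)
Your proof is correct and is exactly the intended argument: the paper states this corollary without proof as an immediate instantiation of \autoref{commutativechar}, and your instantiation (take $\mathcal{M}$ to be the dilation group, take $\langle\succsim^R,\succ^R\rangle$ to be $\langle\succeq,\succ\rangle$, verify the $\mathcal{M}$-closure collapses back to $\succeq$ by homotheticity, note acyclicity holds because $\succeq$ is a preorder and $\succ$ its asymmetric part) is the right one. Your observation that both directions of homotheticity come for free because each $\lambda>0$ is invertible with $1/\lambda>0$ is the one small detail worth making explicit, and you did.
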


\noindent DeMuynck's result follows as any extension of a monotonic preorder is by definition monotonic as well.

\subsubsection{An Algebraic Version of \citet{dubra2004}}
   
Let $(Y,\Sigma)$ be some measurable space and let $\Delta(Y)$ be the set of countably additive probability measures on $(Y,\Sigma)$.  We say that a preorder $\succeq$ on $\Delta(Y)$ satisfies {\bf rational independence} if, for all $p,q,r\in\Delta(Y)$ and $\alpha \in \mathbb{Q}\cap (0,1]$, $p \succeq q$ if and only if $\alpha p + (1-\alpha)r \succeq \alpha q + (1-\alpha)r$.\medskip

Suppose that $\succeq$ is a rationally independent preorder. We define an extension $\succeq^*$ to the set of all signed measures of bounded variation as follows: let $\nu \succeq^* \nu'$ if and only if there exists some $\alpha\in\mathbb{Q}$, $\alpha > 0$, and $p,q\in\Delta(Y)$ with $p \succeq q$, such that $(\nu-\nu')=\alpha(p-q)$.\medskip

This extension is well-defined, as if $\nu-\nu'=\alpha(p-q)=\beta(r-s)$ and $p\succeq q$, by rational independence and transitivity, it cannot be $s \succ r$.  Thus, in particular, $\succeq^*$ may be regarded as extending $\succeq$.\footnote{To be formal, $\succeq^*$ extends the image of $\succeq$ under the inclusion map taking $\Delta(Y)$ into the space of signed measures.}  Moreover, if $\nu_1\succeq^* \nu_2 \succeq^* \nu_3$, it is straightforward to establish that $\nu_1 \succeq^* \nu_3$. Thus $\succeq^*$ is itself transitive.\medskip

Let $\mathcal{M}$ denote the maps $\nu \mapsto \nu + \rho$, for each signed measure $\rho$. By construction, we have $\nu \succeq^* \nu'$ if and only if $\nu + \rho \succeq^* \nu'+\rho$, for any $\rho$.  Thus $\succeq^*$ is its own $\mathcal{M}$-closure. Since $\mathcal{M}$ is a commutative family, \autoref{commutativechar} provides the existence of an $\mathcal{M}$-invariant extension, $\succeq^{**}$, of $\succeq^*$.\medskip

Finally, note the restriction of $\succeq^{**}$ to $\Delta(Y)$ itself satisfies rational independence:  if $p \succeq^{**} q$ and $\alpha \in \mathbb{Q}\cap (0,1]$, then we must have $\alpha p \succeq^{**} \alpha q$.\footnote{This follows from a straightforward induction argument.}   Consequently,  we obtain that $\alpha p+(1-\alpha)r \succeq' \alpha q + (1-\alpha)r$, as desired, with an analogous statement holding for strict preference.  

\begin{corollary}Every rationally independent preorder has a rationally independent preference extension.\end{corollary}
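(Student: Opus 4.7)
The plan is to follow the embedding already sketched in the paragraphs preceding the statement: extend $\succeq$ from $\Delta(Y)$ to the space of signed measures of bounded variation via the relation $\succeq^*$, apply \autoref{commutativechar} with $\mathcal{M}$ the commutative family of translations by signed measures, and then restrict the resulting invariant preference $\succeq^{**}$ back to $\Delta(Y)$. What remains is to rigorously verify the three properties the excerpt asserts without full proof---well-definedness of $\succeq^*$, transitivity of $\succeq^*$, and rational independence of the restriction of $\succeq^{**}$.

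For well-definedness, I would suppose $\nu-\nu'=\alpha(p-q)=\beta(r-s)$ with $p,q,r,s\in\Delta(Y)$, $p\succeq q$, and $\alpha,\beta\in\mathbb{Q}_{++}$; clearing denominators and rearranging, I would rewrite the identity in the form $\gamma p + \delta s = \gamma q + \delta r$ with coefficients in $\mathbb{Q}\cap(0,1]$ summing to $1$ and then apply rational independence twice to propagate $p\succeq q$ to $r\succeq s$, ruling out $s\succ r$. Transitivity is analogous: given $\nu_1-\nu_2=\alpha(p-q)$ and $\nu_2-\nu_3=\beta(r-s)$ with $p\succeq q$ and $r\succeq s$, I would express $\nu_1-\nu_3$ as a positive rational multiple of a difference of genuine lotteries and chain the two applications of rational independence to exhibit the required witness. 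Acyclicity of $\langle \succsim^*,\succ^*\rangle$, which already coincides with its own $\mathcal{M}$-closure by the translation symmetry in the defining difference, is then immediate: any cycle would collapse via transitivity to $\nu\succ^*\nu$, contradicting well-definedness. \autoref{commutativechar} therefore supplies an $\mathcal{M}$-invariant preference extension $\succeq^{**}$.

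For the final step, it suffices to show $p\succeq^{**}q$ implies $\alpha p\succeq^{**}\alpha q$ for every positive rational $\alpha$, since then translation invariance by $(1-\alpha)r$ (itself a signed measure of bounded variation) yields the desired conclusion on $\Delta(Y)$. For positive integers I would induct on $n$ using $\mathcal{M}$-invariance and transitivity to obtain $np\succeq^{**}nq$. For reciprocals $1/n$, completeness of $\succeq^{**}$ reduces the task to ruling out $(1/n)q\succ^{**}(1/n)p$, which by the integer case would give $q\succ^{**}p$---a contradiction, since $\succ^{**}$ is the asymmetric part of $\succeq^{**}$. Composing the two handles $\alpha=m/n$, and the analogous argument for strict preference completes the verification.

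The main obstacle, in my view, is the bookkeeping behind well-definedness and transitivity of $\succeq^*$: one has to manipulate convex combinations with heterogeneous rational weights and invoke rational independence only at lotteries genuinely in $\Delta(Y)$, since the axiom is not assumed to hold on signed measures. Once this is executed carefully, everything else---acyclicity, the application of \autoref{commutativechar}, and the recovery of rational independence on the restriction---follows from standard invariance and induction arguments.
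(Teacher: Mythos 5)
Your proposal follows the paper's own argument essentially step for step: form the additive extension $\succeq^*$ on signed measures, check well-definedness and transitivity via rational independence, note $\succeq^*$ equals its own $\mathcal{M}$-closure under translations and is acyclic, invoke \autoref{commutativechar}, and recover rational independence on $\Delta(Y)$ via the positive-integer/reciprocal induction plus translation invariance. The only difference is that you flesh out details the paper leaves to footnotes (notably the well-definedness manipulation and the "straightforward induction"), so this is the same approach, correctly executed.
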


\section{The General Case}\label{noncommcase}

\subsection{Overview}

In the preceding section, \autoref{commutativechar} showed that when the transformations in $\mathcal{M}$ commute with each other, a simple generalization of \cite{richter1966revealed}'s acyclicity condition characterizes rationalizability by an $\mathcal{M}$-invariant preference.  However, when $\mathcal{M}$ is not a commutative family, this conclusion fails. We return to the example from the introduction.

\begin{example}\label{koopmansrevisited}
    Let $a,b,c,d \in Z$ be prizes, and suppose $X$ consists of all infinite horizon consumption streams taking values in $Z$, i.e.\ $X = Z^\mathbb{N}$.  Let $\mathcal{M}$ consist of all finite compositions of the transformations $\{\omega_z\}_{z \in Z}$ which append the prize $z$ to the start of a consumption stream. Here, $\mathcal{M}$-invariance corresponds to the stationarity of a preference in the sense of \cite{koopmans1960}.\medskip
    
    Let $x,y \in X$ be arbitrary consumption streams, and recall in \autoref{stationaritycex} we observed:
    \begin{equation}\tag{1}
        \begin{aligned}
            (a,x_1,\ldots) & \succ^R (b, y_1, \ldots)\\
            (b,x_1,\ldots) & \succ^R (a, y_1, \ldots),
        \end{aligned}
    \end{equation}
    and
    \begin{equation}\tag{2}
        \begin{aligned}
            (c,y_1,\ldots) & \succ^R (d, x_1, \ldots)\\
            (d,y_1,\ldots) & \succ^R (c, x_1, \ldots).
        \end{aligned}
    \end{equation}
This relation is vacuously transitive, as is its $\mathcal{M}$-closure. However, as noted in \autoref{stationaritycex}, it cannot be rationalized by any stationary preference. Notably, this is \emph{purely} a consequence of the failure of these transformations to commute.\medskip

To see this, consider an abstract domain $X$ with alternatives $x$ and $x'$, and let $\mathcal{M}$ be generated by all finite compositions of four abstract transformations, $\omega_a, \omega_{b}, \omega_c, \omega_{d}$, each mapping $X \to X$. In this reformulation, our data relation is again given by:
\[
    \omega_{b}(x) \succ^R \omega_a (y) \quad \omega_{a} (x) \succ^R \omega_{b} (y)
\]
and
\[
    \omega_{d}(y) \succ^R \omega_{c} (x) \quad \omega_{c} (y) \succ^R \omega_{d} (x),
\]
but crucially, suppose now that these four transformations were to \emph{commute}. It straightforwardly follows that, in the $\mathcal{M}$-closure, we now have a cycle:
\[
    \omega_{ad}(x) \succ^R_\mathcal{M} \omega_{bd}(y) \succ^R_\mathcal{M}  \omega_{bc}(x) \succ^R_\mathcal{M} \omega_{ac}(y) \succ^R_\mathcal{M} \omega_{ad}(x),
\]
where $\omega_{ij} \equiv \omega_i \circ \omega_j$. Thus the inability of the $\mathcal{M}$-closure to detect obstructions to rationalizability arise solely from the failure of the transformations in $\mathcal{M}$ to commute. \hfill $\blacksquare$
\end{example}

The essence of \autoref{koopmansrevisited} is that, when $\mathcal{M}$ does not consist of commuting transformations, we may be able to falsify rationalizability only on the basis of obtaining mutually unsatisfiable \emph{out-of-sample} restrictions on the comparisons any extension must make.\medskip

In this section, we introduce a strengthening of the transitive closure.  In the classical setting (i.e.\ when $\mathcal{M} = \{\textrm{id}\}$) the transitive closure encodes all the restrictions on out-of-sample comparisons that any rationalizing preference must make.  In contrast, the potential infinitude of knock-on effects require our notion of closure to operate not on relations between single pairs of alternatives, but rather on \emph{sets} of simultaneous constraints, jointly imposed by the data and the structure of $\mathcal{M}$. We show that this generalized notion of closure characterizes both (i) the existence of $\mathcal{M}$-invariant rationalizations, for \emph{any} choices of $X$, $\mathcal{M}$, and $\langle \succsim^R, \succ^R\rangle$, as well as (ii) the out-of-sample predictions generated by $\mathcal{M}$-invariance, just as the transitive closure does in the classical setting.

\subsection{Broken Cycles \& Forbidden Subrelations}

For some $N \ge 1$, let $\omega_1,\ldots, \omega_N \in \mathcal{M}$, and $x_1, y_1, \ldots, x_N, y_N \in X$ be a sequence of $\succsim^R$-unrelated pairs (i.e.\ $x_i$ and $y_i$ are $\succsim^R$-unrelated). We term a collection of relations:
\begin{equation}\label{bc}
 \begin{aligned}
        \omega_1(x_1) & \succsim^{R}_\intercal \omega_2(y_2)\\
        \omega_2(x_2) & \succsim^{R}_\intercal \omega_3(y_3)\\
        & \; \; \, \vdots \\
        \omega_{N-1}(x_{N-1}) & \succsim^{R}_\intercal \omega_N(y_N)\\
        \omega_N(x_N) & \succsim^{R}_\intercal \omega_1(y_1),
    \end{aligned} 
\end{equation}
a {\bf broken cycle.}  If any of the relations $\succsim^{R}_\intercal$ also belongs to $\succ^{R}_\intercal$, we say \eqref{bc} forms a {\bf strict broken cycle}.  Any broken cycle (strict or otherwise) implies joint restrictions on the possible comparisons any $\mathcal{M}$-invariant rationalization may make between the $x_i$ and $y_i$. To formalize this, we say an order pair $\langle W,S \rangle$ defines a {\bf forbidden subrelation} for the broken cycle \eqref{bc} if:
\begin{itemize}
\item[(i)] The relation $W$ equals the set of all distinct pairs $(y_i, x_i)$.\footnote{The definition of a broken cycles allows that, for some $i \neq j$, it may be the case that $(y_i, x_i) = (y_j, x_j)$.}
\item[(ii)] If the broken cycle is not strict, then $\varnothing \subsetneq S \subseteq W$.\footnote{Recall that by definition of an order pair, $S\subseteq W$, regardless of whether \eqref{bc} is strict.}
\end{itemize}

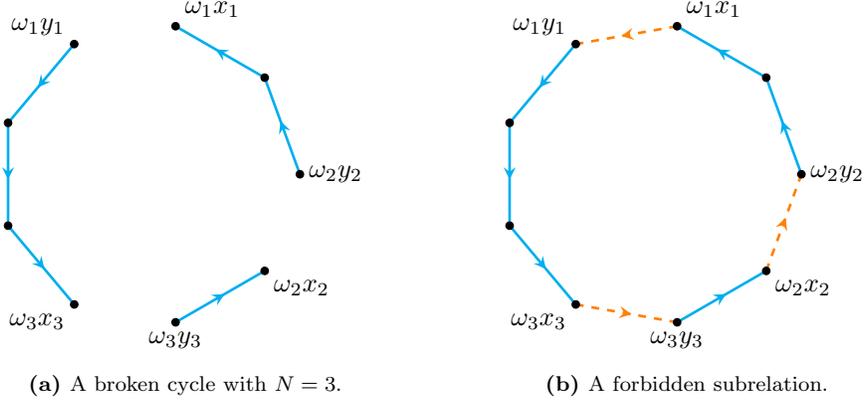
\begin{figure}[t]
    \centering
    \begin{subfigure}[c]{0.45\textwidth}
    \centering
    \begin{tikzpicture}

 \begin{scope}[decoration={markings,mark = at position 0.55 with {\arrow{stealth}}}]
		
		\draw[line width =1, cyan, postaction={decorate}] (0: 2cm) -- (360/9: 2cm);
		\draw[line width =1, cyan, postaction={decorate}] (360/9: 2cm) -- (2*360/9: 2cm);
		
		\draw[line width =1, cyan, postaction={decorate}] (3*360/9: 2cm) -- (4*360/9: 2cm);
		\draw[line width =1, cyan, postaction={decorate}] (4*360/9: 2cm) -- (5*360/9: 2cm);
		\draw[line width =1, cyan, postaction={decorate}] (5*360/9: 2cm) -- (6*360/9: 2cm);
		
		\draw[line width =1, cyan, postaction={decorate}] (7*360/9: 2cm) -- (8*360/9: 2cm);

		
 	 \end{scope}

	\draw[fill] (0: 2cm) circle [radius=0.05];
	\draw[fill] (360/9: 2cm) circle [radius=0.05];
	\draw[fill] (2*360/9: 2cm) circle [radius=0.05];
	\draw[fill] (3*360/9: 2cm) circle [radius=0.05];
	\draw[fill] (4*360/9: 2cm) circle [radius=0.05];
	\draw[fill] (5*360/9: 2cm) circle [radius=0.05];
	\draw[fill] (6*360/9: 2cm) circle [radius=0.05];
	\draw[fill] (7*360/9: 2cm) circle [radius=0.05];
	\draw[fill] (8*360/9: 2cm) circle [radius=0.05];

	\node [right] at (0:2cm) {$\omega_2 y_2$};
	\node [above right] at (2*360/9:2cm) {$\omega_1 x_1$};
	\node [above left] at (3*360/9:2cm) {$\omega_1 y_1$};
	\node [above] at (4*360/9:2cm) {};
	\node [above] at (5*360/9:2cm) {};
	\node [below left] at (6*360/9:2cm) {$\omega_3 x_3$};
	\node [below ] at (7*360/9:2cm) {$\omega_3 y_3$};
	\node [below right] at (8*360/9:2cm) {$\omega_2 x_2$};

\end{tikzpicture}
\subcaption{A broken cycle with $N=3$.}
\end{subfigure}\hfill
\begin{subfigure}[c]{.45\textwidth}
    \centering
    \begin{tikzpicture}

 \begin{scope}[decoration={markings,mark = at position 0.55 with {\arrow{stealth}}}]
		
		\draw[line width =1, cyan, postaction={decorate}] (0: 2cm) -- (360/9: 2cm);
		\draw[line width =1, cyan, postaction={decorate}] (360/9: 2cm) -- (2*360/9: 2cm);
		\draw[line width =1, orange, dashed, postaction={decorate}] (2*360/9: 2cm) -- (3*360/9: 2cm);
		\draw[line width =1, cyan, postaction={decorate}] (3*360/9: 2cm) -- (4*360/9: 2cm);
		\draw[line width =1, cyan, postaction={decorate}] (4*360/9: 2cm) -- (5*360/9: 2cm);
		\draw[line width =1, cyan, postaction={decorate}] (5*360/9: 2cm) -- (6*360/9: 2cm);
		\draw[line width =1, orange, dashed, postaction={decorate}] (6*360/9: 2cm) -- (7*360/9: 2cm);
		\draw[line width =1, cyan, postaction={decorate}] (7*360/9: 2cm) -- (8*360/9: 2cm);
		\draw[line width =1, orange, dashed, postaction={decorate}] (8*360/9: 2cm) -- (0: 2cm);

		
 	 \end{scope}

	\draw[fill] (0: 2cm) circle [radius=0.05];
	\draw[fill] (360/9: 2cm) circle [radius=0.05];
	\draw[fill] (2*360/9: 2cm) circle [radius=0.05];
	\draw[fill] (3*360/9: 2cm) circle [radius=0.05];
	\draw[fill] (4*360/9: 2cm) circle [radius=0.05];
	\draw[fill] (5*360/9: 2cm) circle [radius=0.05];
	\draw[fill] (6*360/9: 2cm) circle [radius=0.05];
	\draw[fill] (7*360/9: 2cm) circle [radius=0.05];
	\draw[fill] (8*360/9: 2cm) circle [radius=0.05];

	\node [right] at (0:2cm) {$\omega_2 y_2$};
	\node [above right] at (2*360/9:2cm) {$\omega_1 x_1$};
	\node [above left] at (3*360/9:2cm) {$\omega_1 y_1$};
	\node [above] at (4*360/9:2cm) {};
	\node [above] at (5*360/9:2cm) {};
	\node [below left] at (6*360/9:2cm) {$\omega_3 x_3$};
	\node [below ] at (7*360/9:2cm) {$\omega_3 y_3$};
	\node [below right] at (8*360/9:2cm) {$\omega_2 x_2$};

\end{tikzpicture}
\subcaption{A forbidden subrelation.}
\end{subfigure}
\caption{A broken cycle (blue) and an associated forbidden subrelation for the broken cycle (dashed orange). Here, the illustrated forbidden subrelation is given by $W = S = \big\{(y_1,x_1), \ldots, (y_3,x_3)\big\}$.}
\end{figure}

The first relation, $W$, reflects restrictions on the possible \emph{weak} comparisons any extending preference can make. Similarly, $S$ encodes restrictions pertaining to \emph{strict} comparisons. Together, these relations capture \emph{set-valued} restrictions on the extension problem: if a binary relation $\succeq$ were to extend some forbidden subrelation $\langle W,S\rangle$, it would imply that (i) $W \subseteq \; \succeq$, and (ii) $S \subseteq \; \succ$, and hence that $\succeq$ necessarily contains a cycle.  Informally, a forbidden subrelation $\langle W,S\rangle$ may be read as the restriction:
\[
    \textrm{``Cannot simultaneously have } \underbrace{\cdots, y_{i_j} \succeq x_{i_j}, \cdots}_{\textrm{Relations in } W \setminus S}  \textrm{ and }  \underbrace{\cdots, y_{i_k} \succ x_{i_k}, \cdots}_{\textrm{Relations in } S}.\textrm{"}
\]
\noindent As the following example shows, these order pairs capture richer systems of restrictions than does the transitive closure.\medskip

\begin{example}
    Suppose first that we observe only $x \succsim^R y$ and $y \succsim^R z$.  Thus $x \succsim^R_\intercal z$, which itself defines a (non-strict) broken cycle, where $N=1$.  From this, we obtain $W = \big\{(z,x)\big\}$ and $S = \big\{(z,x)\big\}$ as a forbidden subrelation, encoding that no extension of the data can rank $z \succ x$.\footnote{More generally, for any $x_1,\ldots, x_N \in X$, since $\succsim^R$ is assumed reflexive, the collection of relations:
\[
	\begin{aligned}
		x_1 \succsim^R x_1 \\
		\vdots \quad \quad \\
		x_N \succsim x_N
	\end{aligned}
\]
defines a broken cycle, whose forbidden subrelations are those cyclic binary relations $x_1 \succeq \cdots \succeq x_N \succeq x_1$, with at least one strict component, i.e.\ the \emph{cycles} over $x_1$ to $x_N$.}\medskip

    If instead we observed that $\omega(x) \succsim^R y$ and $y \succsim^R \omega(z)$, by analogous reasoning we would obtain $W = \big\{(z,x)\big\}$ and $S = \big\{(z,x)\big\}$ again as a forbidden subrelation, as well as $W' = \big\{(\omega(z),\omega(x))\big\}$ and $S' = \big\{(\omega(z),\omega(x))\big\}$ reflecting both the direct implication of transitivity, that no extension can rank $\omega(z) \succ \omega(x)$, as well as the knock-on effects of this observation.\medskip

    More generally, however, it can be the case that multiple knock-on effects arising from even a single added comparisons can `complete' a broken cycle. The value of forbidden subrelations is that these capture the restrictions arising from these knock-on effects.  In \autoref{koopmansrevisited}, we observed, e.g., that $\omega_{x;}(\sigma) \succ^R \omega_x(\sigma')$ and $\omega_x(\sigma) \succ^R \omega_{x'}(\sigma')$.  These comparisons form a (strict) broken cycle, where $x_1 = x_2 = \sigma$ and $y_1 = y_2 = \sigma'$.  From this, we obtain as a forbidden subrelation $W = \big\{(\sigma', \sigma)\big\}$ and $S = \varnothing$. The fact $W$ and $S$ have cardinality less than the number of `gaps' $N$ of the broken cycle reflects the fact that multiple knock-on effects, arising from adding the relation in $W$, would jointly complete the broken cycle.\hfill $\blacksquare$
\end{example}

\subsection{The Collapse Operation}

Given the data $\langle \succsim^R, \succ^R\rangle$, let $\mathcal{F}$ denote the set of all forbidden subrelations obtained from broken cycles.  While pairs in $\mathcal{F}$ reflect the rich, set-valued constraints imposed by the algebraic structure of $\mathcal{M}$ on the extension problem, there are, nonetheless, further constraints that need not arise directly from broken cycles in this fashion.\medskip  

\begin{example}\label{collapseexample}
Suppose, based on two broken cycles, we obtain restrictions:
\[
    W_1 = \big\{(x,y), (x',y')\big\} \quad S_1 = \big\{(x',y')\big\}
\]
and
\[
    W_2 = \big\{(y,x), (x'', y'')\big\} \quad S_2 = \big\{(y,x)\big\},
\]
i.e.\ restrictions saying that we cannot simultaneously have $x \succeq y$ and $x' \succ y'$ in any extension (corresponding to $\langle W_1, S_1\rangle$) and similarly, that we cannot have $y \succ x$ and $x'' \succeq y''$ simultaneously (from $\langle W_2, S_2\rangle$).\medskip

Since every rationalizing preference relation must either rank $x \succeq y$ or $y \succ x$, these two pairs imply an additional, \emph{indirect} restriction on the extension problem: no extension can simultaneously rank both $x' \succ y'$ and $x'' \succeq y''$. Should any complete, extension contain both these comparisons, it must necessarily also rank either $x \succeq y$ or $y\succ x$ and hence extend either $\langle W_1, S_1\rangle$ or $\langle W_2,S_2\rangle$, and therefore cannot be transitive.  This indirect restriction can be expressed as the order pair:
\[
    \tilde{W} = \big\{(x', y'), (x'', y'')\big\} \quad \tilde{S} = \big\{(x', y')\big\},
\]
formed by deleting all instances of the mutually exhaustive pair, and taking the union of the respective remaining relations.\medskip

In fact, nothing material in this argument would be affected if, instead of $\langle W_1, S_1\rangle$ and $\langle W_2, S_2 \rangle$, we instead observed:
\[
    W_1' = \big\{(\omega(x),\omega(y)), (x',y')\big\} \quad S_1 = \big\{(x',y')\big\}
\]
and
\[
    W_2 = \big\{(\omega'(y),\omega'(x)), (x'', y'')\big\} \quad S_2 = \big\{(\omega'(y),\omega'(x))\big\},
\]
corresponding to the restrictions ``cannot have $\omega(x) \succeq \omega(y)$ and $x' \succ y'$ simultaneously," and ``cannot have $\omega'(y) \succ \omega'(x)$ and $x'' \succeq y''$ simultaneously." Any $\mathcal{M}$-invariant rationalization must still rank either $x \succeq y$ or $y \succ x$ and hence the same indirect restriction $\langle \tilde{W}, \tilde{S}\rangle$ obtains.\hfill $\blacksquare$
\end{example}

To formalize the observation underpinning \autoref{collapseexample}, we introduce a partial binary operation on finite order pairs.  Formally, we say a finite order pair $\langle \tilde{W}, \tilde{S}\rangle$ is a {\bf collapse} of two pairs $\langle W_1, S_1\rangle$ and $\langle W_2, S_2\rangle$ if:
\begin{itemize}
\item[(i)] For some $\omega, \omega' \in \mathcal{M}$ and $x,y \in X$,
\[
	\big(\omega(x), \omega(y)\big) \in W_i \setminus S_i \quad \textrm{ and } \quad \big(\omega'(y), \omega'(x)\big) \in W_j,
\]
where $i,j \in \{1,2\}$ are distinct.\medskip
\item[(ii)] The relations $\tilde{W}$ and $\tilde{S}$ are given by:
\[
\begin{aligned}
	\tilde{W} = \big[W_i \setminus \big(\omega(x), \omega(y)\big)\big] \cup \big[W_j \setminus \big(\omega'(y), \omega'(x)\big)\big]\\
\end{aligned}
\]
and
\[
\tilde{S} = S_i \cup \big[S_j \setminus \big(\omega'(y), \omega'(x)\big)\big],
\]
where we intentionally omit the curly braces on singleton sets of pairs to conserve notation.
\end{itemize}
Condition (i) requires that, modulo the transformations $\omega$ and $\omega'$, the two order pairs contain a mutually exhaustive pair of rankings of two alternatives, $x$ and $y$.  This requires that pair $\langle W_j, S_j\rangle$ contain a restriction of the form $\omega'(y) \succeq \omega'(x)$, for some $\omega'$, but allows $\langle W_i, S_i\rangle$ to contain a restriction of the form $\omega(x) \succeq \omega(y)$ or $\omega(x) \succ \omega(y)$.\footnote{This asymmetry in the treatment of $W_i$ and $W_j$ ensures that at least one of restriction on $x$ and $y$ is non-strict. If, e.g., $W_i$ said ``cannot have $x \succ y$ and $W_j$ said ``cannot have $y \succ x$, these restrictions would not be mutually exhaustive.}\medskip

Condition (ii) then defines each relation in the collapse as the union of the respective `parent' relations, minus the mutually exhaustive pair. Generally, an incomplete, $\mathcal{M}$-invariant extension of $\langle \tilde{W}, \tilde{S}\rangle$ need not extend either $\langle W_1, S_1\rangle$ or $\langle W_2, S_2\rangle$. However, such an extension can never be completed to an $\mathcal{M}$-invariant preference, as such a preference rank $x$ and $y$ eventually, at which point it must necessarily extend one of the initial forbidden subrelations, creating a cycle.\medskip

By repeatedly collapsing restriction order pairs, we are able to uncover further `indirect' restrictions on the extension problem.  To formalize this idea, let $\mathcal{F}^0  =\mathcal{F}$ denote the set of all forbidden subrelations arising from some broken cycle in the data, and inductively define:
\[
	\mathcal{F}^n = \big\{\langle W,S \rangle : \langle W,S\rangle \textrm{ is collapse of pairs in } \mathcal{F}^{n-1} \big\} \cup \mathcal{F}^{n-1}.
\]
Let:
\[
\mathcal{F}^* = \bigcup_{n\ge 0} \mathcal{F}^n.
\]
We say that the data $\langle \succsim^R, \succ^R\rangle$ are {\bf strongly acyclic} if the empty order pair $\langle \varnothing, \varnothing\rangle \not \in \mathcal{F}^*$.\medskip

\begin{example}
    Consider again the revealed preference in \autoref{koopmansrevisited}. Equations \eqref{stationaritycex1} and \eqref{stationaritycex2} define broken cycles, which respectively yield forbidden subrelations $W_1 = \big\{(\sigma' , \sigma)\big\}$ and $W_2 = \big\{(\sigma, \sigma')\big\}$, with $S_1=S_2 = \varnothing$.  The collapse of these pairs is precisely $\tilde{W} = \tilde{S} = \varnothing$, and hence the data fail to be strongly acyclic. This reflects our earlier observation that even though the data are transitive (and have an acyclic $\mathcal{M}$-closure), any stationary rationalization must compare $\sigma$ and $\sigma'$, and hence complete one of the broken cycles via the knock-on effects arising from this comparison. \hfill $\blacksquare$
\end{example}

Strong acyclicity is a necessary condition for $\mathcal{M}$-invariant rationalizability: if the empty pair belongs to $\mathcal{F}^*$, then it belongs to some $\mathcal{F}^n$, and hence through some finite sequence of collapses, can be derived from broken cycles in the data. Since \emph{every} extension of the data must necessarily extend the empty pair, this means that no possible extension of the data can ever be completed without creating a cycle at some stage along the way.\medskip

A priori, however, it is unclear whether $\mathcal{F}^*$, the set of restrictions generated by repeated application of the collapse, reflects all the restrictions on the extension problem. The following theorem is the primary result of this paper.  It says that $\mathcal{M}$-acyclicity is not only necessary, but in fact fully characterizes the existence of an $\mathcal{M}$-invariant rationalizing preference, without \emph{any} assumptions on the structure of the domain $X$, transformations $\mathcal{M}$, or data $\langle \succsim^R, \succ^R\rangle$.\medskip

\begin{theorem}\label{generalchar}
    The data $\langle \succsim^R, \succ^R\rangle$ are strongly acyclic if and only if they are rationalizable by an $\mathcal{M}$-invariant preference. 
\end{theorem}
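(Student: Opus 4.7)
The plan is to prove the two directions separately: necessity of strong acyclicity by a direct induction, and sufficiency via a translation to propositional logic together with Robinson's refutation-completeness theorem for resolution.

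For necessity, suppose $\succeq$ is an $\mathcal{M}$-invariant rationalization of the data. I would show by induction on $n$ that no pair $\langle W, S\rangle \in \mathcal{F}^n$ is \emph{extendible} by $\succeq$, in the sense that it is never simultaneously the case that $W \subseteq \; \succeq$ and $S \subseteq \; \succ$. In the base case, if $\langle W, S\rangle$ arises from a broken cycle as in \eqref{bc} and were extendible, then $\mathcal{M}$-invariance would yield $\omega_i(y_i) \succeq \omega_i(x_i)$ for each $i$, with some ranking strict (supplied by $S$ or by the broken cycle's own strictness); interleaving with the $\succsim^R_\intercal$-chain inside the cycle would then produce a strict cycle in $\succeq$, contradicting transitivity. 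For the inductive step, if $\langle \tilde W, \tilde S\rangle$ is a collapse of $\langle W_i, S_i\rangle$ and $\langle W_j, S_j\rangle$ at the pair $x, y$ with witnesses $\omega, \omega'$, completeness of $\succeq$ forces a comparison between $x$ and $y$, and $\mathcal{M}$-invariance lifts either resolution of this comparison to a ranking that, combined with $\tilde W \cup \tilde S$, extends one of the two parent pairs, contradicting the inductive hypothesis. Since $\langle \varnothing, \varnothing\rangle$ is vacuously extended by any preference, strong acyclicity is necessary.

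For sufficiency I would argue by contrapositive. For each ordered pair $(x,y)$ introduce a propositional variable $p_{xy}$ interpreted as ``$x \succeq y$,'' and consider the theory $T$ consisting of: completeness clauses $p_{xy} \lor p_{yx}$; transitivity clauses $\neg p_{xy} \lor \neg p_{yz} \lor p_{xz}$; weak invariance clauses $\neg p_{xy} \lor p_{\omega(x), \omega(y)}$ and strict invariance clauses $\neg p_{xy} \lor p_{yx} \lor \neg p_{\omega(y), \omega(x)}$ for each $\omega \in \mathcal{M}$; and data clauses $p_{xy}$ for $(x,y) \in \; \succsim^R$ and $\neg p_{yx}$ for $(x,y) \in \; \succ^R$. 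Satisfying assignments of $T$ correspond exactly to $\mathcal{M}$-invariant rationalizations. If none exists, $T$ is unsatisfiable; by propositional compactness some finite subtheory is unsatisfiable, so Robinson's theorem produces a finite resolution refutation.

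I would identify each pair $\langle W, S\rangle$ with the clause $\bigvee_{(y,x) \in W} \neg p_{yx} \; \lor \; \bigvee_{(y,x) \in S} p_{xy}$, and verify that (i) broken-cycle clauses are derivable from the data, transitivity, and invariance axioms by resolution, and (ii) a single collapse corresponds to a composite resolution step combining the two parent clauses with the completeness axiom on $(x,y)$ and the weak- and strict-invariance axioms for $\omega$ and $\omega'$. The main obstacle is that a generic Robinson refutation interleaves all four axiom types in an arbitrary order and so will not literally present itself as a sequence of collapses. Overcoming this will require invoking a suitable normal-form refinement of resolution---such as linear or semantic resolution---to restrict the refutation's shape, and then showing that any proof in the chosen normal form can be re-bracketed so that its transitivity- and invariance-driven segments condense into broken-cycle clauses, while the completeness-plus-invariance steps linking them reproduce collapses exactly, yielding a derivation of $\langle \varnothing, \varnothing\rangle$ in $\mathcal{F}^*$. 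Verifying that such a re-bracketing is always available is the principal combinatorial content of the argument.
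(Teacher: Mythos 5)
Your overall strategy matches the paper's: prove necessity of strong acyclicity directly, and prove sufficiency by encoding the extension problem in propositional logic, invoking compactness and Robinson's refutation-completeness, and then converting a suitably normalized resolution refutation into a chain of collapses terminating in $\langle\varnothing,\varnothing\rangle$. You also correctly locate the central difficulty: an arbitrary Robinson refutation interleaves axiom types freely and will not present itself as a sequence of collapses, so some normal-form refinement of resolution is required before the ``re-bracketing'' can succeed.

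The paper's answer to that difficulty is \emph{negative resolution} (every resolution step must have at least one parent containing no positive literals, which is still refutation-complete). This choice is not incidental --- it is what makes the inductive conversion work, because the clausal representations of forbidden subrelations become exactly the purely-negative clauses that negative resolution tracks. Here is where your encoding creates a genuine obstruction. The paper introduces \emph{two} Boolean variables per ordered pair, $[\texttt{x}\succeq\texttt{y}]$ and $[\texttt{x}\succ\texttt{y}]$, linked by coherency clauses; under that encoding, the clausal representation of any $\langle W,S\rangle$ is a disjunction of negative literals only, so one can induct on the tree of negative ancestors of each such clause in the refutation. Your encoding uses a single variable $p_{xy}$ and represents $x\succ y$ as $p_{xy}\wedge\neg p_{yx}$; the resulting clause for $\langle W,S\rangle$ is $\bigvee_{(y,x)\in W}\neg p_{yx}\,\vee\,\bigvee_{(y,x)\in S}p_{xy}$, which contains \emph{positive} literals whenever $S\neq\varnothing$. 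Forbidden subrelations are then no longer negative clauses, the negative-resolution normal form no longer singles them out, and the combinatorial re-bracketing you correctly flag as ``the principal content'' cannot be carried out as the paper does. Your necessity direction (a direct induction showing no pair in $\mathcal{F}^n$ is extendible by any rationalization) is a fine elementary alternative to the paper's soundness-of-resolution argument, and is arguably cleaner; but for sufficiency you would need either to switch to the two-variable encoding or to identify a different resolution normal form compatible with the presence of positive literals in forbidden-subrelation clauses.
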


\section{Invariance Under Partial Transformations}

Thus far, we have defined an an invariance with respect to a family of transformations, each mapping $X \to X$. However, a number of natural, economic axioms are of a slightly weakened form, requiring invariance under a given transformation $\omega$ to hold only for pairs of alternatives belonging to a subset $D_\omega \subseteq X$ which, crucially, does not depend on the preference itself.\footnote{This distinguishes them from, e.g., convexity or betweenness-type axioms.}\medskip

Formally, let $\varnothing \subseteq D_\omega \subseteq X$ and $\omega: D_\omega \to X$. We say a preference relation $\succeq$ is {\bf invariant} under $\omega$ if, for all $x,y \in D_\omega$:
\[
    x \succeq y \quad \iff \quad \omega(x) \succeq \omega(y).
\]
We term such an $\omega$ a {\bf partial transformation} of $X$, and refer to $D_\omega$ as the {\bf domain} of $\omega$.  Given two partial transformations $\omega$ and $\omega'$, their composition is the partial transformation $\omega' \circ \omega: D_{\omega' \circ \omega} \to X$, where $D_{\omega' \circ \omega}$ consists of those alternatives in $D_\omega$ whose image under $\omega$ belongs to $D_{\omega'}$.\footnote{Note that $D_{\omega' \circ \omega}$ may be empty. In this case, any preference is trivially invariant with respect to $\omega' \circ \omega$.}  Finally, given a collection $\mathcal{M}$ of partial transformations, we say that $\succeq$ is $\mathcal{M}$-invariant if it is invariant under every partial transformation in $\mathcal{M}$.  Without loss of generality, we will again assume that any such collection $\mathcal{M}$ (i) contains the identity transformation $X \to X$, and (ii) is closed under composition.

\subsection{Examples of Partial Invariance Axioms}

Despite their technical nature, it is perhaps surprising that a number of classical, economically important axioms may be regarded as invariances with respect to families of partial transforms.

\subsubsection{Ordinal Additivity}\label{ordadd}

Let $S$ denote a set of states of the world, and let $X = 2^S$.  For each $A \in X$, let:
\[
    D_A = \big\{ B \in X : A \cap B = \varnothing \big\},
\]
and define $\omega_A : D_A \to X$ via $\omega_A(B) = A \cup B$.  Recall a complete and transitive binary relation $\succeq$ on $X$, representing a subjective `likelihood ordering' over the events $X$, defines a \emph{qualitative probability} if it is invariant under every partial transformation $\omega_A$ (e.g.\ \citealt{definetti51, kraft1959intuitive}), and is additionally monotone with respect to set inclusion and nontrivial.

\subsubsection{Additive Separability}\label{addsep}

Let $X = \times_{i \in I} X_i$.  For any subset $A \subseteq I$, and any $x,y \in X$, we define $x_Ay \in X$ via:
\[
    (x_A y)_i = \begin{cases}x_i & \textrm{ if } i \in A \\ y_i & \textrm{ if } i \not \in A \end{cases}.
\]
A preference relation $\succeq$ on $X$ is said to be {\bf separable} (\citealt{leontief1947introduction, debreu1959topological, luce1964simultaneous}) if, for all $A \subseteq I$, and all $x,y,z,z' \in X$,
\[
    x_Az \succeq y_A z \quad \iff \quad x_Az' \succeq y_Az'.
\]
Separability may equivalently be expressed as an invariance axiom. First, let:
\[
    D_A^z = \{x \in X: x = x_A z\},
\]
denote those elements of $X$ equal to $z$ on $A$.  For each $A \subseteq I$ and $z,z' \in X$, define $\omega_A^{z \to z'}: D_A^z \to X$ via:
\[
    \omega_A^{z\to z'}(x_Az) = x_Az'.
\]
In other words, $\omega_A^{z\to z'}$ takes in an element of $X$ equal to $z$ on $A$, and replaces its values on $A$ instead with the restriction of $z'$.  Letting $\mathcal{M}$ denote these transforms (and their compositions), we obtain that a preference is $\mathcal{M}$-invariant if and only if it is separable.

\subsubsection{Savage's P2}

Closely related to \autoref{addsep}, let $S$ denote a set of states of the world, $\mathcal{X}$ a set of consequences, and $X = \mathcal{X}^S$ the set of all acts mapping $S \to \mathcal{X}$.  For any three acts $f,g$, and $h$ in $X$ and $A \subseteq S$, we analogously define:
\[
    f_Ah = \begin{cases}
        f(s) & \textrm{ if } s \in A\\
        h(s) & \textrm{ if } s \not \in A.
    \end{cases}
\]
A preference relation $\succeq$ on $X$ satisfies Axiom P2 (\citealt{savage1954foundations}) if and only if, for all $f,g,h,h' \in X$ and $A \subseteq S$, we have:
\[
    f_Ah \succeq g_Ah \quad \iff \quad f_Ah' \succeq g_Ah'.
\]
By precisely the same construction as in \autoref{addsep} we may rephrase this condition as an invariance axiom for a suitable family of partial transformations.

\subsubsection{Comonotonic Independence}

Let $\{1,\ldots, S\}$ denote a finite set of states of the world and $X = \mathbb{R}^S$ the set of all monetary acts.  Two acts $f,g \in X$ are said to be \emph{comonotonic} if it is never the case that:
\[
    f(s) > f(s') \quad \textrm{ and } \quad g(s) < g(s')
\]
for any $s,s' \in S$.  A preference relation satisfies {\bf comonotonic independence} (\citealt{schmeidler1989subjective}) if, for any pairwise comonotonic acts $f,g,h \in X$ and $\alpha \in (0, 1]$ we have:
\[
    f \succeq g \quad \iff \quad \alpha f + (1-\alpha) h \succeq \alpha g + (1-\alpha)h.
\]
Let $\mathfrak{S}$ denote the set of all permutations of $\{1,\ldots,  S \}$.  For $\sigma \in \mathfrak{S}$, let:
\[
    D_\sigma = \big\{f \in X : f(i) \ge f(j) \iff \sigma(i) > \sigma(j)\big \}.
\]
For each $\sigma \in \mathfrak{S}$, $\alpha \in (0,1]$ and $h \in D_\sigma$, define the transformation:
\[
    \omega_\sigma^{\alpha, h}: D_\sigma \to X
\]
via $f \mapsto \alpha f + (1-\alpha)h$.  Letting $\mathcal{M}$ denote the collection of all such transformations and their compositions, we obtain that a preference is $\mathcal{M}$-invariant if and only if it is comonotonic independent.



\subsection{Strong Acyclicity and Partial Transforms}

The primary result of this section is that \emph{precisely} the same notion of strong acyclicity, as defined in \autoref{noncommcase}, characterizes $\mathcal{M}$-invariant rationalizability for arbitrary families of partial transformations as well.  Formally, this simply follows from observing that no definition or proof step required each transformation $\omega \in \mathcal{M}$ to have $D_\omega = X$. However, we will illustrate this lack of dependence on domain in more detail.\medskip

Firstly, consider a broken cycle such as in \eqref{bc}. By definition, broken cycles require only that each $\omega_i(x_i)$ and $\omega_i(y_i)$ are defined (i.e.\ $x_i,y_i \in D_{\omega_i}$).  Given any well-defined broken cycle, obtaining its forbidden subrelations does not require us to apply any transform, partial or otherwise. Instead, rather, we take only pre-images: if $\omega_i(y_i)$ and $\omega_i(x_i)$ are gaps in a broken cycle, then our forbidden subrelation contains the pair $(y_i, x_i)$. This passage is always well-defined so long as $\omega_i(y_i)$ and $\omega_i(x_i)$ are. Therefore, even when $\mathcal{M}$ contains partial transforms, we may obtain forbidden subrelations from broken cycles in a precisely the same manner.\medskip

Suppose now that we have a two finite order pairs $\langle W_1, S_1\rangle$ and $\langle W_2,S_2 \rangle$. As with the definitions of forbidden subrelations, the collapse operation never requires us to apply a common transform to two alternatives, but rather allows us to regard two pairs as `mutually exhaustive' so long as they are both images of (up to) two different transforms. Once again, this is unaffected by incompleteness of the domain of these transformations.  Thus the collapse operation, and with it the `collapsed closure' $\mathcal{F}^*$ also remain well-defined, even when $\mathcal{M}$ contains partial transformations.\medskip

Despite these observations, it is unclear however, a priori, whether letting $\mathcal{M}$ consist of partial transformations can potentially lead to novel obstructions to rationalizability. The following theorem, whose proof follows immediately from the observation that the proof of \autoref{generalchar} remains valid for the case of partial transforms, establishes that this is not the case.\medskip

\begin{theorem}\label{gencharpartialfn}
    Let $\mathcal{M}$ consist of partial transformations. Then the data $\langle \succsim^R, \succ^R\rangle$ are strongly acyclic if and only if they are rationalizable by an $\mathcal{M}$-invariant preference. 
\end{theorem}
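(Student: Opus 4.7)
The plan is to verify, step by step, that the proof of \autoref{generalchar} nowhere uses the totality of transformations in $\mathcal{M}$, so the same reasoning applies verbatim once one interprets every expression $\omega(x)$ as carrying the tacit assumption $x \in D_\omega$. The syntactic machinery driving the argument — broken cycles, forbidden subrelations, and the collapse operation — remains well-defined even for partial $\omega$: a broken cycle as in \eqref{bc} is only well-formed when each $x_i, y_i \in D_{\omega_i}$, in which case the associated forbidden subrelation $\langle W, S \rangle$ is obtained by passing from $\omega_i(z)$ back to $z$, a purely symbolic step requiring no evaluation of partial functions. Similarly, the collapse of $\langle W_1, S_1 \rangle$ and $\langle W_2, S_2 \rangle$ requires only that two pairs $(\omega(x), \omega(y))$ and $(\omega'(y), \omega'(x))$ already appear in the respective order pairs, which automatically entails $x, y \in D_\omega$ and $x, y \in D_{\omega'}$.

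Necessity then follows by the same tree-unraveling argument as in the total case: given a derivation of $\langle \varnothing, \varnothing \rangle \in \mathcal{F}^*$, one assembles a finite binary tree whose leaves are broken cycles in the data and whose internal nodes record pairs $(x, y)$ eliminated by collapses. Any $\mathcal{M}$-invariant preference extending the data must rank each such $x$ against $y$, inducing a path to a leaf whose associated broken cycle is then completed via the knock-on effects of invariance. The invariance implications invoked along this path are only of the form $\omega(x) \succeq \omega(y)$ for pairs $(x, y)$ already appearing in the derivation — hence in the appropriate $D_\omega$ — so the argument requires no evaluation of $\omega$ outside its domain.

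For sufficiency, I would replicate the resolution-based argument underlying \autoref{generalchar}: encode $\mathcal{M}$-invariant rationalizability as propositional satisfiability, with clauses for transitivity, completeness, the observed data, and invariance $x \succeq y \Leftrightarrow \omega(x) \succeq \omega(y)$ for each $\omega \in \mathcal{M}$ and each $x, y \in D_\omega$. Partiality simply omits the invariance clauses for pairs outside the relevant domains. Applying \cite{robinson1965machine}'s refutation-completeness theorem to an unsatisfiable encoding yields a resolution proof in the particular combinatorial form used for \autoref{generalchar}, which the inversion technique from that proof reinterprets as a collapse-closure derivation of $\langle \varnothing, \varnothing \rangle$.

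The main — indeed essentially the only — thing to check beyond bookkeeping is that at each resolution step, the witnesses $(\omega(x), \omega(y))$ in the clauses being resolved are inherited from parent clauses rather than newly generated. This is immediate from the form of the encoding: every invariance clause already mentions a pair in $D_\omega$, and resolution propagates literals rather than introducing new ones, so the broken cycles and collapses produced by inversion are automatically well-defined under the partial-function interpretation. With this verification, strong acyclicity and $\mathcal{M}$-invariant rationalizability are equivalent exactly as in \autoref{generalchar}.
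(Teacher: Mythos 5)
Your proposal is correct and takes essentially the same approach as the paper: the paper's argument for this theorem is precisely the observation that nothing in the proof of Theorem 2 (or in the definitions of broken cycles, forbidden subrelations, and collapse) requires the transformations in $\mathcal{M}$ to be total, since the propositional encoding already restricts the invariance clauses (T.5) to pairs $x,y \in D_\omega$, and resolution only propagates literals from parent clauses. Your closing verification — that witnesses $(\omega(x), \omega(y))$ appearing in resolvents are inherited rather than newly generated, so the inverted collapse derivation never evaluates a partial transform outside its domain — is exactly the point the paper relies on, and you make it slightly more explicit than the paper's discussion does.
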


\noindent We illustrate this result below, in the context of the one-urn paradox of \cite{ellsberg1961risk}.

\begin{example}[One-Urn Ellsberg Paradox]
    There is an urn containing $90$ balls. $30$ of these balls are red, and the remainder are either yellow, or black. A ball is to be (uniformly) randomly drawn from the urn; let $S = \{s_r, s_y , s_b\}$ denote the color of this ball, and let $X = 2^S$.\medskip

    Suppose we elicit that a subject believes:
    \[
        s_r \; \succ^R s_b
    \]
and
    \[
        s_r \cup s_y \; \succ^R  s_r \cup s_y
    \]
    where to conserve on notation we have omitted curly braces from singleton sets, and $\succeq^R$ denotes `subjectively deemed more likely.' These observations are trivially inconsistent with any qualitative probability on $X$. To see this via \autoref{gencharpartialfn} observe that, in the notation of \autoref{ordadd}, we have:
    \[
        s_r \; \succ^R s_b
    \]
    and
    \[
        \omega_{s_y}(s_b) \succ^R \omega_{s_y}(s_r),
    \]
    each of which form (trivial) broken cycles, yielding forbidden subrelations:
    \[
        W_1 = \big\{(s_b, s_r)\big\} \quad S_1 = \varnothing
    \]
    and
    \[
        W_2 = \big\{(s_r, s_b)\big\} \quad S_2 = \varnothing,
    \]
    and whose collapse is precisely $\langle \varnothing, \varnothing \rangle$. Thus the data fail to be strongly acyclic, and hence are inconsistent with any qualitative probability.\hfill $\blacksquare$
\end{example}

\section{Out-of-Sample Predictions}

In light of \autoref{generalchar}, the collapse operation provides a purely algorithmic means of evaluating whether or not an $\mathcal{M}$-invariant rationalizing preference exists for the data.  However, the collapse is defined over sets of restrictions, rather than the data itself. In particular, it does not speak to which comparisons \emph{every} $\mathcal{M}$-invariant, rationalizing preference must agree upon. When the data are rationalizable by at least one such preference, we term these comparisons the {\bf out-of-sample predictions} generated by the model and data.\medskip

When $\mathcal{M} = \{\textrm{id}\}$, every ($\mathcal{M}$-invariant) rationalizing preference $\succeq^*$ ranks $x \succeq^* y$ if and only if $x \succsim^R_\intercal y$.  However, as illustrated by \autoref{knockonexample}, when $\mathcal{M}$ is richer, so too are the set of counterfactual predictions generated by the class of $\mathcal{M}$-invariant preferences. Moreover, \autoref{knockonexample} shows that the set of such predictions is richer than either the transitive, or $\mathcal{M}$-invariant closure.\medskip

It turns out, however, that the set of out-of-sample predictions generated by the $\mathcal{M}$-invariant rationalizations of a strongly acyclic data set $\langle \succsim^R, \succ^R\rangle$ are straightforwardly described by collapses.

\begin{theorem}\label{invariantdm}
    Suppose $\langle \succsim^R ,\succ^R\rangle$ is strongly acyclic.  Then $x \succeq^* y$ for every $\mathcal{M}$-invariant rationalization $\succeq^*$ if and only if:
    \[
        \big\langle (y, x), (y,x)\big \rangle \in \mathcal{F}^*,
    \]
    and $x \succ^* y$ for every such rationalization if and only if:
    \[
        \big\langle (y,x), \varnothing \big\rangle \in \mathcal{F}^*.
    \]
\end{theorem}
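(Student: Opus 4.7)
The plan is to establish both halves of the theorem by combining a soundness property of $\mathcal{F}^*$ with a data-augmentation argument that invokes \autoref{generalchar}. I would first prove a soundness lemma by induction on the construction of $\mathcal{F}^*$: every pair $\langle W, S\rangle \in \mathcal{F}^*$ is genuinely forbidden, in the sense that no $\mathcal{M}$-invariant preference $\succeq$ simultaneously satisfies both $W \subseteq\; \succeq$ and $S \subseteq\; \succ$. The base case of forbidden subrelations of broken cycles is direct: extending the gaps as prescribed, together with $\mathcal{M}$-invariance transporting the ranking through each $\omega_i$, closes the cycle into a transitivity violation, with strictness forced either by some original relation of the broken cycle being strict or by the nonemptiness of $S$. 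The inductive step uses completeness: if $\succeq$ extended a collapse $\langle \tilde W, \tilde S\rangle$, its ranking of the collapsed alternatives $x$ and $y$ (transported by invariance along $\omega$ and $\omega'$) would force an extension of one of the two parent pairs, contradicting the inductive hypothesis. This immediately yields the ``$\Leftarrow$'' directions of both statements: $\langle (y,x),(y,x)\rangle \in \mathcal{F}^*$ forbids $y \succ^* x$ and hence forces $x \succeq^* y$ by completeness, while $\langle (y,x),\varnothing\rangle \in \mathcal{F}^*$ forbids $y \succeq^* x$ and hence forces $x \succ^* y$.

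For the ``$\Rightarrow$'' direction of the weak statement I argue contrapositively. Suppose $\langle (y,x),(y,x)\rangle \notin \mathcal{F}^*$, and consider the strictly augmented data $\langle \succsim^R \cup \{(y,x)\}, \succ^R \cup \{(y,x)\}\rangle$. By \autoref{generalchar}, it suffices to show the augmentation is strongly acyclic, since any $\mathcal{M}$-invariant rationalization of it will have $y \succ^* x$, contradicting the assumption that every rationalization has $x \succeq^* y$. To establish strong acyclicity of the augmentation, I would prove a lifting lemma: for every $\langle W, S\rangle \in \mathcal{F}^*_{\mathrm{aug}}$, either $\langle W, S\rangle \in \mathcal{F}^*$ or $\langle W \cup \{(y,x)\}, S \cup \{(y,x)\}\rangle \in \mathcal{F}^*$. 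The base case cuts each augmented broken cycle at every use of the new step $y \succ x$ in its transitive chains, producing an original broken cycle with additional identity-transform gaps at the pair $(y,x)$; placing $(y,x)$ into $S$ recovers strictness in case every strict step of the cycle was destroyed by the cut. The inductive step mirrors each augmented collapse step by step in the original. The crucial observation is that the collapse pair in any augmented derivation can never be $(y,x)$ or $(x,y)$, because every entry of any $W$ along the derivation traces back to a gap of some augmented broken cycle — and these are by definition $\succsim^R_{\mathrm{aug}}$-unrelated pairs, whereas $(y,x)$ and $(x,y)$ are both related once $(y,x)$ has been added. Consequently the extra entry $(y,x)$ sits passively in $W$ and $S$ and propagates unchanged through each collapse. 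At the endpoint $\langle \varnothing, \varnothing\rangle$ of a hypothetical inconsistency derivation in the augmentation, the lift is either $\langle \varnothing, \varnothing\rangle \in \mathcal{F}^*$ (contradicting strong acyclicity of the original data) or $\langle (y,x),(y,x)\rangle \in \mathcal{F}^*$ (contradicting our standing assumption), so no such derivation exists.

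The strict half is parallel: I would augment only weakly, setting $\succsim^R_{\mathrm{aug}} = \succsim^R \cup \{(y,x)\}$ with $\succ^R_{\mathrm{aug}} = \succ^R$ unchanged. Uses of the new weak relation contribute only weak steps in transitive chains, so cutting at them never destroys strictness, and the corresponding lifting adds $(y,x)$ to $W$ but never to $S$. Endpoints of a hypothetical inconsistency derivation are then $\langle \varnothing, \varnothing\rangle$ or $\langle (y,x),\varnothing\rangle$, both contradicting assumptions. The main obstacle throughout is the lifting lemma's inductive step, and in particular the ``unrelated gap'' observation that ensures the collapse pair in any augmented derivation avoids $(y,x)$ and $(x,y)$; once this is secured, the propagation of the extra entry through collapses is essentially bookkeeping, and \autoref{generalchar} together with the soundness lemma complete the argument.
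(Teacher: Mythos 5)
Your proposal is correct, and it takes a genuinely different route from the paper's proof. The paper works entirely inside the propositional framework it built for \autoref{generalchar}: it augments the clause set $\Phi$ with the unit clause $[\texttt{y} \succ \texttt{x}]$, shows unsatisfiability, performs unit propagation to produce an unsatisfiable $\Phi'$, extracts a negative-resolution refutation, lifts each clause by reinserting $\neg[\texttt{y} \succ \texttt{x}]$, and then re-runs the argument of \autoref{resolutioncomplete} on the resulting partial derivation of $\neg[\texttt{y}\succ\texttt{x}]$. You instead stay entirely at the level of data and order pairs: you augment $\langle \succsim^R, \succ^R\rangle$ itself, invoke \autoref{generalchar} as a black box, and carry the burden of the argument in a lifting lemma that pushes the extra pair $(y,x)$ through broken cycles and collapses in $\mathcal{F}^*_{\mathrm{aug}}$. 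The two are dual in a precise sense --- the paper's unit propagation on clauses mirrors your data augmentation, and the paper's reinsertion of $\neg[\texttt{y}\succ\texttt{x}]$ mirrors your reinsertion of $(y,x)$ into $W$ and $S$ --- but yours buys the avoidance of any second visit to the resolution machinery, replacing it with purely combinatorial bookkeeping on collapses, while the paper's stays unified within the logic formalism. Your soundness lemma for the reverse directions is also correct: the base case closes a broken cycle into a $\succeq$-cycle using invariance, and the inductive step uses completeness of $\succeq$ to force an extension of one of the two parent forbidden subrelations, exactly as you describe.

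One delicate point deserves an explicit flag. In the base case of your lifting lemma, cutting an augmented chain at a use of the new relation introduces a gap $(y,x)$ with the identity transform, and for the reassembled object to be a broken cycle over the original data the pair $x,y$ must be $\succsim^R$-unrelated there. If $x$ and $y$ were already $\succsim^R$-related, the cut-and-reassemble step does not produce a valid broken cycle. This is not a defect specific to your proof --- the paper's own proof relies on the same implicit hypothesis, since its reflexive broken cycle $x \succsim^R x$, $y \succsim^R y$ in the case analysis of \autoref{structurelemma} likewise requires $x$ and $y$ to be $\succsim^R$-unrelated for that object to meet the definition --- but it should be stated, since the theorem as written quantifies over all $x,y$. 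You should also state and separately verify the ``relatedness persistence'' observation you rely on: that once $(y,x)$ is added, every pair actually appearing in any $W$ along an $\mathcal{F}^*_{\mathrm{aug}}$-derivation is an augmented gap and hence $\succsim^R_{\mathrm{aug}}$-unrelated, so the cancelled entry in any collapse can never coincide with the inserted $(y,x)$; once that is fixed, the inductive step is, as you say, bookkeeping, and both the weak and strict halves go through with the appropriate choice of where to place $(y,x)$ in $S$.
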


\noindent If $\big\langle (y, x), (y,x)\big \rangle \in \mathcal{F}^*$, then there is a constraint on the extension problem requiring no rationalization to rank $y \succ^* x$. In this case, every rationalization must rank $x \succeq^* y$.\footnote{Analogously, $\big\langle (y,x), \varnothing \big\rangle \in \mathcal{F}^*$ encodes the restriction that no rationalization can rank $y \succeq^* x$ and hence every rationalization must rank $x \succ^* y$.}  As such, this condition is clearly necessary.  However, the primary content of \autoref{invariantdm} is that look only at such `singleton' restriction sets is also sufficient: the set of comparisons forced by restrictions of this form are, in fact, the \emph{only} comparisons agreed upon by every rationalization.  This provides a complete solution to (\hyperlink{q2}{Q.2}), for any choice of invariance axioms, data, or domain.

\section{Conclusion}

This paper studies the problem of characterizing the empirical content of structured families of preferences, satisfying axioms beyond rationality alone.  The basic observation underlying our results is that many of the most economically important and widely used decision-theoretic axioms share a common mathematical structure: they are what we have termed `invariance axioms.' Our main results provide characterizations of the empirical content and out-of-sample predictions generated by \emph{arbitrary} sets of such axioms.  The advantage of this abstraction is that it provides a unified theory and framework for studying a wide range of seemingly disparate economic models that had previously only been studied in isolation. By clarifying the common underlying structure at play, we hope that further work may build on the results here to develop further `universal' revealed preference characterizations.\medskip

\pagebreak 

\section*{Appendix}
\begin{appendix}
    \section{Proof of \autoref{commutativechar}}

Recall that a binary relation relation $\succeq \; \subseteq X \times X$ is $\mathcal{M}$-{\bf invariant} if:
\[
    x \succeq y \implies \omega(x) \succeq \omega(y)
\]
and
\[
    x \succ y \implies \omega(x) \succ \omega(y),
\]
for all $x,y \in X$ and $\omega \in \mathcal{M}$.  We say that $\succeq$ is {\bf strongly} $\mathcal{M}$-invariant if:
\[
    x \succeq y \iff \omega(x) \succeq \omega(y)
\]
and
\[
    x \succ y \iff \omega(x) \succ \omega(y).
\]
\begin{lemma}\label{lem:transitive}Suppose that an acyclic relation $\succsim^R$ is $\mathcal{M}$-invariant.  Then so is its transitive closure, $\succsim^R_\intercal$.\end{lemma}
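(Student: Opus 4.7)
The plan is to verify $\mathcal{M}$-invariance for both components $\succsim^R_\intercal$ and $\succ^R_\intercal$ of the transitive closure separately, by unpacking the chain definition and applying the invariance of the original relations link by link. The argument is structurally parallel in both cases, with the strict case requiring one extra observation about which link in the chain is strict.

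First, I would fix an arbitrary $\omega \in \mathcal{M}$ and consider $x,y \in X$ with $x \succsim^R_\intercal y$. By the definition of the weak transitive closure, there exists a finite sequence $x = x_0, x_1, \ldots, x_N = y$ such that $x_{i-1} \succsim^R x_i$ for every $i \in \{1,\ldots,N\}$. Applying the $\mathcal{M}$-invariance of $\succsim^R$ link by link produces $\omega(x_{i-1}) \succsim^R \omega(x_i)$ for every $i$, and concatenating these relations yields a chain witnessing $\omega(x) \succsim^R_\intercal \omega(y)$, as required.

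For the strict component, I would suppose $x \succ^R_\intercal y$, so that there is a chain witnessing $x \succsim^R_\intercal y$ in which at least one link, say $x_{j-1} \succ^R x_j$, belongs to $\succ^R$. I would then apply the strict $\mathcal{M}$-invariance at that link to get $\omega(x_{j-1}) \succ^R \omega(x_j)$, and the weak invariance of $\succsim^R$ at every other link. The resulting image chain from $\omega(x)$ to $\omega(y)$ has at least one strict link, so $\omega(x) \succ^R_\intercal \omega(y)$ by the paper's definition of $\succ^R_\intercal$.

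I do not anticipate a substantive obstacle: the proof is essentially bookkeeping over chains, and both implications follow directly from the definitions together with the per-link application of $\mathcal{M}$-invariance. The acyclicity hypothesis does not appear to be needed for the preservation of invariance itself; it is presumably carried for compatibility with the surrounding framework (e.g.\ ensuring $\succ^R_\intercal$ remains irreflexive so that $\langle \succsim^R_\intercal, \succ^R_\intercal\rangle$ continues to form a bona fide order pair), rather than for this specific deduction.
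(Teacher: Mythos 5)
There is a genuine gap: you are proving the wrong invariance claim. In the paper's appendix, $\mathcal{M}$-invariance of a single binary relation $\succeq$ is defined to require preservation of both $\succeq$ \emph{and its asymmetric component} $\succ$; this is what feeds into \autoref{lem:strongcoherent2} and the proof of \autoref{commutativechar}, where the transitive closure must be an $\mathcal{M}$-invariant \emph{preorder} (whose strict part is, by definition, its asymmetric component). What you establish instead is that the derived pair $\langle \succsim^R_\intercal, \succ^R_\intercal\rangle$ is preserved, where $\succ^R_\intercal$ is the ``chain containing at least one link in the designated $\succ^R$'' relation from the body of the paper. These are not the same thing. Preserving $\succ^R_\intercal$ in your sense does not tell you that if $x \succsim^R_\intercal y$ and $\neg(y \succsim^R_\intercal x)$, then $\neg(\omega(y) \succsim^R_\intercal \omega(x))$; for that you would need something like strong invariance, which you have not established.

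The paper's second step is precisely to handle the asymmetric component, and this is exactly where acyclicity enters. Given $x \succsim^R_\intercal y$ but $\neg(y \succsim^R_\intercal x)$, one observes that some link $x_i \succsim^R x_{i+1}$ of the witnessing chain must be \emph{asymmetric} in $\succsim^R$ (otherwise the whole chain reverses). Invariance then gives an image chain $\omega(x) = \omega(x_1) \succsim^R \cdots \succsim^R \omega(x_K) = \omega(y)$ with $\omega(x_i)$ asymmetric over $\omega(x_{i+1})$. If additionally $\omega(y) \succsim^R_\intercal \omega(x)$, concatenating the two chains yields a cycle through an asymmetric link, which is what acyclicity forbids. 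Your closing remark that ``the acyclicity hypothesis does not appear to be needed'' is the tell-tale sign that the argument has drifted off target: acyclicity is essential to the conclusion the paper actually needs, and a proof that does not invoke it is proving a different, weaker statement.
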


\begin{proof}  First, let $x,y \in X$ such that $x\succsim^R_\intercal y$.  Then there exists $x_1, \ldots, x_K \in X$, $K \ge 2$, such that $x = x_1 \succsim^R \cdots \succsim^R x_K =y$.  By $\mathcal{M}$-invariance of $\succsim^R$, for every $\omega \in \mathcal{M}$, we also have that $\omega(x)=\omega(x_1) \succsim^R \cdots \succsim^R \omega(x_k) = \omega(y)$, hence $\omega(x) \succsim^R_\intercal \omega(y)$ as desired.  \medskip

Now, suppose that $x\succsim^R_\intercal y$ but it is not the case that $y\succsim^R_\intercal x$.  We want to show that it is not the case that $\omega(y) \succsim^R_\intercal \omega(x)$ for any $\omega \in \mathcal{M}$. As $x \succsim^R_\intercal y$, there exist $x_1,\ldots, x_K \in X$, $K \ge 2$ such that $x= x_1 \succsim^R \cdots \succsim^R x_K = y$; since additionally it is not the case that $y \succsim^R_\intercal x$,  for some $1 \le i \le K-1$, we have $x_i \succ x_{i+1}$.  Consequently, for any $\omega \in \mathcal{M}$, by 
 $\mathcal{M}$-invariance, we must also have $\omega(x) = \omega(x_1) \succsim^R \cdots \succsim^R \omega(x_K) = \omega(y)$, where $\omega(x_i) \succ \omega(x_{i+1})$.  By acyclicity of $\succsim^R$, it is then not the case that $\omega(y) \succsim^R_\intercal \omega(x)$, and hence $\omega(x) \succ^R_\intercal \omega(y)$ as desired. As $\omega\in \mathcal{M}$ was arbitrary, the result follows.
 \end{proof}


\begin{lemma}\label{lem:strongcoherent2}Suppose $\mathcal{M}$ is a commutative family.  Then every $\mathcal{M}$-invariant preorder has a strongly $\mathcal{M}$-invariant preorder extension.  
\end{lemma}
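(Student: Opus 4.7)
The plan is to explicitly construct the extension. Given the $\mathcal{M}$-invariant preorder $\succeq$, define a new relation $\succeq^*$ on $X$ by declaring
\[
    x \succeq^* y \quad \iff \quad \text{there exists } \omega \in \mathcal{M} \text{ such that } \omega(x) \succeq \omega(y).
\]
Taking $\omega = \textrm{id}$ shows immediately that $\succeq^*$ extends $\succeq$, and that $\succeq^*$ is reflexive.

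The main step is verifying transitivity of $\succeq^*$, and this is where commutativity does all the work. Suppose $x \succeq^* y$ and $y \succeq^* z$, witnessed by $\omega_1, \omega_2 \in \mathcal{M}$ respectively, so that $\omega_1(x) \succeq \omega_1(y)$ and $\omega_2(y) \succeq \omega_2(z)$. Applying $\mathcal{M}$-invariance of $\succeq$ to the first with $\omega_2$, and to the second with $\omega_1$, yields $\omega_2(\omega_1(x)) \succeq \omega_2(\omega_1(y))$ and $\omega_1(\omega_2(y)) \succeq \omega_1(\omega_2(z))$. Since $\mathcal{M}$ is commutative, both of these are relations between images under the single transformation $\omega_1 \circ \omega_2 \in \mathcal{M}$; transitivity of $\succeq$ then gives $(\omega_1 \circ \omega_2)(x) \succeq (\omega_1 \circ \omega_2)(z)$, so $x \succeq^* z$. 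Thus $\succeq^*$ is a preorder.

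It remains to check strong $\mathcal{M}$-invariance. For the forward direction of the biconditional, if $x \succeq^* y$ via witness $\omega'$, then for any $\omega \in \mathcal{M}$, $\mathcal{M}$-invariance of $\succeq$ and commutativity give $\omega'(\omega(x)) = \omega(\omega'(x)) \succeq \omega(\omega'(y)) = \omega'(\omega(y))$, so $\omega(x) \succeq^* \omega(y)$ with witness $\omega'$. For the reverse direction, if $\omega(x) \succeq^* \omega(y)$ via some $\omega'$, then $(\omega' \circ \omega)(x) \succeq (\omega' \circ \omega)(y)$, and since $\mathcal{M}$ is closed under composition, $\omega' \circ \omega \in \mathcal{M}$ witnesses $x \succeq^* y$. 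Strong invariance of the asymmetric part $\succ^*$ then follows automatically from strong invariance of $\succeq^*$, since $\succ^*$ is defined as the asymmetric component of $\succeq^*$.

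The main obstacle is the transitivity step; everything else is essentially immediate from the definitions. Commutativity is used in exactly one place---to ensure the two witnesses $\omega_1$ and $\omega_2$ can be amalgamated into a common composition applied uniformly to $x$, $y$, and $z$---and the lemma would fail without it, as the general (noncommutative) examples in \autoref{koopmansrevisited} already illustrate.
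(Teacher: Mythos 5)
Your construction is exactly the one the paper uses, and your transitivity and strong-invariance arguments match the paper's step for step. There is, however, one small but genuine gap: you claim that taking $\omega = \mathrm{id}$ ``shows immediately that $\succeq^*$ extends $\succeq$,'' but this only establishes $\succeq \ \subseteq\ \succeq^*$. For a preorder extension you must also verify that the \emph{strict} part is preserved, i.e.\ that $x \succ y$ implies $x \succ^* y$. This is not automatic from containment of the weak parts: a priori, $\succeq^*$ could merge a strict pair into an indifference (the trivial preorder where everything is indifferent weakly extends any $\succeq$ but is not an extension in the relevant sense). The fix is short and uses the strict half of $\mathcal{M}$-invariance of the original $\succeq$: if $x \succ y$ and, for contradiction, $y \succeq^* x$, then some $\omega$ satisfies $\omega(y) \succeq \omega(x)$; but $x \succ y$ and $\mathcal{M}$-invariance give $\omega(x) \succ \omega(y)$, a contradiction. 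The paper carries out exactly this argument, and you should add it. Everything else---the commutativity trick in the transitivity proof and the observation that strong invariance of the asymmetric component follows formally from strong invariance of the weak relation---is correct and is precisely what the paper does.
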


\begin{proof} 
Let $\succsim^R$ be a weakly $\mathcal{M}$-invariant preorder.  Define $\succeq$ via $x \succeq y$ if and only if there exists $\omega\in \mathcal{M}$ such that $\omega(x) \succsim^R \omega(y)$.\footnote{Recall that as $\mathcal{M}$ is closed under composition, this is equivalent to the existence of $\omega_1,\ldots, \omega_K \in \mathcal{M}$ such that $(\omega_1\circ\cdots \circ \omega_K)(x) \succsim^R (\omega_1\circ \cdots \circ \omega_K)(y)$.} \medskip

Since the identity function $\textrm{id} \in \mathcal{M}$, it follows immediately that $\succsim^R \; \subseteq\;  \succeq$. Suppose, now, that $x \succ^R y$ and, for purposes of contradiction that additionally $y \succeq x$. Then there exist $\omega \in \mathcal{M}$ for which $\omega(y) \succsim^R \omega(x)$. Since $\succsim^R$ is $\mathcal{M}$-invariant, this implies that both: $\omega(y) \succsim^R \omega(x)$ and $\omega(x) \succ^R \omega(y)$, a contradiction. Thus $\succ^R \;\subseteq \, \; \succ$ as well, and hence $\succeq$ defines an extension of $\succsim^R$.\medskip

We now claim that $\succeq$ is transitive.  Suppose that $x \succeq y \succeq z$.  As $x\succeq y$, there exist $\omega \in \mathcal{M}$ for which $\omega(x) \succsim^R \omega(y)$.  Similarly, as $y \succeq z$, there exist $\omega'\in \mathcal{M}$ for which $\omega'(y)\succeq \omega'(z)$.  By the $\mathcal{M}$-invariance of $\succsim^R$ and by commutativity of $\mathcal{M}$, we obtain $(\omega \circ \omega')(x)\succsim^R (\omega \circ \omega')(y)$ and $(\omega \circ \omega')(y)\succsim^R (\omega \circ \omega')(z)$, and hence $(\omega \circ \omega')(x)\succsim^R (\omega \circ \omega')(z)$ by transitivity of $\succsim^R$. But this means $x \succeq z$, thus we conclude $\succeq$ is transitive.\medskip

We now show that $\succeq$ is $\mathcal{M}$-invariant.  Suppose that $x \succeq y$ and let $\bar{\omega}\in \mathcal{M}$.  There exists $\omega\in M$ for which $\omega(x) \succsim^R \omega(y)$.  By commutativity of $\mathcal{M}$ and the $\mathcal{M}$-invariance of $\succsim^R$, we have $(\omega \circ \bar{\omega})(x) \succeq (\omega \circ \bar{\omega})(y)$, and hence $\omega(x) \succeq \omega(y)$.  Suppose now, additionally, that $x \succ y$ and, for sake of contradiction that for some $\omega'\in \mathcal{M}$, $\omega'(y) \succeq \omega'(x)$.  Then there exists $\omega''\in \mathcal{M}$ for which $(\omega'' \circ \omega')(y) \succsim^R (\omega'' \circ \omega')(x)$, which by definition implies that $y  \succeq x$, a contradiction. Hence $\succeq$ is $\mathcal{M}$-invariant.\medskip

Finally, we show that $\succeq$ is strongly $\mathcal{M}$-invariant. Suppose that $\omega(x) \succeq \omega(y)$. Then there exist $\omega' \in \mathcal{M}$ for which $(\omega' \circ \omega)(x) \succsim^R (\omega' \circ \omega)(y)$, which implies $x \succeq y$.  Suppose further $\omega(y) \succeq \omega(x)$ is false but, for sake of contradiction, that $y \succeq x$.  Then there exist $\omega'' \in \mathcal{M}$ such that $\omega''(y)\succeq \omega''(x)$.  By $\mathcal{M}$-invariance of $\succsim^R$, $(\omega \circ \omega'')(y) \succsim^R (\omega \circ \omega'')(x)$.  But then by the commutativity of $\mathcal{M}$, we conclude $\omega(y) \succeq \omega(x)$, a contradiction.  The result follows. \end{proof}

 \begin{lemma}\label{lem:acyclic2}Let $\mathcal{M}$ be a commutative family.  Let $\succeq$ be an $\mathcal{M}$-invariant preorder, and $w,z\in X$ be $\succeq$-unrelated (and hence distinct) elements of $X$. Then there is an acyclic $\mathcal{M}$-invariant extension $\succeq’$ of $\succeq$ that renders $w$ and $z$ comparable.\end{lemma}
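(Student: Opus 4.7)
I plan to prove this by forcing comparability of $w$ and $z$ in each of the two possible directions, and showing that at least one of the resulting extensions succeeds. Define
\[
    \succeq_1 \; = \; \mathrm{tr.cl.}\Big(\succeq \; \cup \; \big\{(\omega(w),\omega(z)) : \omega \in \mathcal{M}\big\}\Big),
\]
and let $\succeq_2$ be defined analogously with $w$ and $z$ swapped. Each $\succeq_i$ is a preorder by construction, and each augmenting set is closed under every $\omega' \in \mathcal{M}$ by commutativity. Consequently, provided the underlying base relation is acyclic, \autoref{lem:transitive} will render $\succeq_i$ itself $\mathcal{M}$-invariant. The lemma thus reduces to showing that at least one of these two bases is acyclic; equivalently, that at least one $\succeq_i$ satisfies $\succ \; \subseteq \; \succ_i$.

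Suppose, for contradiction, that both fail: some $a \succ b$ satisfies $b \succeq_1 a$, and some $a' \succ b'$ satisfies $b' \succeq_2 a'$. Unpacking $b \succeq_1 a$ as a chain of edges in the base of $\succeq_1$ and contracting consecutive original edges via transitivity of $\succeq$ produces $\omega_1,\ldots,\omega_k \in \mathcal{M}$ with $k \ge 1$ (else $b \succeq a$, contradicting $a \succ b$) such that in $\succeq$,
\[
    b \succeq \omega_1(w), \; \omega_1(z) \succeq \omega_2(w), \; \ldots, \; \omega_{k-1}(z) \succeq \omega_k(w), \; \omega_k(z) \succeq a.
\]
Splicing in the original strict step $a \succ b$ yields in $\succeq$:
\[
    (\ast) \qquad \omega_i(z) \succeq \omega_{i+1}(w) \; \text{for } i<k, \qquad \omega_k(z) \succ \omega_1(w).
\]
An identical analysis of $b' \succeq_2 a'$ produces $\omega'_1,\ldots,\omega'_m \in \mathcal{M}$ with
\[
    (\ast\ast) \qquad \omega'_j(w) \succeq \omega'_{j+1}(z) \; \text{for } j<m, \qquad \omega'_m(w) \succ \omega'_1(z),
\]
again in $\succeq$.

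The main step is to interlock $(\ast)$ and $(\ast\ast)$ into an outright strict cycle in $\succeq$ itself, contradicting that it is a preorder. Using the $\mathcal{M}$-invariance of $\succeq$, pre-compose the $i$th relation of $(\ast)$ with $\omega'_j$ and the $j$th relation of $(\ast\ast)$ with $\omega_i$; by commutativity of $\mathcal{M}$, the resulting families of relations in $\succeq$ can be regarded as indexed on a $k \times m$ grid with vertices $A_{i,j} := (\omega_i \omega'_j)(w)$ and $B_{i,j} := (\omega_i \omega'_j)(z)$, giving
\[
    B_{i,j} \succeq A_{i+1,j} \; (i<k), \; B_{k,j} \succ A_{1,j}; \qquad A_{i,j} \succeq B_{i,j+1} \; (j<m), \; A_{i,m} \succ B_{i,1}.
\]
Following the zig-zag path $A_{1,1} \succeq B_{1,2} \succeq A_{2,2} \succeq B_{2,3} \succeq A_{3,3} \succeq \cdots$, with indices taken modulo $k$ and $m$ respectively, the path closes after exactly $2\,\mathrm{lcm}(k,m)$ edges while traversing $\mathrm{lcm}(k,m)/k$ strict wraparounds in $i$ and $\mathrm{lcm}(k,m)/m$ strict wraparounds in $j$. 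This is a strict cycle in $\succeq$, the desired contradiction. The main obstacle is the combinatorial bookkeeping of these grid indices; once $(\ast)$ and $(\ast\ast)$ are in hand, the rest is a routine application of $\mathcal{M}$-invariance, commutativity, and transitivity of $\succeq$.
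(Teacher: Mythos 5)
Your interlocking $k \times m$ grid zig-zag, which combines the two failure patterns $(\ast)$ and $(\ast\ast)$ into a single strict cycle, is a correct and genuinely different construction from the paper's, which instead builds one long chain by inserting carefully counted repetitions. The derivations of $(\ast)$ and $(\ast\ast)$ themselves are also sound. However, there is a gap at the step where you claim that, provided the base relation is acyclic, Lemma~\ref{lem:transitive} renders $\succeq_i$ itself $\mathcal{M}$-invariant. Lemma~\ref{lem:transitive} also requires the base to be $\mathcal{M}$-invariant, which includes the clause $x \succ y \implies \omega(x) \succ \omega(y)$ for the asymmetric part, and you only verify that the augmenting set is closed under $\mathcal{M}$, i.e.\ \emph{weak} invariance. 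The asymmetric part of $R_1 := {\succeq}\cup\{(\omega(w),\omega(z)):\omega\in\mathcal{M}\}$ need not be $\mathcal{M}$-invariant: $(w,z)$ can be asymmetric in $R_1$ while some image $(\omega(w),\omega(z))$ is already symmetric because $\omega(z) \succeq \omega(w)$ held in the original preorder.

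For a concrete failure, take $X=\{w,z,w',z'\}$, $\mathcal{M}=\{\mathrm{id},\omega\}$ with $\omega(w)=w'$, $\omega(z)=z'$, $\omega$ the identity on $\{w',z'\}$, and $\succeq$ equal to reflexivity together with $w' \sim z'$. Here $\succ=\varnothing$, so both of your candidates satisfy $\succ\subseteq\succ_i$ vacuously and your argument halts, declaring one of them a success; yet $\succeq_1$ has $w \succ_1 z$ while $\omega(w)=w' \sim_1 z'=\omega(z)$, so $\succeq_1$ is \emph{not} $\mathcal{M}$-invariant, and symmetrically for $\succeq_2$. The only valid acyclic $\mathcal{M}$-invariant extension here is the one that sets $w \sim z$ --- which is exactly the third, indifference branch in the paper's proof, adding $\mathbf{f}(w)\sim'\mathbf{f}(z)$ for every $\mathbf{f}$. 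Your two-branch decomposition therefore does not exhaust the relevant cases. To repair the argument, you would need to introduce a third extension that forces indifference between $w$ and $z$, extract from its failure a third chain pattern, and extend the interlocking construction so that the contradiction is reached only when all three branches are blocked (the paper's treatment of the sets $A_{ww}$, $A_{zz}$, $A_{wz}$, $A_{zw}$ is precisely the bookkeeping needed for that three-way version).
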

 
 \begin{proof}
 For each $\omega \in \mathcal{M}$, let $e_\omega : \mathcal{M} \to \mathbb{Z}$ denote the function satisfying $\omega \mapsto 1$ and $\omega' \mapsto 0$ for all $\omega \neq \omega'$. By commutativity, any finite string of compositions of functions in $\mathcal{M}$ may be associated with a finitely-supported, non-negative valued, function $\mathcal{M} \to \mathbb{Z}$ via:
\[
    \omega_1^{n_1} \circ \omega_2^{n_2} \circ \cdots \circ \omega_K^{n_K} \quad \mapsto \quad n_1 e_{\omega_1} + \cdots + n_K e_{\omega_K},
\]
where $\omega^n$ denotes the $n$-fold composition of $\omega$ with itself. Let $\mathcal{M}^*$ denote the set of all such functions; conversely, every element of $\mathcal{M}^*$ clearly corresponds to some (composition of elements in $\mathcal{M}$ and hence) element of $\mathcal{M}$. Note that if $\mathbf{f}, \mathbf{g} \in \mathcal{M}^*$ represent finite strings of transformations in $\mathcal{M}$, then $\mathbf{f} + \mathbf{g}$ represents their composition.  For the remainder of this proof, we will freely associate elements of $\mathcal{M}$ with some fixed choice of representative in $\mathcal{M}^*$; the non-uniqueness of this selection will be irrelevant.\medskip 
 
 Suppose now, for sake of obtaining a contradiction, that no acyclic, $\mathcal{M}$-invariant extension of $\succeq$ exists that compares $w$ and $z$.  Thus every $\mathcal{M}$-invariant binary relation that extends $\succeq$ and renders $w$ and $z$ comparable, contains some cycle; in particular, the minimal such extensions obtained either (i) by adding $w \succ' z$ and $\mathbf{f}(w) \succ' \mathbf{f}(z)$ for all $\mathbf{f}$ associated with some finite composition of elements of $\mathcal{M}$, (ii) by adding $z \succ' w$ and all $\mathbf{f}(z) \succ' \mathbf{f}(w)$, or (iii) by adding $z \sim' w$ and all $\mathbf{f}(z) \sim' \mathbf{f}(w)$, must contain some cycle.  Consider first $\succeq' \, = \; \succeq  \cup  \succeq^*$, where $\succeq^*$ contains all relations of the form $w \succ^* z$ and $\mathbf{f}(w) \succ^* \mathbf{f}(z)$ for all finite compositions of elements of $\mathcal{M}$, $\mathbf{f}$.  Since $\succeq$ is a preorder, it follows there exists a cycle in $\succeq'$ composed of relations of two forms:
\begin{equation}\label{cycle1}
\begin{aligned}
\mathbf{a}^1(z) \succeq \mathbf{a}^2(w) & \quad & \mathbf{a}^2(w) \succ^*_\intercal \mathbf{a}^2(z)\\
 \vdots \quad \quad \quad \quad & \quad &  \vdots \quad \quad \quad \quad \\
 \mathbf{a}^{I-1}(z) \succeq \mathbf{a}^I(w) & \quad & \mathbf{a}^I(w) \succ^*_\intercal \mathbf{a}^I(z)\\
\mathbf{a}^I(z) \succeq \mathbf{a}^1(w)  & \quad & \mathbf{a}^1(w) \succ^*_\intercal \mathbf{a}^1(z) 
\end{aligned}
\end{equation}
for some $x \in X$, where the left column consists of relations in $\succeq$ and the right sequences solely of relations in $\succeq' \setminus \succeq$.  Note that $I \ge 2$, and without loss of generality, each $\mathbf{a}^i$ is distinct.\footnote{If $I=1$, then we have $\mathbf{a}^1(z) \succeq x$ and $x \succeq \mathbf{a}^1(w)$, hence $\mathbf{a}^1(z)\succeq \mathbf{a}^1(w)$. Since $\succeq$ is $\mathcal{M}$-invariant, this would imply $w$ and $v$ are $\succeq$-related, which is false.}\medskip

Analogously, if $\succeq' \; = \; \succeq \; \cup \; \succeq^*$, where $\succeq^*$ contains all relations of the form $z \succ^* w$ and $\mathbf{f}(z) \succ^* \mathbf{f}(w)$ for finite compositions $\mathbf{f}$, then there exists a cycle of the form:
\begin{equation}\label{cycle2}
\begin{aligned}
\mathbf{b}^1(w) \succeq \mathbf{b}^2(z) & \quad & \mathbf{b}^2(z) \succ^*_\intercal \mathbf{b}^2(w)\\
 \vdots \quad \quad \quad \quad & \quad &  \vdots \quad \quad \quad \quad \\
 \mathbf{b}^{J-1}(w) \succeq \mathbf{b}^J(z) & \quad & \mathbf{b}^J(z) \succ^*_\intercal \mathbf{b}^J(w)\\
\mathbf{b}^J(w) \succeq  \mathbf{b}^1(z) & \quad & \mathbf{b}^1(z) \succ^*_\intercal \mathbf{b}^1(w)
\end{aligned}
\end{equation}
for some $x' \in X$, where again the left column consists of relations in $\succeq$, the right solely of sequences of relations in $\succeq' \setminus \succeq$, $J \ge 2$, and each $\mathbf{b}^j$ unique.\medskip 

Finally, suppose $\succeq' \; = \; \succeq \; \cup \; \succeq^*$, where $\succeq^*$ contains all relations of the form $z \sim^* w$ and $\mathbf{f}(z) \sim^* \mathbf{f}(w)$ for finite compositions $\mathbf{f}$.  By hypothesis, there is a cycle of the form:
 \begin{equation}\label{cycle3}
\begin{aligned}
\mathbf{c}^1(y_1) \succeq \mathbf{c}^2(x_2) & \quad & \mathbf{c}^2(x_2) \sim^*_\intercal \mathbf{c}^2(y_2)\\
 \vdots \quad \quad \quad \quad & \quad &  \vdots \quad \quad \quad \quad \\
 \mathbf{c}^{K-1}(y_{K-1}) \succeq \mathbf{c}^K(x_K) & \quad & \mathbf{c}^K(a_K) \sim^*_\intercal \mathbf{c}^K(y_K)\\
\mathbf{c}^K(y_K) \succeq  \mathbf{c}^1(x_1) & \quad & \mathbf{c}^1(x_1) \sim^*_\intercal \mathbf{c}^1(y_1)
\end{aligned}
\end{equation}
where at least one relation in the left-hand column is strict, $K \ge 2$, each $\mathbf{c}^k$ is unique, and for all $k = 1,\ldots, K$, $\{x_k,y_k\} = \{w,z\}$.\medskip

Now, define:
\begin{equation*} 
\begin{aligned}
\mathbf{p}^i & = \mathbf{a}^{i+1} - \mathbf{a}^i\\
\mathbf{q}^j & = \mathbf{b}^{j+1} - \mathbf{b}^j\\
\mathbf{r}^k & = \mathbf{c}^{k+1} - \mathbf{c}^k,\\
\end{aligned}
\end{equation*}
where we interpret indices $I + 1, J+1, K+1 \equiv 1$.  Note that each $\mathbf{p}^i, \mathbf{q}^j,$ and  $\mathbf{r}^k$ is not equal to the zero function $\mathbf{0}$ and, by construction:
\begin{equation*}
    \sum_{i =1}^I \mathbf{p}^i  = \sum_{j =1}^J \mathbf{q}^j = \sum_{k =1}^K \mathbf{r}^k  =  \mathbf{0}.
\end{equation*}
Consider the sets:
\begin{equation*}
\begin{aligned}
    \tilde{A}_{wz} & = \big\{\mathbf{r}^k \; \vert \; y_k = w, \; x_{k+1} = z\big\} \\
    \tilde{A}_{zw} & = \big\{\mathbf{r}^k \; \vert \; y_k = z, \; x_{k+1} = w\big\} \\
    \tilde{A}_{ww} & = \big\{\mathbf{r}^k \; \vert \; y_k = w, \; x_{k+1} = w\big\} \\
    \tilde{A}_{zz} & = \big\{\mathbf{r}^k \; \vert \; y_k = z, \; x_{k+1} = z\big\}.
\end{aligned}
\end{equation*}
Clearly these sets cover $\{\mathbf{r}^1, \ldots, \mathbf{r}^K\}$. However, they may not define a partition. Thus let:
\begin{equation*}
\begin{aligned}
    A_{wz} & = \tilde{A}_{wz}\\
    A_{zw} & = \tilde{A}_{zw} \setminus \tilde{A}_{wz}\\
    A_{ww} & = \tilde{A}_{ww} \setminus \tilde{A}_{zw} \setminus \tilde{A}_{wz}\\
    A_{zz} &= \tilde{A}_{zz} \setminus \tilde{A}_{ww} \setminus \tilde{A}_{zw} \setminus \tilde{A}_{wz},
\end{aligned}
\end{equation*}
if these sets are non-empty, and otherwise define them as $\{\mathbf{0}\}$.  By hypothesis, at least one of the $A$ sets must contain non-zero elements. Note that each element of  $\{\mathbf{r}^1, \ldots, \mathbf{r}^K\}$ is contained in exactly one set in the collection $\{A_{wz}, A_{zw}, A_{ww}, A_{zz}\}$.  Let $\{\mathbf{s}^m_{wz}\}_{m=1}^{\vert A_{wz}\vert}$ (resp. $\{\mathbf{s}^m_{zw}\}_{m=1}^{\vert A_{zw}\vert}$, $\{\mathbf{s}^m_{ww}\}_{m=1}^{\vert A_{ww}\vert}$, and $\{\mathbf{s}^m_{zz}\}_{m=1}^{\vert A_{zz}\vert}$) denote enumerations of $A_{wz}$ (resp. $A_{zw}, A_{ww},$ and $A_{zz}$).\medskip

We now establish a contradiction, by showing that $\succeq$ contains a cycle, contrary to our hypothesis that it is a preorder.  Let $\bar{\mathbf{h}}$ denote a sufficiently large vector in $\mathcal{M}^*$.\footnote{Sufficiently in the sense only that each vector in the following sequence remain non-negative valued.}  We will consider two cases in turn.\medskip

\noindent \textbf{Case 1:  $|A_{wz}|+|A_{zw}|>0$.}\medskip

To build our cycle, we first define two chains in $\succeq$ which will prove important in our construction.\footnote{The first chain indexes by $|A_{wz}|$ and the second indexes by $|A_{zw}|$; if either of these are zero, these chains are trivial.} By the top-left relation in \eqref{cycle1}, we have:
\[
    \mathbf{a^1}(z) \succeq \mathbf{a^2}(w).
\]
By $\mathcal{M}$-invariance, this implies:
\[
    (\bar{\mathbf{h}} + \mathbf{a^1})(z) \succeq (\bar{\mathbf{h}} + \mathbf{a^2})(w),
\]
and hence, so long as $\bar{\mathbf{h}}$ is large enough, i.e. $\bar{\mathbf{h}} + \mathbf{p}^1 \ge \mathbf{0}$, we have:
\[
    (\bar{\mathbf{h}})(z) \succeq (\bar{\mathbf{h}} + \mathbf{p}^1)(w),
\]
by (full) $\mathcal{M}$-invariance. By repeating this logic, and also applying it to relations from \eqref{cycle2} and \eqref{cycle3}, we can obtain lengthy chains of $\succeq$-relations. Let us refer to chain one as the sequence:

\begin{equation*}
    \begin{aligned}
       \bar{\mathbf{h}}(z) & \succeq (\bar{\mathbf{h}} + \mathbf{p}^1)(w) \\
       & \succeq (\bar{\mathbf{h}} + \mathbf{p}^1 + \mathbf{s}^1_{wz})(z) \\
       & \quad \quad \quad \quad \quad \vdots \\
       & \succeq \bigg(\bar{\mathbf{h}} + \vert A_{wz} \vert \sum_{i=1}^I \mathbf{p}^i + I \sum_{m=1}^{\vert A_{wz}\vert} \mathbf{s}^m_{wz}\bigg)(z) \\
       & \quad \quad \quad \quad \quad \vdots \\
       & \succeq \bigg(\bar{\mathbf{h}} + J \, \vert A_{wz} \vert \sum_{i=1}^I \mathbf{p}^i + IJ \sum_{m=1}^{\vert A_{wz}\vert} \mathbf{s}^m_{wz}\bigg)(z).
    \end{aligned}
\end{equation*}
which follows simply by repeating application of the above observation.\footnote{The first part of this chain, up to: \[
\bigg(\bar{\mathbf{h}} + \vert A_{wz} \vert \sum_{i=1}^I \mathbf{p}^i + I \sum_{m=1}^{\vert A_{wz}\vert} \mathbf{s}^m_{wz}\bigg)(z)
\]
is constructed as follows: for every $l=1,\ldots,I|A_{wz}|$, every term of the form $(\bar{\mathbf{h}}+\ldots +\mathbf{p}^l)(w)$ is followed by a term of the form $(\bar{\mathbf{h}}+\ldots+\mathbf{p}^l +\mathbf{s}_{wz}^l)(z)$, and for every $l=0,\ldots,I|A_{wz}|-1$, every term of the form $(\bar{\mathbf{h}}+\ldots +\mathbf{s}_{wz}^l)(z)$ is followed by a term of the form $(\bar{\mathbf{h}}+\ldots+\mathbf{p}^l +\mathbf{s}_{wz}^{l+1})(w)$, where indices on $\mathbf{p}$ are to be understood modulo $I$ and on $s_{wz}$ modulo $|A_{wz}|$ as above. The second part of this chain, up through: 
\[
\bigg(\bar{\mathbf{h}} + J \, \vert A_{wz} \vert \sum_{i=1}^I \mathbf{p}^i + IJ \sum_{m=1}^{\vert A_{wz}\vert} \mathbf{s}^m_{wz}\bigg)(z),
\]
follows by iterating the first $I|A_{wz}|$ steps of this construction an additional $|J|-1$ times.} Similarly, we refer to chain two as the sequence of relations:
\begin{equation*}
    \begin{aligned}
       \bar{\mathbf{h}}(z) & \succeq (\bar{\mathbf{h}} + \mathbf{s}^1_{zw})(w)\\
       & \succeq (\bar{\mathbf{h}} + \mathbf{s}^1_{zw}+ \mathbf{q}^1)(z) \\
       & \quad \quad \quad \quad \quad \vdots \\
       & \succeq \bigg(\bar{\mathbf{h}} + J \sum_{m=1}^{\vert A_{zw}\vert} \mathbf{s}^m_{wz} + \vert A_{zw} \vert \sum_{j=1}^J \mathbf{q}^j \bigg)(z) \\
       & \quad \quad \quad \quad \quad \vdots \\
       & \succeq \bigg(\bar{\mathbf{h}}  + I J \sum_{m=1}^{\vert A_{zw}\vert} \mathbf{s}^m_{wz} + I \, \vert A_{zw}  \vert \sum_{j=1}^J \mathbf{q}^j\bigg)(z).
    \end{aligned}
\end{equation*}

\noindent Appending these chains together then yields a chain:
\begin{equation*}
\bar{\mathbf{h}}(z) \succeq \cdots \succeq \bigg(\bar{\mathbf{h}} + I \, \vert A_{zw}  \vert \sum_{j=1}^J \mathbf{q}^j  + J \, \vert A_{wz} \vert \sum_{i=1}^I \mathbf{p}^i + IJ \sum_{m=1}^{\vert A_{wz}\vert} \mathbf{s}^m_{wz} + I J \sum_{m=1}^{\vert A_{zw}\vert} \mathbf{s}^m_{wz}\bigg)(z).
\end{equation*}

Consider now the following modification to this chain: immediately after the first instance of an $\mathbf{f}(z) \succeq \mathbf{g}(w)$ relation, apply $IJ$ applications of each transformation in $A_{ww}$.  Similarly, after the first $\mathbf{f}(w) \succeq \mathbf{g}(z)$ relation, insert $IJ$ repetitions of each transformation in $A_{zz}$.  The result is a chain:

\begin{equation*}
\bar{\mathbf{h}}(z) \succeq \cdots \succeq \bigg(\bar{\mathbf{h}} + I \, \vert A_{zw}  \vert \sum_{j=1}^J \mathbf{q}^j  + J \, \vert A_{wz} \vert \sum_{i=1}^I \mathbf{p}^i + IJ \sum_{k=1}^K \mathbf{r}^k\bigg)(z).
\end{equation*}

However, since $\sum_i \mathbf{p}^i= \sum_j \mathbf{q}^j = \sum_k \mathbf{r}^k = \mathbf{0}$, the first and last terms in this chain coincide.  Moreover, since every relation in the left-hand column of \eqref{cycle3} appears in this cycle, the sequence contains at least one strict relation, contradicting the hypothesis that $\succ$ is a preorder.\medskip

\noindent \textbf{Case 2:  $|A_{wz}|+|A_{zw}|=0$.}\medskip

Here, we follow a similar construction to the preceding case, except here we first consider a single chain of the form:

\begin{equation*}
    \begin{aligned}
       \bar{\mathbf{h}}(z) & \succeq (\bar{\mathbf{h}} + \mathbf{p}^1)(w) \\
       & \succeq (\bar{\mathbf{h}} + \mathbf{p}^1 + \mathbf{q}^1)(z) \\
       & \quad \quad \quad \quad \quad \vdots \\
       & \succeq \bigg(\bar{\mathbf{h}} + J  \sum_{i=1}^I \mathbf{p}^i + I \sum_{j=1}^{J} \mathbf{q}^j\bigg)(z). 
    \end{aligned}
\end{equation*}

Consider now the following modification to this chain: immediately after the first instance of an $\mathbf{f}(z) \succeq \mathbf{g}(w)$ relation, insert one application of each transformation in $A_{ww}$.  Similarly, after the first $\mathbf{f}(w) \succeq \mathbf{g}(z)$ relation, insert an application of each transformation in $A_{zz}$.  The result is a chain:

\begin{equation*}
\bar{\mathbf{h}}(z) \succeq \cdots \succeq \bigg(\bar{\mathbf{h}} + I \sum_{j=1}^J \mathbf{q}^j  + J  \sum_{i=1}^I \mathbf{p}^i + \sum_{k=1}^K \mathbf{r}^k\bigg)(z).
\end{equation*}

But by analogous logic to the former case, this also defines a cycle, contradicting the assumption that $\succeq$ is a preorder. Since these cases are exhaustive, we conclude such an extension must exist, which completes the proof.
\end{proof}

\noindent We now are in a position to prove \autoref{commutativechar}.
\begin{proof}

Suppose $\succsim^R_\mathcal{M}$ is acyclic.  By \autoref{lem:transitive}, the transitive closure of $\succsim^R_\mathcal{M}$ is an $\mathcal{M}$-invariant pre-order, and hence by \autoref{lem:strongcoherent2} admits a strongly $\mathcal{M}$-invariant preorder extension.\medskip

The remainder of the proof follows from a standard transfinite induction argument.  Let $\mathcal{P}_\mathcal{M}$ denote the set of strongly $\mathcal{M}$-invariant preorders on $X$, partially ordered by extension. Given $\succeq_1, \succeq_2\, \in \mathcal{P}_\mathcal{M}$, we write $\succeq_1\rhd \succeq_2$ whenever $\succeq_1$ extends $\succeq_2$. Let $\{\succeq_\lambda\}_{\lambda \in \Lambda}$ be an arbitrary $\rhd$-chain of $\mathcal{M}$-invariant preorders. It follows from standard arguments (see, e.g., \citealt{richter1966revealed, chambers2016revealed}) that:
\[
    \bar{\succeq} = \bigcup_{\lambda \in \Lambda} \succeq_\lambda
\]
is a preorder extension of every $\succeq_\lambda$. Similarly, it follows  that $\bar{\succeq}$ is strongly $\mathcal{M}$-invariant: if $(x,y) \in \bar{\succeq}$, then there exists some $\lambda \in \Lambda$ such that $x \succeq_\lambda y$, and since $\succeq_\lambda$ is strongly $\mathcal{M}$-invariant, so must be $\bar{\succeq}$ since it extends $\succeq_\lambda$. Hence $\bar{\succeq}$ belongs to $\mathcal{P}_\mathcal{M}$, and by Zorn's Lemma, there exists a maximal strongly $\mathcal{M}$-invariant preorder $\succeq^*$ which extends $\succsim^R_\mathcal{M}$. Suppose, for purposes of obtaining a contradiction, that $\succeq^*$ is not complete. Then there exist $w,z \in X$ that are $\succeq^*$-unrelated. By \autoref{lem:acyclic2} there exists a strongly $\mathcal{M}$-invariant preorder extension of $\succeq^*$ that renders $w$ and $z$ comparable, however, this contradicts the $\rhd$-maximality of $\succeq^*$.  Thus $\succeq^*$ is complete and hence is an $\mathcal{M}$-invariant rationalizing preference for $\succsim^R_\mathcal{M}$, and hence $\succsim^R$.
\end{proof}

\section{Relating \autoref{commutativechar} and GARP Variations: The Case of Price-Consumption Data}\label{garpapp}

In this section, we consider the special case in which our relations $\langle \succsim^R, \succ^R\rangle $ are generated by some price-consumption data set $\{(p_1,x_1),\ldots, (p_K, x_K)\}$.  Here, we assume $\langle \succsim^R, \succ^R\rangle $ are the revealed preference relations associated with this data set, via:
\[
    x \succsim^R y \quad \iff \quad x \, = \, x^k \textrm{ for some $k$, and } p_k \cdot x \ge p_k \cdot y
\]
(respectively $\succ^R$ and $>$). We show that for various common choices of $\mathcal{M}$, the acyclicity of $\langle \succsim^R_\mathcal{M}, \succ^R_\mathcal{M}\rangle$ straightforwardly reduces to the standard, model-specific revealed preference axioms.\medskip

\subsubsection{Quasilinearity}
    Suppose that $X = Y \times \mathbb{R}_+$, and $\mathcal{M}$ consists of all transformations of the form $(y, t) \mapsto (y, t+ \alpha)$ for $\alpha \ge 0$.  Let $\langle \succsim^R, \succ^R \rangle$ be an arbitrary data set. Then the $\mathcal{M}$-closure $ \langle \succsim^R_\mathcal{M}, \succ^R_\mathcal{M}\rangle$ is defined by:
    \[
        (y,t+\alpha) \succsim^R_\mathcal{M} (y',t'+\alpha) \quad \iff \quad (y, t) \succsim^R (y' , t')
    \]
    for some $\alpha \ge 0$, and analogously for $\succ^R_{\mathcal{M}}$.\medskip
    
    Suppose now that $Y = \mathbb{R}^{L-1}_+$, and $\langle \succsim^R , \succ^R \rangle$ is the revealed preference relation arising from some price-consumption data set; without loss of generality, we normalize each $p_k = (\tilde{p}_k, 1)$. Then a $\langle\succsim^R_\mathcal{M}, \succ^R_\mathcal{M}\rangle$ cycle is equivalent to the existence of $(y_{k_0}, t_0) , \ldots, (y_{k_N}, t_N) \in X$, and $\alpha_0, \ldots, \alpha_N \ge 0$ such that:
    \begin{equation}\label{quasilinearcycle}
        \begin{aligned}
            p_{k_0} \cdot (y_{k_0}, t_{k_0}) & \ge p_{k_0} \cdot (y_{k_1},t_1 + \alpha_0) = p_{k_0} \cdot (y_{k_1},t_{k_1} + \alpha_0 - \alpha_1)\\
            p_{k_1} \cdot (y_{k_1}, t_{k_1}) & \ge p_{k_1}\cdot (y_{k_2},t_2+ \alpha_1) = p_{k_1} \cdot (y_{k_2},t_{k_2} + \alpha_1 - \alpha_2)\\
            & \; \; \vdots \\
            p_{k_N} \cdot (y_{k_N}, t_{k_N}) & > p_{k_N} \cdot (y_{k_0},t_0 + \alpha_N) = p_{k_N} \cdot (y_{k_0},t_{k_0} + \alpha_N - \alpha_0)\\
        \end{aligned}
    \end{equation}  
    where $t_{k_i} = t_i + \alpha_i$ for all $i = 1,\ldots N$.\footnote{In other words, $x\succsim^R_\mathcal{M} y$ if and only if there is some fixed translation along the numeraire axis that brings $x$ equal to some chosen $x_k$, and which leaves $y$ within the budget defined by $p_k$ and $x_k$.} Summing over (\ref{quasilinearcycle}):
    \[
        \begin{aligned}
            \sum_{i=0}^N \tilde{p}_{k_i} \cdot (y_{k_{i+1}} - y_{k_{i}}) < 0,
        \end{aligned}
    \]
    which precisely corresponds precisely to a negative cycle \`{a} la \cite{brown2007nonparametric}.\footnote{Here, the $i$ indices are understood to satisfy $N+1 \equiv 0$.} 

\subsubsection{Homotheticity}
    Let $X$ be a cone in a real vector space, and let $\mathcal{M}$ consist of all transformations of the form $x \mapsto \alpha x$, for $\alpha > 0$. The particular case of $X=\mathbb{R}^n_+$ is treated in \citet{chambers2016revealed}, Theorem 4.2, but we reproduce the ideas here.  
    
    Here, the $\mathcal{M}$-closure of the data set $\langle \succsim^R, \succ^R\rangle$ is given by:
    \[
        x \succsim^R_\mathcal{M} y \quad \iff \quad  \alpha x \succsim^R \alpha y
    \]
    for some $\alpha > 0$, with a similar definition for $\succ^R_\mathcal{M}$.  In \citet{chambers2016revealed}, $\langle \succsim^R_\mathcal{M},\succ^R_\mathcal{M}\rangle$ is referred to as $\langle \succeq^H, \succ^H \rangle$.  The $\mathcal{M}$-closure is acyclic if and only if there do not exist $x_0,\ldots, x_N \in X$ and $\alpha_0,\ldots, \alpha_N > 0$ such that:
    \[
        \begin{aligned}
            \alpha_0 x_0 & \succsim^R \alpha_0 x_1\\
            \alpha_1 x_1 & \succsim^R \alpha_1 x_2\\
            & \; \; \vdots\\
            \alpha_N x_N & \succ^R \alpha_N x_0.
        \end{aligned}
    \]
    Suppose again that $\langle \succsim^R, \succ^R\rangle$ is the revealed preference relation arising from some set of price-consumption observations; without loss of generality, we normalize each price so $p_k \cdot x_k = 1$. Then (\ref{homotheticcycle}) is equivalent to the existence of $x_{k_0},\ldots, x_{K} \in X$ and $\alpha_0,\ldots, \alpha_K > 0$ such that:
    \begin{equation}\label{homotheticcycle}
        \begin{aligned}
            p_{k_0} \cdot x_{k_0} & \ge p_{k_0} \cdot (\alpha_0 x_1) = p_{k_0} \cdot \bigg(\frac{\alpha_0 x_{k_1}}{\alpha_1}\bigg)\\
            p_{k_1} \cdot x_{k_1} & \ge p_{k_0} \cdot (\alpha_1 x_2) = p_{k_0} \cdot \bigg(\frac{\alpha_1 x_{k_2}}{\alpha_2}\bigg)\\
            & \; \; \vdots \\
            p_{k_N} \cdot x_{k_N} & > p_{k_N} \cdot (\alpha_N x_0) = p_{k_N} \cdot \bigg(\frac{\alpha_N x_{k_0}}{\alpha_0}\bigg)
        \end{aligned}
    \end{equation}
    where $\alpha_i x_i = x_{k_i}$ for all $i = 1,\ldots, N$. Taking products of (\ref{homotheticcycle}) leads to the cancellations of all $\alpha_i/\alpha_{i+1}$ terms, resulting in:
    \[
        \prod_{i=0}^N p_{k_i} x_{k_{i+1}}< 1,
    \]
    which is precisely a violation of the homothetic axiom of revealed preference of \cite{varian1983non}. As mentioned previously, in the case of general $\langle \succsim^R , \succ^R \rangle$, not necessarily arising from price-consumption observations, \cite{demuynck2009} obtains a similar characterization, in the special case of monotone and homothetic preferences, via a different approach.

\subsubsection{Translation-Invariance}
    Let $S$ be some finite set of states of the world, and let $X = \mathbb{R}^S$ denote the space of portfolios of Arrow securities. Let $\mathcal{M}$ denote the collection of transformations of the form $x \mapsto x + \vec{\alpha}$, where $\vec{\alpha} := (\alpha,\ldots, \alpha)$, for each $\alpha \in \mathbb{R}$. We refer to an $\mathcal{M}$-invariant preference as \emph{translation invariant}.  By \autoref{commutativechar}, the data $\langle \succsim^R ,\succ^R\rangle$ are rationalizable by a translation-invariant preference if and only if there does not exist $x_0,\ldots, x_N \in X$ and $\alpha_0,\ldots, \alpha_N \in \mathbb{R}$ such that:
    \begin{equation}\label{transinvcycle}
        \begin{aligned}
            p_{k_0} \cdot x_{k_0} \ge p_{k_0} \cdot \big(x_1 + \vec{\alpha}_0\big) & = p_{k_0} \cdot \big(x_{k_1} + \vec{\alpha}_0 - \vec{\alpha}_1\big)\\
            p_{k_1} \cdot x_{k_1}  \ge p_{k_1} \cdot \big(x_2 + \vec{\alpha}_1\big) & = p_{k_1} \cdot \big(x_{k_2} + \vec{\alpha}_1 - \vec{\alpha}_2\big)\\
            & \; \;  \vdots \\
           p_{k_N} \cdot x_{k_N}  \ge p_{k_N} \cdot \big(x_0 + \vec{\alpha}_N\big) & = p_{k_N} \cdot \big(x_{k_0} + \vec{\alpha}_N - \vec{\alpha}_0\big).
        \end{aligned}
    \end{equation}
    Summing over (\ref{transinvcycle}) we obtain:
    \[
        \sum_{i=0}^N p_{k_i} \cdot (x_{k_{i+1}} - x_{k_i}) - \sum_{i=0}^N (\alpha_i - \alpha_{i+1}) \|p_{k_i}\|_1 < 0,
    \]
    or, normalizing each $p_{k_i}$ by $\|p_{k_i}\|_1$ without loss of generality:
    \[
        \sum_{i=0}^N \frac{p_{k_i}}{\|p_{k_i}\|_1} \cdot (x_{k_{i+1}} - x_{k_i}) < 0,
    \]
    precisely the same condition obtained in \cite{chambers2016testing}.\footnote{Economically, this normalization may be regarded as treating \emph{bonds} as a numeraire commodity.}

\section{Proof of \autoref{generalchar}}

\subsection{Preliminaries from Propositional Logic}

For all $(x,y) \in X \times X$, define two boolean variables:
\[
[\texttt{x} \succeq \texttt{y}] \quad \textrm{ and } \quad [\texttt{x} \succ \texttt{y}].
\]
Let $\mathcal{V}$ denote the set of all such variables.  A {\bf model} is a mapping $\mu: \mathcal{V} \to \{\top ,\bot\}$ assigning a truth value to every variable in $\mathcal{V}$.\footnote{In this appendix, we will exclusively use the word `model' in its logical interpretation, rather than its economic meaning in the main text.} We may extend any model from boolean variables to well-formed logical formulae in the obvious manner.  For a proof of this fact, and an introduction to propositional logic, the interested reader is referred to \cite{schoning2008logic}.\medskip

Every formula in propositional logic is equivalent to one in conjunctive normal form (CNF).\footnote{See \citealt{schoning2008logic}.} A {\bf literal} is an atomic formula, of the form $A$ or $\neg A$, for some $A \in \mathcal{V}$. A finite formula $F$ in conjunctive normal form can be written as:
\[
    F = (A_{1,1} \vee \cdots \vee A_{1,n_1}) \wedge \cdots \wedge (A_{K,1} \vee \cdots \vee A_{K, n_K}),
\]
where each $A_{i,j}$ is a literal. We view the formula $F$ as being formed by the individual {\bf clauses}:
\[
    C_i = A_{i,1} \vee \cdots \vee A_{i,n_i}.
\]
A formula such as $F$ can be compactly expressed in set notation:
\[
    \big\{\underbrace{\{A_{1,1}, \ldots, A_{1,n_1}\}}_{C_1} , \ldots, \underbrace{\{A_{K,1}, \ldots, A_{K,n_K}\}}_{C_K} \big\},
\]
where each $C_i = \{A_{i,1}, \ldots, A_{i,n_i}\}$ is a clause. In other words, within a clause, a comma denotes an OR operation (i.e.\ $\vee$), and a comma between clauses denotes an AND (i.e.\ $\wedge$). The formula consisting only of the empty clause $\{\varnothing\}$ is a valid formula; by definition it is unsatisfiable. \medskip

Let $C_1$, $C_2$, and $R$ be clauses. We say that $R$ is a {\bf resolvent} of $C_1$ and $C_2$ if there exists some literal $L$ such that $L \in C_1$ and $\neg{L} \in C_2$, and
\[
    R = \big(C_1 \setminus \{L\}) \; \cup \; (C_2 \setminus \{\neg L\}).
\]
\noindent The following property resolution is standard (see, e.g., \citealt{schoning2008logic} p.32).

\begin{lemma}\label{resolutionlemma}
    Let $\Theta$ be a set of clauses, and let $R$ be the resolvent of two clauses $C_1$ and $C_2$ in $\Theta$.  Then $\Theta$ and $\Theta \cup \{R\}$ are logically equivalent.
\end{lemma}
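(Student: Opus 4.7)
The plan is to establish logical equivalence by proving containment of the model sets in both directions. Since $\Theta \subseteq \Theta \cup \{R\}$, every model of $\Theta \cup \{R\}$ is trivially a model of $\Theta$, so that direction requires no work. The substantive direction is the converse: any model $\mu$ of $\Theta$ must also satisfy the resolvent $R = (C_1 \setminus \{L\}) \cup (C_2 \setminus \{\neg L\})$.

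To show this, I would fix an arbitrary model $\mu$ satisfying $\Theta$ and perform a case split on the truth value $\mu(L)$ of the pivot literal $L$. In the case $\mu(L) = \top$, the literal $\neg L$ evaluates to $\bot$ under $\mu$; since $\mu$ satisfies the clause $C_2$ (which is a disjunction of its literals), and the literal $\neg L \in C_2$ contributes $\bot$, satisfaction of $C_2$ forces some other literal in $C_2 \setminus \{\neg L\}$ to evaluate to $\top$, and this literal lies in $R$, so $\mu$ satisfies $R$. In the symmetric case $\mu(L) = \bot$, the same reasoning applied to $C_1$ yields a literal in $C_1 \setminus \{L\} \subseteq R$ that is true under $\mu$. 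Thus in either case $\mu$ satisfies $R$, and therefore every clause in $\Theta \cup \{R\}$.

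The argument is entirely routine—there is no real obstacle, as this is just the standard soundness of binary resolution. The only care needed is in noting that $\mu(L)$ and $\mu(\neg L)$ are logical negations of each other, which follows from the extension of $\mu$ from variables to formulae in the usual way, and that the resolvent construction requires complementary literals $L$ and $\neg L$ to appear in $C_1$ and $C_2$ respectively.
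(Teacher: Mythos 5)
Your proof is correct: the two-way containment, with the nontrivial direction handled by a case split on the truth value of the pivot literal $L$, is precisely the standard soundness-of-resolution argument. The paper itself does not reproduce a proof but cites \citet{schoning2008logic} for this fact; your argument is the one given there, so there is nothing to compare beyond noting you have correctly supplied the omitted details.
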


When speaking of resolvents, we explicitly allow for $R$ to be the empty set. Suppose $\Theta$ is a set of clauses.  A {\bf derivation} of $\varnothing$ via resolution is a finite sequence of clauses $\{C_1,\ldots, C_N\}$ such that:
\begin{itemize}
    \item[(i)] $C_N = \varnothing$; and
    \item[(ii)] For all $i = 1,\ldots, N$, $C_i$ is either a clause in $\Theta$, or a resolvent some $C_j$ and $C_k$ (the {\bf parents} of $C_i$), where $j,k < i$.
\end{itemize}
\noindent More generally, if we remove condition (i) we speak of a {\bf partial derivation} (of $C_N$). A set of clauses $\Theta$ is said to be {\bf unsatisfiable} if and only if there is no model which evaluates every formula in $\Theta$ to $\top$.  Remarkably, by forming a finite number of resolvents, one is always capable of detecting whether any finite set of formulas is unsatisfiable.

\begin{theorem}[\citealt{robinson1965machine}]\label{resolutionthm}
    Let $\Theta$ be a finite set of clauses. Then $\Theta$ is unsatisfiable if and only if there exists a derivation of $\varnothing$ via resolution.
\end{theorem}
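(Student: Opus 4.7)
The plan is to prove the two directions, soundness (existence of a derivation implies unsatisfiability) and completeness (unsatisfiability implies existence of a derivation), separately. Soundness is essentially immediate from the resolution lemma already established.

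For soundness, I would argue as follows. Suppose $\{C_1, \ldots, C_N\}$ is a derivation of $\varnothing$ from $\Theta$. By a short induction on $i$, together with \autoref{resolutionlemma}, the set $\Theta \cup \{C_1, \ldots, C_i\}$ is logically equivalent to $\Theta$ for each $i$: the base case is trivial if $C_1 \in \Theta$, and otherwise $C_1$ is a resolvent of two clauses in $\Theta$, so the lemma applies directly. Inductively, if $C_{i+1} \in \Theta$ there is nothing to show, and if $C_{i+1}$ is a resolvent of earlier clauses (which by induction are logically implied by $\Theta$), then $\Theta \cup \{C_1, \ldots, C_i\} \cup \{C_{i+1}\}$ remains equivalent to $\Theta$. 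Since $C_N = \varnothing$ appears in the last set, and any set of clauses containing the empty clause is unsatisfiable (the empty clause evaluates to $\bot$ under every model, since it is an empty disjunction), $\Theta$ must itself be unsatisfiable.

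For completeness, the plan is to induct on the number $n$ of propositional variables appearing in $\Theta$. The base case $n=0$ is handled by observing that any finite set of clauses using no variables consists only of the empty clause $\varnothing$ and/or has no clauses at all; an unsatisfiable such set must contain $\varnothing$, which itself constitutes a (one-element) derivation. For the inductive step, pick any variable $A$ appearing in $\Theta$ and form two reduced clause sets: $\Theta_{\top}$, obtained by deleting every clause containing $A$ and deleting the literal $\neg A$ from the remaining clauses, and $\Theta_{\bot}$, obtained symmetrically by deleting clauses containing $\neg A$ and deleting $A$ from the rest. These correspond to partial evaluations under $A \mapsto \top$ and $A \mapsto \bot$; since any model of $\Theta_\top$ (resp. $\Theta_\bot$) would extend to a model of $\Theta$ by assigning $A$ accordingly, the unsatisfiability of $\Theta$ implies both $\Theta_\top$ and $\Theta_\bot$ are unsatisfiable. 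Each uses strictly fewer variables than $\Theta$, so by the inductive hypothesis, each admits a resolution derivation of $\varnothing$.

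The key technical step, and the main obstacle, is lifting these two derivations back to derivations in $\Theta$. I would show that any derivation of $\varnothing$ from $\Theta_\top$ can be mimicked clause-by-clause in $\Theta$, producing either a derivation of $\varnothing$ or a derivation of the singleton clause $\{\neg A\}$: whenever $\Theta_\top$'s derivation uses a clause that was obtained by deleting $\neg A$ in passing from $\Theta$, we instead keep the literal $\neg A$ around; each resolution step in $\Theta_\top$ either goes through unchanged or now produces a resolvent with an extra $\neg A$ literal. Symmetrically, lifting the $\Theta_\bot$-derivation yields in $\Theta$ either $\varnothing$ or the singleton $\{A\}$. If either lifting already produces $\varnothing$, concatenate that with the trivial prefix and we are done. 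Otherwise, concatenating the two lifted derivations gives a partial derivation in $\Theta$ that contains both $\{A\}$ and $\{\neg A\}$; appending one final resolution step on these two clauses yields $\varnothing$, completing the derivation and closing the induction.
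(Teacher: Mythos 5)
Your proof is correct: the soundness direction follows from \autoref{resolutionlemma} exactly as you describe, and the completeness direction via induction on the number of variables, splitting $\Theta$ into $\Theta_\top$ and $\Theta_\bot$, and lifting the two sub-derivations to obtain $\{\neg A\}$ and $\{A\}$ (or $\varnothing$ directly) is the standard argument. The paper does not prove this theorem itself---it imports it from \cite{robinson1965machine} and points the reader to \cite{schoning2008logic} for the propositional case---and your argument is precisely that referenced textbook proof, so there is nothing to flag.
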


\noindent The \cite{robinson1965machine} paper actually proves stronger analogous result, in the more general setting of first-order logic. For a proof of the above result in propositional logic, the interested reader is referred to \cite{schoning2008logic}, Chapter 1, Section 5.  Many refinements of \autoref{resolutionthm} exist, intended to further reduce the search space for proofs in the context of machine learning.  We will have use of the following modification: say a derivation $\{C_1,\ldots, C_N\}$ of $\varnothing$ is via {\bf negative resolution} if:
\begin{itemize}
    \item[(i)] $C_N = \varnothing$; and
    \item[(ii')] For all $i = 1,\ldots, N$, $C_i$ is either a clause in $\Theta$, or a resolvent of some $C_j$ and $C_k$, where $j,k < i$ and either $C_j$ or $C_k$ contains no positive literals.
\end{itemize}

\noindent The following theorem is proven on p.102 in \cite{schoning2008logic}.

\begin{theorem}\label{completenessnegres}
    Let $\Theta$ be a finite set of formulas. Then $\Theta$ is unsatisfiable if and only if there exists a derivation of $\varnothing$ via negative resolution.
\end{theorem}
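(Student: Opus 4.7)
The plan is to prove the nontrivial direction---the converse being soundness of resolution---by induction on the number of distinct atoms appearing as a positive literal somewhere in $\Theta$. The guiding observation is symmetric: if every clause of $\Theta$ has some positive literal, the constant-true assignment satisfies $\Theta$; if no atom appears positively, every clause is purely negative, and the constant-false assignment satisfies $\Theta$ unless $\varnothing \in \Theta$ itself. The base case (no positively-occurring atoms) thus follows trivially, and for the inductive step I would pick an atom $A$ appearing positively in $\Theta$.

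I would first form the reduct $\Theta|_{A=\top}$, obtained by deleting every clause containing $A$ and striking $\neg A$ from each remaining clause. This reduct is unsatisfiable (any satisfying assignment of it, extended by $A=\top$, would satisfy $\Theta$) and uses strictly fewer atoms, hence strictly fewer positively-occurring atoms, so the induction hypothesis supplies a negative resolution derivation $\pi$ of $\varnothing$ from $\Theta|_{A=\top}$.

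The key step is a lifting lemma: I would lift $\pi$ to a derivation $\pi^\uparrow$ from $\Theta$ in which each clause is either unchanged or has $\neg A$ appended. Starting clauses in $\pi$ lift to their preimages in $\Theta$; at a resolution step, the resolved literal is neither $A$ nor $\neg A$ (both have been purged from $\Theta|_{A=\top}$), so the step lifts verbatim, producing either the original resolvent or that resolvent with $\neg A$ attached. Crucially, only the negative literal $\neg A$ is ever introduced, so any all-negative clause used as the negative parent in $\pi$ remains all-negative in $\pi^\uparrow$; hence $\pi^\uparrow$ is itself a negative resolution derivation, concluding in either $\varnothing$ (in which case we are done) or the unit clause $\{\neg A\}$.

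If only $\{\neg A\}$ is obtained, I would then use it as an all-negative parent to resolve against every clause $D \in \Theta$ containing $A$, producing $D \setminus \{A\}$ by a single negative resolution step in each case. The set $\Theta^{**}$ consisting of these resolvents together with the clauses of $\Theta$ that do not contain $A$ positively is still unsatisfiable (any model of $\Theta^{**}$ is a model of $\Theta$, since shrinking a clause only strengthens it) and contains strictly fewer positively-occurring atoms, so a second application of the induction hypothesis yields a negative resolution derivation of $\varnothing$ from $\Theta^{**}$, which is appended to the construction. The principal obstacle is precisely this asymmetry of the lifting: attempting the symmetric argument through $\Theta|_{A=\bot}$ would reintroduce positive $A$-literals into intermediate clauses and destroy their all-negative status, which is why the argument must always lift from $\Theta|_{A=\top}$ and dispatch the residual unit clause $\{\neg A\}$ via a separate elimination phase before recursing.
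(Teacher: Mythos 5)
Your proof is correct and self-contained, which is notable because the paper does not actually prove this theorem: it invokes it as a known result, citing \citet{schoning2008logic}, p.~102. Since you argued without access to that source, it is worth recording that your inductive scheme is a sound Davis--Putnam-style argument and does establish the claim.

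Your key technical move is the deliberate asymmetry in the atom-splitting step, and it is exactly the right insight. The standard Davis--Putnam proof of ordinary resolution completeness lifts derivations from \emph{both} reducts $\Theta|_{A=\top}$ and $\Theta|_{A=\bot}$, obtaining $\{\neg A\}$ and $\{A\}$ and resolving them. That strategy cannot be imported directly for negative resolution because lifting from $\Theta|_{A=\bot}$ attaches the positive literal $A$ to intermediate clauses, which can convert an all-negative parent into one with a positive literal and thereby invalidate the negative-resolution side condition. Your fix --- lift only from $\Theta|_{A=\top}$, so that only $\neg A$ is ever appended and all-negative parents stay all-negative, then dispatch the residual unit clause $\{\neg A\}$ by using it as the all-negative parent against every clause of $\Theta$ containing $A$, and recurse on the resulting clause set $\Theta^{**}$ --- is precisely what makes the induction close. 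The bookkeeping in both induction calls is also handled correctly: the positively-occurring atoms of $\Theta|_{A=\top}$ and of $\Theta^{**}$ are each a strict subset of those of $\Theta$ (neither contains $A$ and neither introduces new atoms), $\Theta^{**}$ is entailed-by--implies $\Theta$ in the right direction for unsatisfiability transfer, and the base case correctly observes that a purely-negative, unsatisfiable set must already contain $\varnothing$. The only phrasing I would tighten: when you append the $\Theta^{**}$-derivation, you should note explicitly that its starting clauses not already in $\Theta$ are precisely the $D\setminus\{A\}$ you derived in the elimination phase, so the concatenation really is a derivation from $\Theta$. You gesture at this but it deserves one sentence. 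Whether this coincides with Sch\"oning's own argument on p.~102 I cannot say from the paper alone, but it is a valid alternative proof and arguably more transparent than semantic-tree arguments commonly used for resolution-refinement completeness.
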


\noindent \autoref{completenessnegres} provides a `representation theorem' for proofs of inconsistency: while there may be (many) proofs that a given set of clauses is unsatisfiable, \autoref{completenessnegres} guarantees that at least one can be carried out wholly via resolution where one parent at every step contains no positive literals. Crucially, every order pair in $\langle W, S\rangle \in \mathcal{F}^*$ uniquely defines a clause containing no positive literals:
\begin{equation}\label{clausalrep}
   \langle W,S \rangle \mapsto \bigg[\bigvee_{(x,y) \in W \setminus S} \neg [\texttt{x} \succeq \texttt{y}] \bigg] \vee \bigg[ \bigvee_{(x,y) \in S} \neg [\texttt{x} \succ \texttt{y}]\bigg].
\end{equation}
We term this the {\bf clausal representation} of the order pair $\langle W, S \rangle$. In particular, the clausal representation of $\langle \varnothing, \varnothing\rangle$ is the empty clause.

\subsection{$\mathcal{M}$-invariant Rationalization}

Let $\Phi$ denote the collection of all logical formulas of the following form:
\begin{itemize}
    \item[(T.1)] {\bf Completeness}: For all $x,y \in X$:
    \[
        [\texttt{x} \succeq \texttt{y}] \vee [\texttt{y} \succeq \texttt{x}].
    \]
    This is in conjunctive normal form (CNF).
    \item[(T.2)] {\bf Coherency}: For all $x,y \in X$:
    \[
        [\texttt{x} \succeq \texttt{y}] \iff \neg [\texttt{y} \succ \texttt{x}].
    \]
    In CNF, this may be regarded as two separate clauses,
    \[\tag{T.2.a}
        \neg [\texttt{x} \succeq \texttt{y}] \vee \neg [\texttt{y} \succ \texttt{x}]
    \]
    and
    \[\tag{T.2.b}
        [\texttt{x} \succeq \texttt{y}] \vee [\texttt{y} \succ \texttt{x}].
    \]
    \item[(T.3)] {\bf Transitivity}: For all $x,y,z \in X$:
    \[
        [\texttt{x} \succeq \texttt{y}] \wedge [\texttt{y} \succeq \texttt{z}] \implies [\texttt{x} \succeq \texttt{z}],
    \]
    or, in CNF:
    \[
        \neg [\texttt{x} \succeq \texttt{y}] \vee \neg [\texttt{y} \succeq \texttt{z}] \vee [\texttt{x} \succeq \texttt{z}].
    \]
    \item[(T.4)] {\bf Extension}: For all $(x,y) \in \; \succsim^R$,
    \[
        [\texttt{x} \succeq \texttt{y}].
    \]
    Moreover, if $(x,y) \in\; \succ^R$ then:
    \[
        [\texttt{x} \succ \texttt{y}].
    \] 
    \item[(T.5)] {\bf Invariance}: For all $x,y \in X$ and $\omega \in \mathcal{M}$ such that $x,y$ belong to the domain of $\omega$:
    \[
        [\texttt{x} \succeq \texttt{y}] \iff [\omega(\texttt{x}) \succeq \omega(\texttt{y})],
    \]
    or
    \[\tag{T.5.a}
        \neg [\texttt{x} \succeq \texttt{y}] \vee [\omega(\texttt{x}) \succeq \omega(\texttt{y})]
    \]
    and
    \[\tag{T.5.b}
        [\texttt{x} \succeq \texttt{y}] \vee \neg [\omega(\texttt{x}) \succeq \omega(\texttt{y})].
    \]
\end{itemize}
By construction, the set of models which evaluate to $\top$ for every formula in $\Phi$ are in 1-1 correspondence with the $\mathcal{M}$-invariant weak order extensions of $\langle \succsim^R , \succ^R\rangle$.\medskip

\subsection{Proofs}

We proceed in the proof of \autoref{generalchar} via several lemmas.

\begin{lemma}\label{resolutionsound}
    Suppose $\langle \varnothing, \varnothing\rangle \in \mathcal{F}^*$. Then there does not exist any $\mathcal{M}$-invariant preference relation extending $\langle \succsim^R, \succ^R\rangle$.
\end{lemma}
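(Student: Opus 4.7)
The plan is to argue semantically, showing directly that whenever $\langle \varnothing, \varnothing\rangle \in \mathcal{F}^*$, the system $\Phi$ is unsatisfiable; since models of $\Phi$ correspond exactly to $\mathcal{M}$-invariant preference extensions of the data, this yields the conclusion. The essential bookkeeping device is the clausal representation~\eqref{clausalrep}: each order pair $\langle W, S\rangle$ maps to a clause of negative literals, and crucially $\langle \varnothing, \varnothing\rangle$ maps to the empty clause, which is identically false. I will show by induction on $n$ that for every $n \ge 0$, the clausal representation of every pair in $\mathcal{F}^n$ is a logical consequence of $\Phi$; taking $n$ large enough that $\langle \varnothing, \varnothing\rangle \in \mathcal{F}^n$ then forces $\Phi \models \bot$.

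For the base case, fix a broken cycle as in~\eqref{bc} producing a forbidden subrelation $\langle W, S\rangle$, and suppose for contradiction some model $\mu \models \Phi$ falsifies its clausal representation. Then $\mu$ makes $y_i \succeq x_i$ true for every pair $(y_i, x_i) \in W$ and $y_i \succ x_i$ true for every $(y_i, x_i) \in S$. Applying the invariance formulas (T.5) yields $\omega_i(y_i) \succeq \omega_i(x_i)$ (strictly whenever $(y_i,x_i) \in S$), and splicing these together with the gap relations of the broken cycle via transitivity produces a $\succeq$-cycle containing at least one strict link---provided either by some $(y_i,x_i) \in S$ or, if the broken cycle is strict, by the strict gap in $\succ^R_\intercal$---contradicting the coherency and completeness axioms encoded in $\Phi$.

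For the inductive step, let $\langle \tilde W, \tilde S\rangle$ be a collapse of $\langle W_1, S_1\rangle, \langle W_2, S_2\rangle \in \mathcal{F}^n$ with witnesses $x_0, y_0, \omega, \omega'$, and let $\mu \models \Phi$. Completeness (T.1) forces either $x_0 \succeq y_0$ or $y_0 \succ x_0$ under $\mu$. In the first case, invariance yields $\omega(x_0) \succeq \omega(y_0)$, which is exactly the literal missing from $\langle \tilde W, \tilde S\rangle$ relative to $\langle W_i, S_i\rangle$; so if $\mu$ were to falsify the clausal representation of $\langle \tilde W, \tilde S\rangle$, it would also falsify that of $\langle W_i, S_i\rangle$, contrary to the inductive hypothesis. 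The second case is symmetric via $\langle W_j, S_j\rangle$, using that $\omega'(y_0) \succ \omega'(x_0)$ witnesses both the weak and strict forms of the missing literal---which is precisely why the collapse asymmetrically allows $(\omega'(y_0), \omega'(x_0))$ to lie in either $W_j \setminus S_j$ or $S_j$.

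The main obstacle I anticipate is the set-theoretic bookkeeping in the inductive step: $\tilde W \setminus \tilde S$ is not literally the disjoint union of $W_i \setminus S_i \setminus \{(\omega(x_0),\omega(y_0))\}$ and $W_j \setminus S_j \setminus \{(\omega'(y_0),\omega'(x_0))\}$, since a pair in $W_i^- \setminus S_i$ may also belong to $S_j^-$ and hence land in $\tilde S$ rather than $\tilde W \setminus \tilde S$. However this only strengthens a weak literal to strict under $\mu$, which a fortiori still satisfies the weak literal appearing in $\langle W_i, S_i\rangle$; the inclusions $S_i \subseteq \tilde S$ and $S_j^- \subseteq \tilde S$ likewise ensure strict comparisons are preserved. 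Once this accounting is handled, no deeper ideas are required to close the induction.
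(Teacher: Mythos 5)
Your proof is correct, and it takes a genuinely different route from the paper's. The paper argues \emph{syntactically}: it shows that each collapse step can be simulated by a short chain of resolution steps against the axiom clauses of types (T.1), (T.2), and (T.5), so that $\Theta$ and $\Theta\cup\mathcal{F}^{*}$ are logically equivalent; it then invokes soundness of resolution to conclude that $\langle\varnothing,\varnothing\rangle\in\mathcal{F}^{*}$ forces $\Theta$ (hence $\Phi$) to be unsatisfiable. You instead argue \emph{semantically} by induction on $n$: every clausal representation of a pair in $\mathcal{F}^{n}$ is a logical consequence of $\Phi$, with the base case handled by splicing the images of the model's rankings into the broken cycle to manufacture a $\succeq$-cycle with a strict link, and the inductive step handled by the case-split $x_0\succeq y_0$ versus $y_0\succ x_0$ (which is really (T.2.b) more than (T.1), but this is immaterial). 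Your final paragraph correctly anticipates and resolves the one bookkeeping wrinkle---a pair landing in $\tilde S$ rather than $\tilde W\setminus\tilde S$ only strengthens a weak literal to a strict one, which suffices since $\Phi$ entails $[\texttt{a}\succ\texttt{b}]\implies[\texttt{a}\succeq\texttt{b}]$. What each approach buys: yours is more elementary and avoids needing soundness of resolution as an external black box; the paper's choice keeps the soundness lemma structurally dual to the completeness lemma (which genuinely needs Robinson's theorem), so the two halves of the argument use exactly the same machinery. Both are valid; neither has a gap.
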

\begin{proof}
    By minor abuse of notation, we identify every order pair in $\mathcal{F}^*$ with its clausal representation under \eqref{clausalrep}. Let $\Theta$ denote the collection of all clauses of the form (T.1) - (T.5), as well as all clauses in $\mathcal{F}^0$; recall, $\mathcal{F}^0$ consists of the (clausal representations of) forbidden subrelations generated by broken cycles in the data.\medskip
    
    Suppose $\succeq$ is an $\mathcal{M}$-invariant weak order extension of $\langle \succsim^R, \succ^R \rangle$. By construction, the rule of assignment (i) $[\texttt{x} \succeq \texttt{y} ] = \top$ if and only if $x \succeq y$ and (ii) $[\texttt{x} \succ \texttt{y}]$ if and only if $x \succ y$, defines a valid model for $\Theta$, i.e. under these assignments, every clause in $\Theta$ evaluates to $\top$.\footnote{The clauses of the form (T.1)-(T.5) are clearly necessary as they define the basic properties of an invariant weak order extension. Clauses in $\mathcal{F}^0$ must also hold lest $\succeq$ contain a cycle. Every clause in $\mathcal{F}^0$ can be logically deduced from (T.1) - (T.5), however we do not need this fact in light of  standard order-theoretic arguments.} Thus $\Phi$ is satisfiable if and only if $\Theta$ is.\medskip
    
    Let $C,C' \in \mathcal{F}^0$ denote clausal representations of two forbidden subrelations, derived from broken cycles (either strict or weak) in the data, and suppose $D \in \mathcal{F}^1$ is the (clausal representation of the) collapse of $C$ and $C'$. Regarding these as sets of negative literals, there exists (negative) literals $L\in C$ and $L' \in C'$ such that:
    \[
        D = \big(C \setminus \{L\}\big) \; \cup \; \big(C' \setminus \{L'\}\big)
    \]
    and either:
\[
    L = \neg [\omega(\texttt{y}) \succeq \omega(\texttt{x})] \quad \textrm{ and } \quad L' = \neg [\omega'(\texttt{x}) \succeq \omega'(\texttt{y})]
\]
or
\[
    L = \neg [\omega(\texttt{y}) \succeq \omega(\texttt{x})] \quad \textrm{ and } \quad L' = \neg [\omega'(\texttt{x}) \succ \omega'(\texttt{y})]
\]
for some $x,y \in X$, $\omega,\omega' \in \mathcal{M}$. Suppose $L$ and $L'$ are of the former type.  Then $\neg [\omega(\texttt{y}) \succeq \omega(\texttt{x})] \in C$, hence we may form $C_1$ by resolving $C$ with the (T.5.a) clause $\neg[\texttt{y} \succeq \texttt{x}] \vee [\omega(\texttt{y}) \succeq \omega(\texttt{x})]$. Then form $C_2$ by resolving $C_1$ with the (T.1) clause $[\texttt{x} \succeq \texttt{y}] \vee [\texttt{y} \succeq \texttt{x}]$, and finally form $C_3$ by resolving $C_2$ with the (T.5.a) clause $\neg[\texttt{x} \succeq \texttt{y}] \vee [\omega'(\texttt{x}) \succeq \omega'(\texttt{y})]$.  Thus:
\[
    C_3 = \big(C \setminus \{L\}\big) \; \cup \; \big\{[\omega'(\texttt{x}) \succeq \omega'(\texttt{y})]\big\}.
\]
Then $C_3$ and $C'$ can be resolved to form $D$.  By \autoref{resolutionlemma}, $\Theta$ and $\Theta \cup \big\{D\big\}$ are logically equivalent. \medskip

Proceeding, suppose now instead that $L$ and $L'$ are of the latter type. Again form $C_1$ via resolving $C$ with the (T.5.a) clause $\neg[\texttt{y} \succeq \texttt{x}] \vee [\omega(\texttt{y}) \succeq \omega(\texttt{x})]$, and then $C_2$ by resolving $C_1$ and the (T.5.b) clause $[\texttt{y} \succeq \texttt{x}] \vee \neg [\omega(\texttt{y}) \succeq \omega(\texttt{x})]$. Finally, form $C_3$ by resolving $C_2$ with the type (T.2.b) $[\omega(\texttt{y}) \succeq \omega(\texttt{x})] \vee [\omega(\texttt{x}) \succ \omega(\texttt{y})]$.  Then $D$ is the resolvent of $C_3$ and $C'$, and hence by an analogous argument, $\Theta$ and $\Theta \cup \big\{D\big\}$ are again logically equivalent.\medskip

Since $C,C'$ and $D$ were arbitrary, we have shown that $\Theta$ and $\Theta \cup \mathcal{F}^1$ are logically equivalent. However, nothing in the preceding argument relied on $C,C'$ belonging to $\mathcal{F}^0$, rather than any other $\mathcal{F}^n$. Hence by an identical argument, $\Theta \cup \mathcal{F}^n$ and $\Theta \cup \mathcal{F}^{n+1}$ are logically equivalent, implying so too are $\Theta$ and $\Theta \cup \mathcal{F}^*$. Since any model evaluates the empty clause $\varnothing$ to $\bot$, the fact $\langle \varnothing, \varnothing\rangle \in \mathcal{F}^*$, implies $\mathcal{F}^*$ is unsatisfiable by soundness of resolution, and hence so too is $\Theta$. Thus no $\mathcal{M}$-invariant weak order extension of $\langle \succsim^R, \succ^R \rangle $ can exist.
\end{proof}

\begin{lemma}\label{structurelemma}
    Let $C$ be a disjunction of negative literals such that $C \in \Phi$ or $C$ is the resolvent of two elements of $\Phi$, one of which contains no positive literals. Then $C \in \mathcal{F}^1$.
\end{lemma}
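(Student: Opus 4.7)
The plan is a direct case analysis on how $C$ can arise. First, suppose $C\in\Phi$. Since $C$ is a disjunction of negative literals, inspection of the axiom forms (T.1)--(T.5) shows that $C$ must be a (T.2.a) clause $\neg[\texttt{x}\succeq\texttt{y}]\vee\neg[\texttt{y}\succ\texttt{x}]$, whose associated order pair is $\langle\{(x,y),(y,x)\},\{(y,x)\}\rangle$. I would place this in $\mathcal{F}^0\subseteq\mathcal{F}^1$ by exhibiting the length-$2$ broken cycle $(x_1,y_1)=(x,y)$, $(x_2,y_2)=(y,x)$, with $\omega_1=\omega_2=\mathrm{id}$, whose two gaps $x\succsim^R x$ and $y\succsim^R y$ are supplied by reflexivity; taking $S=\{(y,x)\}$ returns the desired forbidden subrelation.

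Next, suppose $C$ is the resolvent of $C_1,C_2\in\Phi$, with $C_1$ containing no positive literals. Then $C_1$ is once again of type (T.2.a), say $\neg[\texttt{x}\succeq\texttt{y}]\vee\neg[\texttt{y}\succ\texttt{x}]$, and for the resolvent $C$ to remain purely negative, $C_2$ must contain exactly one positive literal---the one resolved against. Enumerating (T.1)--(T.5), $C_2$ must fall into one of four cases: type (T.3); type (T.4) in either the weak form $[\texttt{x}\succeq\texttt{y}]$ (when $x\succsim^R y$) or the strict form $[\texttt{y}\succ\texttt{x}]$ (when $y\succ^R x$); type (T.5.a); or type (T.5.b). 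In each subcase I would compute $C$ explicitly, read off the associated order pair $\langle W,S\rangle$, and exhibit a broken cycle realizing it in $\mathcal{F}^0$: a length-$3$ reflexive broken cycle handles the (T.3) subcase (giving a $3$-cycle restriction); a length-$1$ broken cycle with gap supplied directly by the observed revealed preference handles (T.4); and length-$2$ reflexive broken cycles, with the witnessing transformation $\omega$ placed in one position and the identity in the other, handle (T.5.a) and (T.5.b).

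The main obstacle is careful bookkeeping---translating each resolvent back into order-pair notation by distinguishing weak from strict literals, and verifying that the strictness profile of the constructed broken cycle exactly matches the chosen $S$ (in particular that non-strict cycles force $S\neq\varnothing$ while strict ones allow $S=\varnothing$). The passage from $\mathcal{F}^0$ to $\mathcal{F}^1$ is essentially cosmetic in most subcases, but it is what allows the (T.4) subcases to bypass the $\succsim^R$-unrelatedness convention built into the definition of broken cycle: rather than use an $N=1$ broken cycle on a $\succsim^R$-related pair directly, one may instead obtain the required order pair by collapsing the reflexive (T.2.a) pair against an auxiliary pair in $\mathcal{F}^0$ whose gap is the datum itself, landing in $\mathcal{F}^1$ after one collapse step.
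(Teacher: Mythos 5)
Your case decomposition tracks the paper's exactly: $C\in\Phi$ forces (T.2.a), and a purely-negative resolvent forces one parent to be (T.2.a) and the other to carry exactly one positive literal, falling under (T.3), (T.4), (T.5.a), or (T.5.b). The base and (T.3) cases match the paper's reflexive-broken-cycle constructions. Your handling of (T.5.a)/(T.5.b) is in fact more direct than the paper's: you propose a single length-$2$ broken cycle with $\omega$ in one transformation slot and $\mathrm{id}$ in the other, which realizes $C$ as a forbidden subrelation already in $\mathcal{F}^0$; the paper instead obtains $C$ as a collapse of the forbidden subrelations of the two reflexive broken cycles $x\succsim^R x,\ y\succsim^R y$ and $\omega(x)\succsim^R\omega(x),\ \omega(y)\succsim^R\omega(y)$, landing it in $\mathcal{F}^1$. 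Your length-$2$ cycle checks out and does the same work in one fewer step.

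The genuine gap is your proposed repair of the (T.4) case. You are right that when $x\succsim^R y$, the paper's $N=1$ broken cycle has gap pair $(x,y)$, which is $\succsim^R$-related and thus conflicts with the letter of the broken-cycle definition. But your fix---collapsing the reflexive (T.2.a) order pair against an auxiliary pair ``whose gap is the datum''---inherits the same conflict rather than escaping it. The reflexive (T.2.a) order pair comes from the broken cycle $x\succsim^R x,\ y\succsim^R y$ whose gap pairs are $(x,y)$ and $(y,x)$, both $\succsim^R$-related under the (T.4) hypothesis; and any auxiliary order pair with $(y,x)\in W$ must, by condition (i) of the forbidden-subrelation definition, arise from a broken cycle with gap pair $(x,y)$. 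More structurally, the collapse only deletes pairs from $W$ (since $\tilde{W}\subseteq W_i\cup W_j$), never creates them; so if $x$ and $y$ are $\succsim^R$-related, no element of $\mathcal{F}^*$ can contain $(y,x)$ or $(x,y)$ in its $W$, and no finite chain of collapses will manufacture the required singleton restriction. The friction you identified is real---and it in fact afflicts the base case and (T.3) as well, which you did not flag, since (T.2.a) and (T.3) clauses exist in $\Phi$ for all $x,y,z$ regardless of $\succsim^R$-relatedness---but the paper's proof simply disregards the unrelatedness clause in these cases, and the honest fix is to drop it from the definition of broken cycle rather than to route around it with a one-step collapse.
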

\begin{proof}
    Suppose first that $C \in \Phi$. Since $C$ is a disjunction of negative literals, it must be of the form (T.2.a), i.e.:
    \[
        C = \neg[\texttt{x} \succeq \texttt{y}] \vee \neg[ \texttt{y} \succ \texttt{x}].
    \]
    Then $C$ corresponds to (the clausal representation of) the forbidden subrelation associated with:
    \[
        \begin{aligned}
            x \succsim^R x \\
            y \succsim^R y,
        \end{aligned}
    \]
    and hence $C \in \mathcal{F}^0 \subseteq \mathcal{F}^1$.  Suppose instead then that $C$ is the resolvent of $C', D \in \Phi$, where $D$ is a disjunction of negative literals and hence $D = \neg[\texttt{x} \succeq \texttt{y}] \vee \neg[ \texttt{y} \succ \texttt{x}]$.  Since $C$ also contains no positive literals, it must be the case that $C' \in \Phi$ contains exactly one positive literal. Therefore it must be of the form (T.3), (T.4) or (T.5).\medskip

    \noindent {\bf Case}: $C' = \neg [\texttt{x} \succeq \texttt{z}] \vee \neg[ \texttt{z} \succeq \texttt{y}] \vee [\texttt{x} \succeq \texttt{y}]$. Then:
    \[
        \begin{aligned}
            x \succsim_c x\\
            z \succsim_c z \\
            y \succsim_c y
        \end{aligned}
    \]
    defines a broken cycle for which which $C$ is a forbidden subrelation, and hence again $C \in \mathcal{F}^0 \subseteq \mathcal{F}^1$.\medskip

    \noindent {\bf Case}: $C' = [\texttt{x} \succeq \texttt{y}]$ or $C' = [\texttt{y} \succ \texttt{x}]$.  If the former is true, then $C = \neg[\texttt{y} \succ \texttt{x}]$. But since $C'$ must be a type (T.4) clause, this implies we must have $x \succsim_c y$ in the data, and hence:
    \[
        \begin{aligned}
            x \succsim_c y
        \end{aligned}
    \]
    is a broken cycle with forbidden subrelation $C = \neg[\texttt{y} \succ \texttt{x}]$ as desired. If instead the latter is true, by an analogous argument $x \succ_c y$ and:
    \[
        \begin{aligned}
            x \succ_c y
        \end{aligned}
    \]
    is a broken cycle which admits forbidden subrelation $C = \neg [\texttt{y} \succeq \texttt{x}]$. In either case, we again find $C \in \mathcal{F}^0 \subseteq \mathcal{F}^1$. \medskip

    {\bf Case}: $C' = [\texttt{x} \succeq \texttt{y}] \vee \neg [\omega(\texttt{x}) \succeq \omega(\texttt{y})] $. Then:
    \[
        C =  \neg [\texttt{y} \succ \texttt{x}] \vee \neg [\omega(\texttt{x}) \succeq \omega(\texttt{y})].
    \]
    However, the broken cycles:
    \[
        \begin{aligned}
            x \succsim^R x & \quad \quad \quad & \omega(x) \succsim^R \omega(x) \\
            y \succsim^R y & \quad \quad \quad & \omega(y) \succsim^R \omega(y)
        \end{aligned}
    \]
    yield forbidden subrelations:
    \[
       \neg [\texttt{x} \succeq \texttt{y}] \vee \neg [\texttt{y} \succ \texttt{x}]
    \]
    and
    \[
         \neg [\omega(\texttt{x}) \succeq \omega(\texttt{y})] \vee \neg [\omega(\texttt{y}) \succ \omega(\texttt{x})].
    \]
    Thus letting $L = [\texttt{x} \succeq \texttt{y}]$ and $L' = [\omega(\texttt{y}) \succ \omega(\texttt{x})]$, $C$ is simply the collapse of these two forbidden subrelations and hence belongs to $\mathcal{F}^1$. An analogous argument obtains if instead $C' = \neg[ x' \succeq y'] \vee [\texttt{x} \succeq \texttt{y}]$ where for some $\omega \in \mathcal{M}$ we have $\omega(x') = x$ and $\omega(y') = y$. 
\end{proof}

\begin{lemma}\label{resolutioncomplete}
    Suppose there does not exist an $\mathcal{M}$-invariant weak order extension of $\langle \succsim^R, \succ^R\rangle$.  Then $\varnothing \in \mathcal{F}^*$.
\end{lemma}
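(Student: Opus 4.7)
The plan is to turn the hypothesis into a concrete finite proof object by way of Robinson's theorem, and then ``decode'' that proof into a sequence of collapse operations witnessing $\langle \varnothing, \varnothing\rangle \in \mathcal{F}^*$. Concretely: by the construction of the clause set $\Phi$ in the preceding subsection, the nonexistence of an $\mathcal{M}$-invariant weak-order extension of $\langle \succsim^R, \succ^R\rangle$ is equivalent to $\Phi$ being unsatisfiable. By the compactness theorem of propositional logic, some finite subset $\Phi_0 \subseteq \Phi$ is already unsatisfiable. Applying \autoref{completenessnegres} to $\Phi_0$ produces a finite derivation $C_1, C_2, \ldots, C_N$ of the empty clause by negative resolution, i.e.\ one in which at every resolution step at least one parent contains no positive literals.

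The core of the proof is the following claim, established by strong induction on $i \in \{1, \ldots, N\}$: whenever $C_i$ is a disjunction of purely negative literals, the order pair associated to $C_i$ via the clausal representation \eqref{clausalrep} belongs to $\mathcal{F}^*$. Because $C_N = \varnothing$ corresponds to $\langle \varnothing, \varnothing\rangle$, applying the claim at $i = N$ delivers the lemma. The base case, in which $C_i \in \Phi_0$, is immediate from \autoref{structurelemma}, which places every all-negative clause of $\Phi$ in $\mathcal{F}^0 \subseteq \mathcal{F}^*$.

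For the inductive step, suppose $C_i$ is all-negative and equals the resolvent of earlier clauses $C_j$ and $C_k$; by the definition of negative resolution, one of the parents, say $C_j$, contains no positive literals, and hence by inductive hypothesis the pair associated to $C_j$ already lies in $\mathcal{F}^*$. If $C_k \in \Phi_0$, then \autoref{structurelemma} together with a single collapse (exactly as carried out in the four cases of that lemma's proof) exhibits $C_i$ as a collapse of $C_j$ with a forbidden subrelation coming from $\Phi_0$, so $C_i \in \mathcal{F}^*$.

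The main obstacle, and the step that will require the most care, is when $C_k$ is itself a previously derived (non-leaf) clause that may still carry positive literals. The plan here is to trace back the derivation of $C_k$ one resolution step at a time, peeling off in turn each $\Phi_0$-clause that was used to eliminate a positive literal: because $C_k$ was produced by negative resolution, each such step has an all-negative sibling, which by the inductive hypothesis is already in $\mathcal{F}^*$. By the same case analysis on the shape of $\Phi$-clauses of types (T.1)--(T.5) that powers \autoref{structurelemma}, each such step corresponds to a collapse operation on pairs already in $\mathcal{F}^*$. Composing these finitely many collapses with the final collapse involving $C_j$ shows that $C_i$ itself is obtained by a finite sequence of collapses of members of $\mathcal{F}^*$, so $C_i \in \mathcal{F}^*$. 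This completes the induction and hence the proof.
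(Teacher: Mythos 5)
Your proposal follows the same broad architecture as the paper's proof: compactness to pass to a finite unsatisfiable $\Phi_0$, the negative-resolution completeness theorem to obtain a derivation of $\varnothing$, and then an induction that decodes the resolution proof into collapse operations. However, the step you flag as ``requiring the most care'' is where your sketch is genuinely underspecified, and the paper needs a specific structural observation there that you do not state. Your phrase ``each such step corresponds to a collapse operation'' suggests a one-to-one correspondence between resolution steps and collapses, but that is not correct: a collapse pairs two \emph{all-negative} clauses, while the intermediate clauses you would encounter when ``tracing back'' $C_k$ carry positive literals and hence have no clausal representation as order pairs at all. What actually happens is that a block of one or two consecutive resolution steps passing through a positive-literal intermediate bundles into a single collapse.

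The missing ingredient that makes this tractable is the bound that \emph{no resolvent in a negative-resolution derivation of $\Phi$ can carry more than one positive literal}. Each $\Phi$-clause has at most two positive literals (the completeness and coherency clauses (T.1), (T.2.b)), and in negative resolution each step removes exactly one positive literal from the non-negative parent, so the ``tracing back'' you describe terminates after at most two steps and always bottoms out at a leaf. Concretely, the hardest case is an all-negative $D_k$ that is the resolvent of an all-negative $D_j$ and a one-positive-literal clause $C_i$, where $C_i$ is itself the resolvent of an all-negative $D_{j'}$ and a \emph{leaf} $C_{i'}\in\Phi_0$ of type (T.1) or (T.2.b); then $D_k$ is exhibited as a single collapse of $D_{j'}$ and $D_j$, both of which lie in $\mathcal{F}^*$ by induction. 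The paper organizes this via the notions of negative parent and negative grandparent precisely so that this two-level bound is visible, and it separately checks that each of the three ``shapes'' of one-positive-literal leaf ((T.3), (T.4), (T.5)) decodes to a collapse with a suitable $\mathcal{F}^0$-clause---a case analysis that overlaps with, but properly extends, the one in \autoref{structurelemma}, since there the other parent is required to be a leaf, not a derived $D_j$. Your plan is salvageable, but as written it leaves open both (a) why the trace-back terminates, and (b) that the decoding yields genuine collapses rather than some other operation; both need to be filled in.
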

\begin{proof}
    By construction, there is a one-to-one correspondence between $\mathcal{M}$-invariant preference relations extending $\langle \succsim^R, \succ^R\rangle$ and models for $\Phi$.  Thus if no such extension exists, $\Phi$ is unsatisfiable. By Propositional Compactness (see \citealt{schoning2008logic} Chapter I.4), there exists a finite unsatisfiable subset $\Phi^* \subseteq \Phi$. \medskip
    
    By \autoref{completenessnegres}, there exists a derivation of the empty set via negative resolution, i.e. there exists a sequence of clauses $C_1,\ldots, C_N$ such that (i) $C_N = \varnothing$, (ii) for all $1\le n \le N-1$ the clause $C_n$ either belongs to $\Phi^*$ or is the resolvent of two clauses $C_i$ and $C_j$, with $i,j < n$, one of which contains no positive literals. \medskip

    Let $\{D_1,\ldots, D_K\}$ denote those clauses in $\{C_1,\ldots, C_N\}$ which contain no positive literals. For each $D_k$, if $D_k$ is the resolvent of some $C_i$ and $D_j$, define $D_j$ to be its {\bf negative parent} (if $D_k$ is not a resolvent, then we say $D_k$ has no negative parent). Furthermore, if $C_i$ itself is the resolvent of some $C_{i'}$ and $D_{j'}$, then we say $D_{j'}$ is the {\bf negative grandparent} of $D_k$ (similarly, if $C_i \in \Phi^*$, i.e.\ $C_i$ is not a resolvent, then we say $D_k$ has no negative grandparent). Define $\mathcal{NP}(D_k)$, the {\bf negative predecessors} of $D_k$, as the set consisting of $D_k$'s negative parent and grandparent (if these exist). \medskip

    Let $\mathcal{D}^0 \subseteq \{D_1,\ldots, D_K\}$ denote the subset of all $D_k$ which belong to $\mathcal{F}^0$.\footnote{Note $\mathcal{D}^0$ is non-empty as it contains at least $D_1 \in \Phi^*$.} For each $n \ge 1$, define inductively:
    \[
        \mathcal{D}^n = \big\{D_k : \mathcal{NP}(D_k) \subseteq \mathcal{D}^{n-1} \big\} \; \cup \; \mathcal{D}^{n-1}.
    \]
    In other words, $\mathcal{D}^n$ consists of those positive-literal-free clauses $D_k$ all of whose negative predecessors (if these exist) belong to $\mathcal{D}^{n-1}$ or lower. Viewing $\{C_1,\ldots, C_N\}$ as a binary tree (\citealt{schoning2008logic}, Chapter I.5), by \autoref{structurelemma} the sets $\{\mathcal{D}^n\}_{n=0}^\infty$ cover $\{D_1,\ldots, D_K\}$.\footnote{Viewing the resolution proof as a finite binary tree, \autoref{structurelemma} shows that (i) every leaf that belongs to $\{D_1,\ldots, D_K\}$ belongs to $\mathcal{D}^0$, and (ii) every element of $\{D_1,\ldots, D_K\}$ that two leaves for parents belongs to $\mathcal{D}^1$. The claim then follows by inducting on how many generations of ancestors an element of $\{D_1,\ldots, D_K\}$ has in the tree.} We now wish to show that for all $n \ge 1$, $\mathcal{D}^n \subseteq \mathcal{F}^n \subseteq \mathcal{C}^*$.  By definition, $\mathcal{D}^0 \subseteq \mathcal{F}^0$. Thus suppose now that for all $n \le M$, we have $\mathcal{D}^n \subseteq \mathcal{F}^n$, and consider $n = M+1$.  Let $D_k\in \mathcal{D}^{M+1}$. We consider three cases.\medskip

    {\bf Case 1}: $D_k$ has negative parent $D_j$ and negative grandparent $D_{j'}$, both of which belong to $\mathcal{D}^M$ and hence $\mathcal{F}^M$ by the inductive hypothesis.  Then $D_k$ is the resolvent of $D_j$ and some $C_i$, and $C_i$ the resolvent of $D_{j'}$ and some $C_{i'}$. Since $D_k$ and $D_j$ contain no positive literals, this means $C_i$ must contain exactly one positive literal. In turn, since $D_{j'}$ contains no positive literals, this implies $C_{i'}$ must contain exactly two positive literals.  Since $\Phi^*$ contains no clauses with more than two positive literals, and since every resolvent in $\{C_1,\ldots, C_N\}$ has a parent containing no positive literals, no resolvent in $\{C_1,\ldots, C_N\}$ can have more than 1 positive literal. This means that $C_{i'} \in \Phi^*$ and hence is either of the form $C_{i'} = [\texttt{x} \succeq \texttt{y} ] \vee [\texttt{y} \succeq \texttt{x}]$ or $C_{i'} = [\texttt{x} \succeq \texttt{y}] \vee [\texttt{y} \succ \texttt{x}]$. Suppose first that $C_{i'}$ is of the former form. Then $C_i$ consists of $D_{j'}$ but with one literal reversed (i.e. the swapping the positions of the two alternatives featuring in it) and made positive. Since this is $C_i$'s only positive literal, it must be the cancelling literal when it is resolved with $D_j$, thus $D_k$ is precisely the collapse of $D_{j'}$ and $D_{j}$, where the collapse comes from cancelling a pair of reversed weak relations.  If, instead, $C_{i'}$ is of the latter form, then once again $C_i$ consists of $D_{j'}$ but now the one literal is reversed, made positive, and made strict if it was weak, or vice-versa.  This is then cancelled by resolving with $D_j$ and hence $D_k$ consists of the collapse of $D_{j'}$ and $D_j$ where the collapse occurs between weak and strict opposing negative literals. In either case, we find that $D_k$ is the collapse of two elements of $\mathcal{F}^M$ and hence belongs to $\mathcal{F}^{M+1}$ as desired.\medskip

    {\bf Case 2}: $D_k$ has a negative parent $D_j$ but no negative grandparent, i.e.\ $C_i \in \Phi^*$.  Since $D_j$ and $D_k$ contain no positive literals, it must be that $C_i$ contains exactly one positive literal.  Thus $C_i$ is either of the form:
    \begin{itemize}
        \item[(i)] $C_i = \neg [\texttt{x} \succeq \texttt{z}] \vee \neg [\texttt{z} \succeq \texttt{y}] \vee [\texttt{x} \succeq \texttt{y}]$
        \item[(ii)] $C_i = [\texttt{x} \succeq \texttt{y}]$ or $C_i = [\texttt{y} \succ \texttt{x}]$
        \item[(iii)] $C_i = \neg [\texttt{x} \succeq \texttt{y}] \vee [\omega(\texttt{x}) \succeq \omega(\texttt{y})]$ or $C_i = [\texttt{x} \succeq \texttt{y}] \vee \neg [\omega(\texttt{x}) \succeq \omega(\texttt{y})]$ for some $\omega \in \mathcal{M}$.
    \end{itemize}
    Suppose first $C_i$ is of form (i). Then the cancelling literal must be $[\texttt{x} \succeq \texttt{y}]$. However, since:
    \[
        \begin{aligned}
            x \succsim_c x\\
            z \succsim_c z\\
            y \succsim_c y
        \end{aligned}
    \]
    is a broken cycle, we know $\neg[\texttt{x} \succsim \texttt{z}] \vee [\texttt{z} \succsim \texttt{y}] \vee \neg [\texttt{y} \succ \texttt{x}]$ belongs to $\mathcal{F}^0$. Therefore $D_k$ can be formed from collapsing $\neg[\texttt{x} \succeq \texttt{z}] \vee [\texttt{z} \succeq \texttt{y}] \vee \neg [\texttt{y} \succ \texttt{x}] \in \mathcal{F}^0$ with $D_j$. Since $D_j \in \mathcal{F}^M$, this means $D_k \in \mathcal{F}^{M+1}$ as desired.\medskip

    Suppose now that $C_i$ is of type (ii). In the first case, 
    \[
        D_k = D_j \setminus \{\neg [\texttt{x} \succeq \texttt{y}]\}.
    \]
    Note however that if $[\texttt{x} \succeq \texttt{y}] \in \Phi^*$, then $x \succsim_c y$, and thus:
    \[
        x \succsim_c y
    \]
    is a forcing collection for $\neg[\texttt{y} \succ \texttt{x}]$ and hence this clause belongs to $\mathcal{F}^0$. Thus $D_k$ may be obtained as the collapse of $\neg[\texttt{y} \succ \texttt{x}]$ and $D_j$ and hence belongs to $\texttt{D}^{M+1}$. On the other hand, if $C_i$ equals $[\texttt{y} \succ \texttt{x}]$ then $y \succ_c x$ and hence:
    \[
        y \succ_c x
    \]
    is a strict broken cycle for $\neg[\texttt{x} \succeq \texttt{y}]$ and since:
    \[
        D_k = D_j \setminus \{\neg [\texttt{y} \succ \texttt{x}]\},
    \]
    $D_k$ is just the collapse of $D_j$ and $\neg[\texttt{x} \succeq \texttt{y}]$, and hence once again belongs to $\mathcal{D}^{M+1}$.

    Finally, suppose that $C_i$ is of the former type (iii). Then the cancelling literal must be $[\omega(\texttt{x}) \succeq \omega(\texttt{y})]$. Thus $D_k$ is equal to $D_j$ but with the literal $\neg [\omega(\texttt{x}) \succeq \omega(\texttt{y})] \in D_j$ becoming $\neg[\texttt{x} \succeq \texttt{y}] \in D_k$. Now,
    \[
        \begin{aligned}
            x \succsim_c x\\
            y \succsim_c y
        \end{aligned}
    \]
    is a broken cycle hence $\neg [\texttt{x} \succeq \texttt{y}] \vee \neg [\texttt{y} \succ \texttt{x}]$ belongs to $\mathcal{F}^0$. Then $D_k$ arises as the collapse of $D_j\in \mathcal{D}^M$ and $\neg [\texttt{x} \succeq \texttt{y}] \vee \neg [\texttt{y} \succ \texttt{x}] \in \mathcal{F}^0 \subseteq \mathcal{D}^M$ along the pair $\neg [\texttt{y} \succ \texttt{x}]$ and $\neg [\omega(\texttt{x}) \succeq \omega(\texttt{y})]$, and hence belongs to $\mathcal{D}^{M+1}$ as desired. If instead $C_i$ is of the latter type (iii), an analogous argument suffices. \medskip

    {\bf Case 3}: $D_k$ has no negative parent. In this case, $D_k$ cannot be a resolvent at all, and hence belongs to $\Phi^*$.  The only clauses in $\Phi^*$ which contain no positive literals are of the form $\neg[\texttt{x} \succeq \texttt{y}] \vee \neg [\texttt{y} \succ \texttt{x}]$. If $D_k$ is of this form, then it belongs to $\mathcal{F}^0$ as    
    \[
        \begin{aligned}
            x \succsim_c x\\
            y \succsim_c y
        \end{aligned}
    \]
    is a broken cycle for it, and hence it belongs to $\mathcal{F}^{M+1}$ as well.\medskip

Now, as $D_k$ cannot have a negative grandparent without a negative parent (as our proof of inconsistency is by \emph{negative} resolution), these cases are exhaustive, and we find that for all $1 \le k \le K$, the clause $D_k \in \mathcal{F}^*$.  Since $D_K = \varnothing$, this implies that $\varnothing \in \mathcal{F}^*$ as desired.
\end{proof}

\noindent The proof of \autoref{generalchar} follows from these lemmas.   

\section{Proof of \autoref{invariantdm}}

\begin{proof}
    Suppose first that $\big\{[\texttt{y} \succ \texttt{x} ]\big\} \in \mathcal{F}^*$.  By an identical argument to that in the proof of \autoref{generalchar}, $\Phi \cup \big \{ [\texttt{y} \succ \texttt{x} ]\big\}$ is unsatisfiable. Thus no model $\mu$ for $\Phi$ evaluates $\mu \big([\texttt{y} \succ \texttt{x} ]\big) = \top $. Since the set of models for $\Phi$ are in 1-1 correspondence with the set of $\mathcal{M}$-invariant rationalizing preferences of $\langle \succsim^R, \succ^R\rangle$ (which is non-empty by hypothesis), we conclude every such rationalizing preference must weakly rank $x$ above $y$. An identical argument holds for the case of $\big\{[\texttt{y} \succeq \texttt{x}]\big\} \in \mathcal{F}^*$ case.\medskip

    Conversely, suppose every $\mathcal{M}$-invariant rationalizing preference $\succeq^*$ ranks $x \succeq^* y$. Then no model for $\Phi$ evaluates $[\texttt{y} \succ \texttt{x}]$ to $\top$, and hence $\Phi \cup \big \{  [\texttt{y} \succ \texttt{x} ]\big\}$ is unsatisfiable.  Define $\Phi'$ as follows. First, remove from $\Phi$ any clause containing the literal $ [\texttt{y} \succ \texttt{x}]$; then for every remaining clause that contains the negative literal $ \neg [\texttt{y} \succ \texttt{x}]$, delete this literal from it.  By construction, any model $\mu'$ for $\Phi'$ uniquely extends to a model $\mu$ for $\Phi$ which evaluates $\mu\big([\texttt{y} \succ \texttt{x}]\big) = \top$. Since no such models $\mu$ exist, $\Phi'$ must be unsatisfiable. By Propositional Compactness (see \cite{schoning2008logic} Chapter I.4), there exists a finite subset of $\Phi'' \subseteq \Phi'$ that is unsatisfiable; by \autoref{completenessnegres}, there exists a derivation $\{C_1,\ldots, C_N\}$ of $\varnothing$ from $\Phi''$ via negative resolution.  Let $\{D_1,\ldots, D_K\} \subset \{C_1,\ldots, C_N\}$ denote the elements of $\{C_1,\ldots, C_N\}$ belonging to $\Phi''$.  Note that each $D_k$ either (i) belongs to $\Phi$ as well, or (ii) $D_k \cup \{\neg[\texttt{y} \succ \texttt{x}]\}$ belongs to $\Phi$. Moreover, since $\Phi$ is satisfiable by hypothesis, at least one $D_k$ must be of the latter type. Define:
    \[
        \bar{D}_k = \begin{cases} C_i & \textrm{ if } D_k\in \Phi\\ D_k \cup \{\neg [\texttt{y} \succ \texttt{x}]\} & \textrm{ else.}  \end{cases}
    \]
    Then resolving the $\{\bar{D}_1,\ldots, \bar{D}_K\}$ in the same order as in the derivation $\{C_1,\ldots C_N\}$ generates a partial derivation $\{\bar{C}_1, \ldots, \bar{C}_N\}$ of $\neg [\texttt{y} \succ \texttt{x}]$ from $\Phi$ via negative resolution, and hence by an identical argument to \autoref{resolutioncomplete} $[\texttt{y} \succ \texttt{x}] \in \mathcal{F}^*$. An identical argument again works for the case in which every extension ranks $x \succ^* y$.
\end{proof}

\end{appendix}

\pagebreak
\bibliographystyle{ecta}
\bibliography{notes}

\begin{thebibliography}{62}
\newcommand{\enquote}[1]{``#1''}
\expandafter\ifx\csname natexlab\endcsname\relax\def\natexlab#1{#1}\fi

\bibitem[\protect\citeauthoryear{Afriat}{Afriat}{1967}]{afriat1967construction}
\textsc{Afriat, S.~N.} (1967): \enquote{The construction of utility functions from expenditure data,} \emph{International economic review}, 8, 67--77.

\bibitem[\protect\citeauthoryear{Alcantud}{Alcantud}{2009}]{alcantud2009}
\textsc{Alcantud, J.~C.} (2009): \enquote{Conditional ordering extensions,} \emph{Economic Theory}, 39, 495--503.

\bibitem[\protect\citeauthoryear{Aumann}{Aumann}{1962}]{aumann1962utility}
\textsc{Aumann, R.~J.} (1962): \enquote{Utility theory without the completeness axiom,} \emph{Econometrica}, 30, 445--462.

\bibitem[\protect\citeauthoryear{Aumann}{Aumann}{1964}]{aumann1964utility}
---\hspace{-.1pt}---\hspace{-.1pt}--- (1964): \enquote{Utility theory without the completeness axiom: a correction,} \emph{Econometrica}, 32, 210--212.

\bibitem[\protect\citeauthoryear{Bossert}{Bossert}{1999}]{bossert1999}
\textsc{Bossert, W.} (1999): \enquote{Intersection quasi-orderings: An alternative proof,} \emph{Order}, 16, 221--225.

\bibitem[\protect\citeauthoryear{Brown and Calsamiglia}{Brown and Calsamiglia}{2007}]{brown2007nonparametric}
\textsc{Brown, D.~J. and C.~Calsamiglia} (2007): \enquote{The nonparametric approach to applied welfare analysis,} \emph{Economic Theory}, 31, 183--188.

\bibitem[\protect\citeauthoryear{Caradonna}{Caradonna}{2023}]{caradonna2023preference}
\textsc{Caradonna, P.~P.} (2023): \enquote{Preference Regression,} .

\bibitem[\protect\citeauthoryear{Cerreia-Vioglio, Maccheroni, Marinacci, and Montrucchio}{Cerreia-Vioglio et~al.}{2011}]{cerreia2011uncertainty}
\textsc{Cerreia-Vioglio, S., F.~Maccheroni, M.~Marinacci, and L.~Montrucchio} (2011): \enquote{Uncertainty averse preferences,} \emph{Journal of Economic Theory}, 146, 1275--1330.

\bibitem[\protect\citeauthoryear{Chambers and Echenique}{Chambers and Echenique}{2016}]{chambers2016revealed}
\textsc{Chambers, C.~P. and F.~Echenique} (2016): \emph{Revealed Preference Theory}, vol.~56, Cambridge University Press.

\bibitem[\protect\citeauthoryear{Chambers, Echenique, and Saito}{Chambers et~al.}{2016}]{chambers2016testing}
\textsc{Chambers, C.~P., F.~Echenique, and K.~Saito} (2016): \enquote{Testing theories of financial decision making,} \emph{Proceedings of the National Academy of Sciences}, 113, 4003--4008.

\bibitem[\protect\citeauthoryear{Chambers, Echenique, and Shmaya}{Chambers et~al.}{2014}]{chambers2014axiomatic}
\textsc{Chambers, C.~P., F.~Echenique, and E.~Shmaya} (2014): \enquote{The axiomatic structure of empirical content,} \emph{American Economic Review}, 104, 2303--2319.

\bibitem[\protect\citeauthoryear{Chambers, Echenique, and Shmaya}{Chambers et~al.}{2017}]{chambers2017general}
---\hspace{-.1pt}---\hspace{-.1pt}--- (2017): \enquote{General revealed preference theory,} \emph{Theoretical Economics}, 12, 493--511.

\bibitem[\protect\citeauthoryear{Chambers and Miller}{Chambers and Miller}{2018}]{chambersmiller2018}
\textsc{Chambers, C.~P. and A.~D. Miller} (2018): \enquote{Benchmarking,} \emph{Theoretical Economics}, 13, 485--504.

\bibitem[\protect\citeauthoryear{Chateauneuf and Faro}{Chateauneuf and Faro}{2009}]{chateauneuf2009ambiguity}
\textsc{Chateauneuf, A. and J.~H. Faro} (2009): \enquote{Ambiguity through confidence functions,} \emph{Journal of Mathematical Economics}, 45, 535--558.

\bibitem[\protect\citeauthoryear{de~Finetti}{de~Finetti}{1951}]{definetti51}
\textsc{de~Finetti, B.} (1951): \enquote{La `logica del plausible' secondo la concezione di Polya,} \emph{Atti della XLII Riunione della Societa Italiana per il Progresso delle Scienze, 1949}, 1--10.

\bibitem[\protect\citeauthoryear{Debreu}{Debreu}{1959}]{debreu1959topological}
\textsc{Debreu, G.} (1959): \enquote{Topological methods in cardinal utility theory,} .

\bibitem[\protect\citeauthoryear{Demuynck}{Demuynck}{2009}]{demuynck2009}
\textsc{Demuynck, T.} (2009): \enquote{A general extension result with applications to convexity, homotheticity and monotonicity,} \emph{Mathematical Social Sciences}, 57, 96--109.

\bibitem[\protect\citeauthoryear{Donaldson and Weymark}{Donaldson and Weymark}{1998}]{donaldson1998}
\textsc{Donaldson, D. and J.~A. Weymark} (1998): \enquote{A quasiordering is the intersection of orderings,} \emph{Journal of Economic Theory}, 78, 382--387.

\bibitem[\protect\citeauthoryear{Dubra, Maccheroni, and Ok}{Dubra et~al.}{2004}]{dubra2004}
\textsc{Dubra, J., F.~Maccheroni, and E.~A. Ok} (2004): \enquote{Expected utility theory without the completeness axiom,} \emph{Journal of Economic Theory}, 115, 118--133.

\bibitem[\protect\citeauthoryear{Duggan}{Duggan}{1999}]{duggan}
\textsc{Duggan, J.} (1999): \enquote{A general extension theorem for binary relations,} \emph{Journal of Economic Theory}, 86, 1--16.

\bibitem[\protect\citeauthoryear{Dushnik and Miller}{Dushnik and Miller}{1941}]{dushnik1941}
\textsc{Dushnik, B. and E.~W. Miller} (1941): \enquote{Partially ordered sets,} \emph{American Journal of Mathematics}, 63, 600--610.

\bibitem[\protect\citeauthoryear{Echenique}{Echenique}{2020}]{echenique2020new}
\textsc{Echenique, F.} (2020): \enquote{New developments in revealed preference theory: decisions under risk, uncertainty, and intertemporal choice,} \emph{Annual Review of Economics}, 12, 299--316.

\bibitem[\protect\citeauthoryear{Ellsberg}{Ellsberg}{1961}]{ellsberg1961risk}
\textsc{Ellsberg, D.} (1961): \enquote{Risk, ambiguity, and the Savage axioms,} \emph{The quarterly journal of economics}, 75, 643--669.

\bibitem[\protect\citeauthoryear{Epstein}{Epstein}{1983}]{epstein1983stationary}
\textsc{Epstein, L.~G.} (1983): \enquote{Stationary cardinal utility and optimal growth under uncertainty,} \emph{Journal of Economic Theory}, 31, 133--152.

\bibitem[\protect\citeauthoryear{Epstein}{Epstein}{2000}]{epstein2000probabilities}
---\hspace{-.1pt}---\hspace{-.1pt}--- (2000): \enquote{Are probabilities used in markets?} \emph{Journal of Economic Theory}, 91, 86--90.

\bibitem[\protect\citeauthoryear{Fishburn and Rubinstein}{Fishburn and Rubinstein}{1982}]{fishburn1982}
\textsc{Fishburn, P.~C. and A.~Rubinstein} (1982): \enquote{Time preference,} \emph{International economic review}, 23, 677--694.

\bibitem[\protect\citeauthoryear{Freer and Martinelli}{Freer and Martinelli}{2022}]{freer2022}
\textsc{Freer, M. and C.~Martinelli} (2022): \enquote{An algebraic approach to revealed preference,} \emph{Economic Theory}, 1--26.

\bibitem[\protect\citeauthoryear{Fuchs}{Fuchs}{2011}]{fuchs2011partially}
\textsc{Fuchs, L.} (2011): \emph{Partially ordered algebraic systems}, vol.~28, Courier Corporation.

\bibitem[\protect\citeauthoryear{Galambos}{Galambos}{2019}]{galambos2019descriptive}
\textsc{Galambos, A.} (2019): \enquote{Descriptive complexity and revealed preference theory,} \emph{Mathematical Social Sciences}, 101, 54--64.

\bibitem[\protect\citeauthoryear{Gilboa and Schmeidler}{Gilboa and Schmeidler}{1989}]{gilboa1989maxmin}
\textsc{Gilboa, I. and D.~Schmeidler} (1989): \enquote{Maxmin expected utility with non-unique prior,} \emph{Journal of mathematical economics}, 18, 141--153.

\bibitem[\protect\citeauthoryear{Gonczarowski, Kominers, and Shorrer}{Gonczarowski et~al.}{2019}]{gonczarowski2019infinity}
\textsc{Gonczarowski, Y.~A., S.~D. Kominers, and R.~I. Shorrer} (2019): \enquote{To Infinity and Beyond: A General Framework for Scaling Economic Theories,} \emph{arXiv preprint arXiv:1906.10333}.

\bibitem[\protect\citeauthoryear{Gorno}{Gorno}{2017}]{gorno2017strict}
\textsc{Gorno, L.} (2017): \enquote{A strict expected multi-utility theorem,} \emph{Journal of Mathematical Economics}, 71, 92--95.

\bibitem[\protect\citeauthoryear{Herstein and Milnor}{Herstein and Milnor}{1953}]{herstein1953axiomatic}
\textsc{Herstein, I.~N. and J.~Milnor} (1953): \enquote{An axiomatic approach to measurable utility,} \emph{Econometrica, Journal of the Econometric Society}, 291--297.

\bibitem[\protect\citeauthoryear{Koopmans}{Koopmans}{1960}]{koopmans1960}
\textsc{Koopmans, T.~C.} (1960): \enquote{Stationary ordinal utility and impatience,} \emph{Econometrica}, 28, 287--309.

\bibitem[\protect\citeauthoryear{Kraft, Pratt, and Seidenberg}{Kraft et~al.}{1959}]{kraft1959intuitive}
\textsc{Kraft, C.~H., J.~W. Pratt, and A.~Seidenberg} (1959): \enquote{Intuitive probability on finite sets,} \emph{The Annals of Mathematical Statistics}, 30, 408--419.

\bibitem[\protect\citeauthoryear{Krantz, Luce, Suppes, and Tversky}{Krantz et~al.}{1971}]{krantz1971foundations}
\textsc{Krantz, D., D.~Luce, P.~Suppes, and A.~Tversky} (1971): \emph{Foundations of measurement, Vol. I: Additive and polynomial representations}.

\bibitem[\protect\citeauthoryear{Leontief}{Leontief}{1947}]{leontief1947introduction}
\textsc{Leontief, W.} (1947): \enquote{Introduction to a theory of the internal structure of functional relationships,} \emph{Econometrica, Journal of the Econometric Society}, 361--373.

\bibitem[\protect\citeauthoryear{Levin}{Levin}{1983}]{levin}
\textsc{Levin, V.~L.} (1983): \enquote{Measurable utility theorems for closed and lexicographic preorders,} \emph{Soviet Mathematics Doklady}, 27, 639--643.

\bibitem[\protect\citeauthoryear{Luce and Tukey}{Luce and Tukey}{1964}]{luce1964simultaneous}
\textsc{Luce, R.~D. and J.~W. Tukey} (1964): \enquote{Simultaneous conjoint measurement: A new type of fundamental measurement,} \emph{Journal of mathematical psychology}, 1, 1--27.

\bibitem[\protect\citeauthoryear{Maccheroni, Marinacci, and Rustichini}{Maccheroni et~al.}{2006}]{maccheroni2006ambiguity}
\textsc{Maccheroni, F., M.~Marinacci, and A.~Rustichini} (2006): \enquote{Ambiguity aversion, robustness, and the variational representation of preferences,} \emph{Econometrica}, 74, 1447--1498.

\bibitem[\protect\citeauthoryear{Mehra and Prescott}{Mehra and Prescott}{1985}]{mehra1985equity}
\textsc{Mehra, R. and E.~C. Prescott} (1985): \enquote{The equity premium: A puzzle,} \emph{Journal of monetary Economics}, 15, 145--161.

\bibitem[\protect\citeauthoryear{Mu, Pomatto, Strack, and Tamuz}{Mu et~al.}{2021}]{mu2021monotone}
\textsc{Mu, X., L.~Pomatto, P.~Strack, and O.~Tamuz} (2021): \enquote{Monotone additive statistics,} \emph{arXiv preprint arXiv:2102.00618}.

\bibitem[\protect\citeauthoryear{Nehring and Puppe}{Nehring and Puppe}{1999}]{nehring1999multi}
\textsc{Nehring, K. and C.~Puppe} (1999): \enquote{On the multi-preference approach to evaluating opportunities,} \emph{Social Choice and Welfare}, 16, 41--63.

\bibitem[\protect\citeauthoryear{Nishimura, Ok, and Quah}{Nishimura et~al.}{2017}]{nishimura2017comprehensive}
\textsc{Nishimura, H., E.~A. Ok, and J.~K.-H. Quah} (2017): \enquote{A comprehensive approach to revealed preference theory,} \emph{American Economic Review}, 107, 1239--1263.

\bibitem[\protect\citeauthoryear{Ok}{Ok}{2002}]{ok2002}
\textsc{Ok, E.~A.} (2002): \enquote{Utility representation of an incomplete preference relation,} \emph{Journal of Economic Theory}, 104, 429--449.

\bibitem[\protect\citeauthoryear{Ok and Riella}{Ok and Riella}{2014}]{ok2014topological}
\textsc{Ok, E.~A. and G.~Riella} (2014): \enquote{Topological closure of translation invariant preorders,} \emph{Mathematics of Operations Research}, 39, 737--745.

\bibitem[\protect\citeauthoryear{Ok and Riella}{Ok and Riella}{2021}]{ok2021fully}
---\hspace{-.1pt}---\hspace{-.1pt}--- (2021): \enquote{Fully preorderable groups,} \emph{Order}, 38, 127--142.

\bibitem[\protect\citeauthoryear{Peleg}{Peleg}{1970}]{peleg1970}
\textsc{Peleg, B.} (1970): \enquote{Utility functions for partially ordered topological spaces,} \emph{Econometrica}, 38, 93--96.

\bibitem[\protect\citeauthoryear{Pomatto, Strack, and Tamuz}{Pomatto et~al.}{2023}]{pomatto2023cost}
\textsc{Pomatto, L., P.~Strack, and O.~Tamuz} (2023): \enquote{The cost of information: The case of constant marginal costs,} \emph{American Economic Review}, 113, 1360--1393.

\bibitem[\protect\citeauthoryear{Richter}{Richter}{1966}]{richter1966revealed}
\textsc{Richter, M.~K.} (1966): \enquote{Revealed preference theory,} \emph{Econometrica}, 34, 635--645.

\bibitem[\protect\citeauthoryear{Robinson}{Robinson}{1965}]{robinson1965machine}
\textsc{Robinson, J.~A.} (1965): \enquote{A machine-oriented logic based on the resolution principle,} \emph{Journal of the ACM (JACM)}, 12, 23--41.

\bibitem[\protect\citeauthoryear{Safra and Segal}{Safra and Segal}{1998}]{safra1998constant}
\textsc{Safra, Z. and U.~Segal} (1998): \enquote{Constant risk aversion,} \emph{journal of economic theory}, 83, 19--42.

\bibitem[\protect\citeauthoryear{Savage}{Savage}{1954}]{savage1954foundations}
\textsc{Savage, L.~J.} (1954): \emph{The foundations of statistics}, Courier Corporation.

\bibitem[\protect\citeauthoryear{Schmeidler}{Schmeidler}{1989}]{schmeidler1989subjective}
\textsc{Schmeidler, D.} (1989): \enquote{Subjective probability and expected utility without additivity,} \emph{Econometrica: Journal of the Econometric Society}, 571--587.

\bibitem[\protect\citeauthoryear{Sch{\"o}ning}{Sch{\"o}ning}{2008}]{schoning2008logic}
\textsc{Sch{\"o}ning, U.} (2008): \emph{Logic for computer scientists}, Springer Science \& Business Media.

\bibitem[\protect\citeauthoryear{Scott}{Scott}{1964}]{scott1964measurement}
\textsc{Scott, D.} (1964): \enquote{Measurement structures and linear inequalities,} \emph{Journal of mathematical psychology}, 1, 233--247.

\bibitem[\protect\citeauthoryear{Trockel}{Trockel}{1989}]{trockel1989classification}
\textsc{Trockel, W.} (1989): \enquote{Classification of budget-invariant monotonic preferences,} \emph{Economics Letters}, 30, 7--10.

\bibitem[\protect\citeauthoryear{Varian}{Varian}{1983}]{varian1983non}
\textsc{Varian, H.~R.} (1983): \enquote{Non-parametric tests of consumer behaviour,} \emph{The review of economic studies}, 50, 99--110.

\bibitem[\protect\citeauthoryear{Von~Neumann and Morgenstern}{Von~Neumann and Morgenstern}{1947}]{von1947theory}
\textsc{Von~Neumann, J. and O.~Morgenstern} (1947): \enquote{Theory of games and economic behavior, 2nd rev,} .

\bibitem[\protect\citeauthoryear{Ward}{Ward}{1942}]{ward}
\textsc{Ward, M.} (1942): \enquote{The closure operators of a lattice,} \emph{Annals of Mathematics}, 191--196.

\bibitem[\protect\citeauthoryear{Weymark}{Weymark}{2000}]{weymark2000generalization}
\textsc{Weymark, J.~A.} (2000): \enquote{A generalization of Moulin's Pareto extension theorem,} \emph{Mathematical Social Sciences}, 39, 235--240.

\bibitem[\protect\citeauthoryear{Yildiz}{Yildiz}{2020}]{yildiz2020regularities}
\textsc{Yildiz, K.} (2020): \enquote{Choice regularities,} Working paper.

\end{thebibliography}
\end{document}